\tikzset{
    fo/.style={ 
        decoration={
            markings,
            mark=at position #1 with {\arrow[line width=0.9mm]{>}}
        },
        postaction={decorate}
    },
    fo/.default=0.58,
    ro/.style={ 
        decoration={
            markings,
            mark=at position #1 with {\arrow[line width=0.9mm]{<}}
        },
        postaction={decorate}
    },
    ro/.default=0.58,
}
\definecolor{tikzblue}{HTML}{1C71D8}
\definecolor{ForestGreen}{rgb}{0.0333,0.4451,0.0333}
\definecolor{DarkRed}{rgb}{0.65,0,0}
\definecolor{Red}{rgb}{1,0,0}
\theoremstyle{plain}
\newtheorem{theorem}{Theorem}[section]
\newtheorem{lemma}[theorem]{Lemma}
\newtheorem{corollary}[theorem]{Corollary}
\newtheorem{definition}[theorem]{Definition}
\newtheorem{problem}[theorem]{Problem}
\newcommand{\lca}{\text{lca}}
\newcommand{\drop}{\text{drop}}
\newcommand{\slack}{\text{slack}}
\newcommand{\vecF}{\Vec{F}}
\newcommand{\vecJ}{\Vec{J}}
\newcommand{\OPT}{\text{OPT}}
\newcommand*\mc[1]{\mathcal{#1}}
\DeclarePairedDelimiter\abs{\lvert}{\rvert}
\DeclarePairedDelimiter\norm{\lVert}{\rVert}%
\let\oldabs\abs
\def\abs{\@ifstar{\oldabs}{\oldabs*}}
\let\oldnorm\norm
\def\norm{\@ifstar{\oldnorm}{\oldnorm*}}
\definecolor{dnotecol}{rgb}{0.20, 0.50, 0.80}
\title{Approximation Algorithms for Steiner Connectivity Augmentation}
\author{Daniel Hathcock \and Michael Zlatin}
\begin{document}

\maketitle

\begin{abstract} 
We consider connectivity augmentation problems in the Steiner setting, where the goal is to augment the edge-connectivity between a specified subset of terminal nodes.

In the Steiner Augmentation of a Graph problem ($k$-SAG), we are given a $k$-edge-connected subgraph $H$ of a graph $G$. The goal is to augment $H$ by including links from $G$ of minimum cost so that the edge-connectivity between nodes of $H$ increases by 1. This is a generalization of the Weighted Connectivity Augmentation Problem, in which only links between pairs of nodes in $H$ are available for the augmentation. 

In the Steiner Connectivity Augmentation Problem ($k$-SCAP), we are given a Steiner $k$-edge-connected graph connecting terminals $R$, and we seek to add links of minimum cost to create a Steiner $(k+1)$-edge-connected graph for $R$. Note that $k$-SAG is a special case of $k$-SCAP.



The results of Ravi, Zhang and Zlatin for the Steiner Tree Augmentation problem yield a $(1.5+\varepsilon)$-approximation for $1$-SCAP and for $k$-SAG when $k$ is odd~\cite{RZZ2022}. In this work, we give a $(1 + \ln{2} +\varepsilon)$-approximation for the Steiner Ring Augmentation Problem (SRAP). This yields a polynomial time algorithm with approximation ratio $(1 + \ln{2} + \varepsilon)$ for $2$-SCAP. We obtain an improved approximation guarantee for SRAP when the ring consists of only terminals, yielding a $(1.5+\varepsilon)$-approximation for $k$-SAG for any $k$. 

 
 
\end{abstract}\thispagestyle{empty}

\newpage
\tableofcontents

\newpage
\setcounter{page}{1}

\section{Introduction}
\subsection{Background}
The edge-connectivity of a graph is a common measure of the robustness of the network to edge failures. If a network is $k$-edge-connected, then it can sustain failures of up to $k-1$ edges without being disconnected. Many problems in the area of network design seek to construct a cheap network which satisfies edge-connectivity requirements between certain pairs of nodes. 

This has given rise to many fundamental problems of interest in combinatorial optimization and approximation algorithms. One notable example is the Survivable Network Design Problem (SNDP). In SNDP, we are given a graph with non-negative costs on edges and a connectivity requirement $r_{ij}$ for each pair of vertices $i,j \in V$. The goal is to find a cheapest subgraph of $G$ so that there are $r_{ij}$ pairwise edge-disjoint paths between all pairs of vertices $i$ and $j$. 

Jain~\cite{J2001} gave a polynomial time algorithm for SNDP based on iterative rounding, which achieves an approximation factor of 2. Despite its generality, this algorithm achieves the best-known approximation ratio for a variety of network design problems of particular interest. For example, when $r_{ij} = k$ for all $i,j \in V$, we obtain the weighted minimum cost $k$-edge-connected spanning subgraph problem, for which no better-than-2 approximation algorithm is known, even when $k=2$. However, for some special cases, such as the minimum Steiner tree problem, specialized algorithms have been developed to improve upon this ratio. Hence, a major question in the field of network design is: for which network connectivity problems can we achieve an approximation factor below 2?

Recently, there has been major progress towards addressing this question. Consider the special case of SNDP known as the Weighted Connectivity Augmentation Problem (WCAP), in which we seek to increase the edge-connectivity of a given $k$-edge-connected graph by 1 by adding to the graph a collection of links of minimum cost.


A solution to WCAP must include a link crossing each of the min-cuts of the given graph $H$. Since the minimum cuts of any graph can be represented by a cactus~\cite{FF2009}, it is enough to consider the WCAP problem for $k = 2$ and when $H$ is a cactus.\footnote{A cactus is a connected graph in which every edge is included in exactly 1 cycle.} 
In fact, by the addition of zero cost links, the weighted problem further reduces to the case where $H$ is a cycle (the so-called Weighted Ring Augmentation Problem, WRAP)~\cite{galvez2021cycle}. If $k$ is odd, WCAP reduces to the case where $k = 1$ and $H$ is a tree, yielding the Weighted Tree Augmentation Problem (WTAP).

In a breakthrough result, Traub and Zenklusen~\cite{TZ2021} employed a relative greedy algorithm using ideas from~\cite{CN2013} to give the first approximation algorithm with approximation ratio better than 2 for WTAP. In particular, they achieved an approximation ratio of $(1+ \ln{2} + \varepsilon)$ for WTAP, thereby yielding the same result for WCAP when $k$ is odd. Shortly afterward, they improved upon this result, bringing the approximation factor down to $(1.5+\varepsilon)$ for WTAP~\cite{TZ2022} using a local search algorithm. Finally, they turned to the general case of WCAP, adapting the local search method that was used for WTAP to give a $(1.5 + \varepsilon)$-approximation algorithm for the Weighted Ring Augmentation Problem~\cite{TZ2022new}. As discussed above, this gives a $(1.5+ \varepsilon)$-approximation for general WCAP, adding it to the meager list of NP-hard special cases of SNDP which we can approximate to within a factor less than 2.
 

This brings us to our setting. Note that for the above augmentation problems, the parameter of interest is the \textit{global} edge-connectivity of the graph. However, in many applications, we are not interested in connectivity between all nodes in the graph, but rather only the connectivity between a specified subset of ``important" nodes called terminals. 



The reasons to be interested in Steiner connectivity are twofold. The first is that utilizing Steiner vertices outside a given network may allow us to augment it more cheaply. For example, in \Cref{fig:reasonforSAG}, we seek to augment the edge-connectivity of the nodes in a tree from 1 to 2. The ability to use a Steiner vertex outside the tree results in a cheaper augmentation.

    

\begin{figure}[h]
    \centering
    \includegraphics[scale = 1.1]{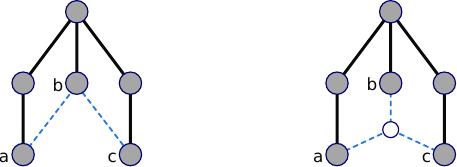}
    
        \caption{We seek to augment the edge-connectivity between the grey nodes (terminals) in the given tree from 1 to 2. Nodes $a$, $b$ and $c$ are arranged in an equilateral unit triangle. If only direct links are allowed, the cheapest augmentation has cost 2, but utilizing a Steiner node in the center of this triangle allows us to establish 2-edge-connectivity between the terminals at a cost of $\sqrt{3} \approx 1.73$.}
    \label{fig:reasonforSAG}
\end{figure}

The second reason is that we may want to augment the resiliency of a network which already incorporates Steiner vertices. However, we only desire to increase the edge-connectivity between its terminals. This scenario is likely in practice since, as we have seen, using Steiner nodes can establish connectivity while maintaining low costs. Naively, attempting to accomplish this with an algorithm for global connectivity augmentation may result in a needlessly expensive solution. See \Cref{fig:reasonforSCAP} for an illustration. 

\begin{figure}[h]
    \centering
    \includegraphics[scale = 1.1]{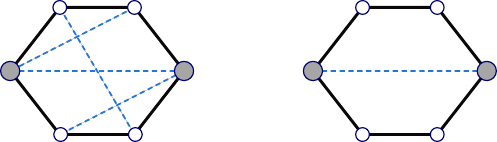}
    \caption{The given graph has 2-edge-connectivity between the large grey nodes (the terminals), which we seek to increase to 3-edge-connectivity. Solving this with global connectivity augmentation results in a solution with 4 links when only 1 is required.}
    \label{fig:reasonforSCAP}
\end{figure}

In this paper, we develop algorithms for generalizations of WCAP to this Steiner setting. We first consider the Steiner Augmentation of a Graph problem (SAG). In contrast to WCAP, in which we can only add links between vertices of the graph to be augmented, in the SAG problem we can purchase links that are incident to Steiner nodes outside the given graph.

\begin{problem}[$k$-Steiner Augmentation of a Graph]
We are given a $k$-edge-connected graph $H = (R, E)$, which is a subgraph of $G = (V,E~ \dot \cup ~L)$. The links $L$ have non-negative costs $c: L \to \mathbb{R}_{\geq 0}$.

The goal is to select $S \subseteq L$ of cheapest cost so that the graph $H' = (V, E \cup S)$ has $k+1$ pairwise edge-disjoint paths between $u$ and $v$ for all $u,v \in R$.
\end{problem}

More generally, we define the Steiner Connectivity Augmentation Problem (SCAP). The goal is to augment the edge-connectivity between a specified subset of terminals in a graph from $k$ to $k+1$ in the cheapest way. 

\begin{problem}[$k$-Steiner Connectivity Augmentation Problem]
We are given a graph $G = (V, E~ \dot\cup ~L)$ and a subset of terminals $R \subseteq V$ such that $H := (V,E)$ has $k$ edge-disjoint paths between every pair of vertices in $R$. We are also given a cost function $c: L \to \mathbb{R}_{\geq 0}$.

The goal is to select $S \subseteq L$ of cheapest cost so that the graph $(V, E \cup S)$ has $k+1$ pairwise edge-disjoint paths between all pairs of nodes in $R$.
\end{problem}

See \Cref{fig:scap-and-sag} for example instances of SAG and SCAP.

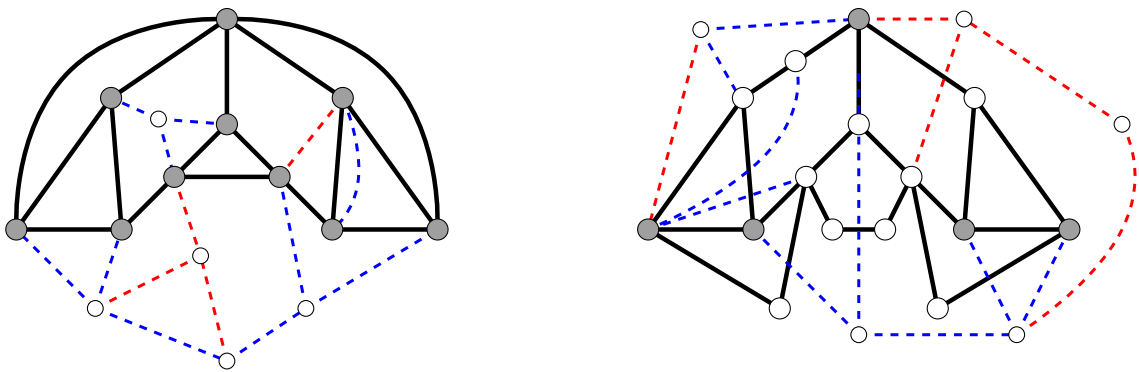
\begin{figure}[h!]
\centering
\begin{tikzpicture}[scale=0.55]

\begin{scope} 
    

    \draw [-] [black,line width=0.6mm,xshift=0 cm] plot [smooth, tension=1] coordinates {(0, 1) (1, 4) (4, 5)};
    \draw [-] [black,line width=0.6mm,xshift=0 cm] plot [smooth, tension=1] coordinates {(0, 1) (1.8, 3.5)};
    \draw [-] [black,line width=0.6mm,xshift=0 cm] plot [smooth, tension=1] coordinates {(0, 1) (2, 1)};
    \draw [-] [black,line width=0.6mm,xshift=0 cm] plot [smooth, tension=1] coordinates {(1.8, 3.5) (2, 1)};
    \draw [-] [black,line width=0.6mm,xshift=0 cm] plot [smooth, tension=1] coordinates {(1.8, 3.5) (4, 5)};
    \draw [-] [black,line width=0.6mm,xshift=0 cm] plot [smooth, tension=1] coordinates {(2, 1) (3, 2)};
    \draw [-] [black,line width=0.6mm,xshift=0 cm] plot [smooth, tension=1] coordinates {(4, 3) (3, 2)};
    \draw [-] [black,line width=0.6mm,xshift=0 cm] plot [smooth, tension=1] coordinates {(5, 2) (3, 2)};
    \draw [-] [black,line width=0.6mm,xshift=0 cm] plot [smooth, tension=1] coordinates {(5, 2) (4, 3)};
    \draw [-] [black,line width=0.6mm,xshift=0 cm] plot [smooth, tension=1] coordinates {(4, 5) (4, 3)};
    \draw [-] [black,line width=0.6mm,xshift=0 cm] plot [smooth, tension=1] coordinates {(5, 2) (6, 1)};
    \draw [-] [black,line width=0.6mm,xshift=0 cm] plot [smooth, tension=1] coordinates {(6, 1) (8, 1)};
    \draw [-] [black,line width=0.6mm,xshift=0 cm] plot [smooth, tension=1] coordinates {(6, 1) (6.2, 3.5)};
    \draw [-] [black,line width=0.6mm,xshift=0 cm] plot [smooth, tension=1] coordinates {(8, 1) (6.2, 3.5)};
    \draw [-] [black,line width=0.6mm,xshift=0 cm] plot [smooth, tension=1] coordinates {(4, 5) (6.2, 3.5)};
    \draw [-] [black,line width=0.6mm,xshift=0 cm] plot [smooth, tension=1] coordinates {(8, 1) (7, 4) (4, 5)};

    \draw [dashed] [tikzblue, line width=0.4mm,xshift=0 cm] plot [smooth, tension=1] coordinates {(2.7, 3.1) (1.8, 3.5)};   
    \draw [dashed] [tikzblue, line width=0.4mm,xshift=0 cm] plot [smooth, tension=1] coordinates {(2.7, 3.1) (3, 2)};   
    \draw [dashed] [tikzblue, line width=0.4mm,xshift=0 cm] plot [smooth, tension=1] coordinates {(2.7, 3.1) (4, 3)};   
    \draw [dashed] [tikzblue, line width=0.4mm,xshift=0 cm] plot [smooth, tension=1] coordinates {(1.5, -0.5) (4, -1.5)};   
    \draw [dashed] [tikzblue, line width=0.4mm,xshift=0 cm] plot [smooth, tension=1] coordinates {(1.5, -0.5) (0, 1)};   
    \draw [dashed] [tikzblue, line width=0.4mm,xshift=0 cm] plot [smooth, tension=1] coordinates {(1.5, -0.5) (2, 1)};   
    \draw [dashed] [red, line width=0.4mm,xshift=0 cm] plot [smooth, tension=1] coordinates {(1.5, -0.5) (3.5, 0.5)};   
    \draw [dashed] [red, line width=0.4mm,xshift=0 cm] plot [smooth, tension=1] coordinates {(3, 2) (3.5, 0.5)};   
    \draw [dashed] [red, line width=0.4mm,xshift=0 cm] plot [smooth, tension=1] coordinates {(4, -1.5) (3.5, 0.5)};   
    \draw [dashed] [tikzblue, line width=0.4mm,xshift=0 cm] plot [smooth, tension=1] coordinates {(4, -1.5) (5.5, -0.5)};   
    \draw [dashed] [tikzblue, line width=0.4mm,xshift=0 cm] plot [smooth, tension=1] coordinates {(5.5, -0.5) (5, 2)};   
    \draw [dashed] [tikzblue, line width=0.4mm,xshift=0 cm] plot [smooth, tension=1] coordinates {(5.5, -0.5) (8, 1)};   
    \draw [dashed] [red, line width=0.4mm,xshift=0 cm] plot [smooth, tension=1] coordinates {(6.2, 3.5) (5, 2)};   
    \draw [dashed] [tikzblue, line width=0.4mm,xshift=0 cm] plot [smooth, tension=1] coordinates {(6.2, 3.5) (6.5, 2) (6, 1)};

    \draw[black,fill=white] (2.7, 3.1) ellipse (0.15 cm  and 0.15 cm);
    \draw[black,fill=white] (1.5, -0.5) ellipse (0.15 cm  and 0.15 cm);
    \draw[black,fill=white] (4, -1.5) ellipse (0.15 cm  and 0.15 cm);
    \draw[black,fill=white] (3.5,0.5) ellipse (0.15 cm  and 0.15 cm);
    \draw[black,fill=white] (5.5, -0.5) ellipse (0.15 cm  and 0.15 cm);

        \draw[black,fill=gray!75,] (4,5) ellipse (0.2 cm  and 0.2 cm);	
        \draw[black,fill=gray!75] (0, 1) ellipse (0.2 cm  and 0.2 cm);
        \draw[black,fill=gray!75] (2, 1) ellipse (0.2 cm  and 0.2 cm);
        \draw[black,fill=gray!75] (6, 1) ellipse (0.2 cm  and 0.2 cm);
        \draw[black,fill=gray!75] (8, 1) ellipse (0.2 cm  and 0.2 cm);
        \draw[black,fill=gray!75] (3, 2) ellipse (0.2 cm  and 0.2 cm);
        \draw[black,fill=gray!75] (5, 2) ellipse (0.2 cm  and 0.2 cm);
        \draw[black,fill=gray!75] (1.8, 3.5) ellipse (0.2 cm  and 0.2 cm);
        \draw[black,fill=gray!75] (4, 3) ellipse (0.2 cm  and 0.2 cm);
        \draw[black,fill=gray!75] (6.2, 3.5) ellipse (0.2 cm  and 0.2 cm);

\end{scope}

\begin{scope}[xshift=12cm]

    \draw [-] [black,line width=0.6mm,xshift=0 cm] plot [smooth, tension=1] coordinates {(0, 1) (1.8, 3.5)};
    \draw [-] [black,line width=0.6mm,xshift=0 cm] plot [smooth, tension=1] coordinates {(0, 1) (2, 1)};
    \draw [-] [black,line width=0.6mm,xshift=0 cm] plot [smooth, tension=1] coordinates {(1.8, 3.5) (2, 1)};
    \draw [-] [black,line width=0.6mm,xshift=0 cm] plot [smooth, tension=1] coordinates {(1.8, 3.5) (4, 5)};
    \draw [-] [black,line width=0.6mm,xshift=0 cm] plot [smooth, tension=1] coordinates {(2, 1) (3, 2)};
    \draw [-] [black,line width=0.6mm,xshift=0 cm] plot [smooth, tension=1] coordinates {(4, 3) (3, 2)};
    \draw [-] [black,line width=0.6mm,xshift=0 cm] plot [smooth, tension=1] coordinates {(5, 2) (4, 3)};
    \draw [-] [black,line width=0.6mm,xshift=0 cm] plot [smooth, tension=1] coordinates {(4, 5) (4, 3)};
    \draw [-] [black,line width=0.6mm,xshift=0 cm] plot [smooth, tension=1] coordinates {(5, 2) (6, 1)};
    \draw [-] [black,line width=0.6mm,xshift=0 cm] plot [smooth, tension=1] coordinates {(6, 1) (8, 1)};
    \draw [-] [black,line width=0.6mm,xshift=0 cm] plot [smooth, tension=1] coordinates {(6, 1) (6.2, 3.5)};
    \draw [-] [black,line width=0.6mm,xshift=0 cm] plot [smooth, tension=1] coordinates {(8, 1) (6.2, 3.5)};
    \draw [-] [black,line width=0.6mm,xshift=0 cm] plot [smooth, tension=1] coordinates {(4, 5) (6.2, 3.5)};
    \draw [-] [black,line width=0.6mm,xshift=0 cm] plot [smooth, tension=1] coordinates {(3, 2) (3.5, 1)};
    \draw [-] [black,line width=0.6mm,xshift=0 cm] plot [smooth, tension=1] coordinates {(4.5, 1) (5, 2)};
    \draw [-] [black,line width=0.6mm,xshift=0 cm] plot [smooth, tension=1] coordinates {(4.5, 1) (3.5, 1)};

    \draw [-] [black,line width=0.6mm,xshift=0 cm] plot [smooth, tension=1] coordinates {(2.5, -0.5) (3, 2)};
    \draw [-] [black,line width=0.6mm,xshift=0 cm] plot [smooth, tension=1] coordinates {(2.5, -0.5) (0, 1)};

    \draw [-] [black,line width=0.6mm,xshift=0 cm] plot [smooth, tension=1] coordinates {(5.5, -0.5) (5, 2)};
    \draw [-] [black,line width=0.6mm,xshift=0 cm] plot [smooth, tension=1] coordinates {(5.5, -0.5) (8, 1)};

    \draw [dashed] [tikzblue, line width=0.4mm,xshift=0 cm] plot [smooth, tension=1] coordinates {(4, -1) (4,4)};   
    \draw [dashed] [tikzblue, line width=0.4mm,xshift=0 cm] plot [smooth, tension=1] coordinates {(4, -1) (2, 1)};   
    \draw [dashed] [tikzblue, line width=0.4mm,xshift=0 cm] plot [smooth, tension=1] coordinates {(4, -1) (7, -1)};   
    \draw [dashed] [tikzblue, line width=0.4mm,xshift=0 cm] plot [smooth, tension=1] coordinates {(7,-1) (6, 1)};   
    \draw [dashed] [tikzblue, line width=0.4mm,xshift=0 cm] plot [smooth, tension=1] coordinates {(7, -1) (8, 1)};   
    \draw [dashed] [red, line width=0.4mm,xshift=0 cm] plot [smooth, tension=1] coordinates {(7, -1) (9, 1) (9, 3)};   
    \draw [dashed] [red, line width=0.4mm,xshift=0 cm] plot [smooth, tension=1] coordinates {(9,3) (6, 5)}; 
    \draw [dashed] [red, line width=0.4mm,xshift=0 cm] plot [smooth, tension=1] coordinates {(6,5) (4, 5)}; 
    \draw [dashed] [red, line width=0.4mm,xshift=0 cm] plot [smooth, tension=1] coordinates {(5,2) (6, 5)};   
    \draw [dashed] [tikzblue, line width=0.4mm,xshift=0 cm] plot [smooth, tension=1] coordinates {(1,4.8) (1.8, 3.4)}; 
    \draw [dashed] [tikzblue, line width=0.4mm,xshift=0 cm] plot [smooth, tension=1] coordinates {(1,4.8) (4, 5)}; 
    \draw [dashed] [red, line width=0.4mm,xshift=0 cm] plot [smooth, tension=1] coordinates {(1,4.8) (0, 1)}; 

    \draw [dashed] [tikzblue, line width=0.4mm,xshift=0 cm] plot [smooth, tension=1] coordinates {(2.8, 4.2) (2.3, 2.5) (0, 1)}; 

    \draw [dashed] [tikzblue, line width=0.4mm,xshift=0 cm] plot [smooth, tension=1] coordinates {(3, 2) (0, 1)}; 

        \draw[black,fill=gray!75,] (4,5) ellipse (0.2 cm  and 0.2 cm);	
        \draw[black,fill=gray!75] (0, 1) ellipse (0.2 cm  and 0.2 cm);
        \draw[black,fill=gray!75] (2, 1) ellipse (0.2 cm  and 0.2 cm);
        \draw[black,fill=gray!75] (6, 1) ellipse (0.2 cm  and 0.2 cm);
        \draw[black,fill=gray!75] (8, 1) ellipse (0.2 cm  and 0.2 cm);
        \draw[black,fill=white] (3, 2) ellipse (0.2 cm  and 0.2 cm);
        \draw[black,fill=white] (5, 2) ellipse (0.2 cm  and 0.2 cm);
        \draw[black,fill=white] (1.8, 3.5) ellipse (0.2 cm  and 0.2 cm);
        \draw[black,fill=white] (4, 3) ellipse (0.2 cm  and 0.2 cm);
        \draw[black,fill=white] (6.2, 3.5) ellipse (0.2 cm  and 0.2 cm);

        \draw[black,fill=white] (3.5,1) ellipse (0.2 cm  and 0.2 cm);	
        \draw[black,fill=white] (4.5,1) ellipse (0.2 cm  and 0.2 cm);	
        \draw[black,fill=white] (2.5,-0.5) ellipse (0.2 cm  and 0.2 cm);	
        \draw[black,fill=white] (5.5,-0.5) ellipse (0.2 cm  and 0.2 cm);	

        \draw[black,fill=white] (2.8, 4.2) ellipse (0.2 cm  and 0.2 cm);

    \draw[black,fill=white] (4, -1) ellipse (0.15 cm  and 0.15 cm);
    \draw[black,fill=white] (7, -1) ellipse (0.15 cm  and 0.15 cm);
    \draw[black,fill=white] (6, 5) ellipse (0.15 cm  and 0.15 cm);
    \draw[black,fill=white] (9,3) ellipse (0.15 cm  and 0.15 cm);
    \draw[black,fill=white] (1, 4.8) ellipse (0.15 cm  and 0.15 cm);
    
\end{scope}

\end{tikzpicture}
		
\caption{A 3-SAG instance is shown on the left, and a 3-SCAP instance is shown on the right. The shaded nodes are terminals $R$, the black edges denote the edges of $E$ and the dashed edges represent the links $L$. In both pictures, the blue dashed links form a feasible solution.}
\label{fig:scap-and-sag}
\end{figure}

Notice that $k$-SAG is a special case of $k$-SCAP. In~\cite{RZZ2022}, Ravi, Zhang and Zlatin achieve a $(1.5 + \varepsilon)$-approximation for the problem of cheaply augmenting a given Steiner tree to be a 2-edge-connected Steiner subgraph. This is called the Steiner Tree Augmentation Problem (STAP), and is equivalent to 1-SAG. For the same reason that odd connectivity augmentation reduces to WTAP, this yields a $1.5+\varepsilon$ approximation for $k$-SAG for all odd $k$.

Importantly, this also yields an improved approximation for 1-SCAP. This is because the $k$-SAG and $k$-SCAP problems are equivalent when $k=1$, as any minimal 2-edge-connected Steiner subgraph is also globally 2-edge-connected. However, this unification ceases for $k > 1$, making the higher connectivity setting that we consider much more interesting.

\subsection{Our results}

In this paper, we give the first approximation algorithm with approximation ratio better than 2 for 2-SCAP, and for $k$-SAG for any $k$. To do this we introduce and solve the Steiner Ring Augmentation Problem (SRAP). 

\begin{restatable}[Steiner Ring Augmentation Problem]{problem}{SRAP}\label{prob:srap}
We are given a cycle $H = (V(H), E)$, which is a subgraph of $G = (V,E~ \dot \cup ~L)$. The links $L$ have non-negative costs $c: L \to \mathbb{R}_{\geq 0}$. Furthermore, we are given a set of terminals $R \subseteq V(H)$.

The goal is to select $S \subseteq L$ of minimum cost so that the graph $H' = (V, E \cup S)$ has $3$ pairwise edge-disjoint paths between $u$ and $v$ for all $u,v \in R$.
\end{restatable}

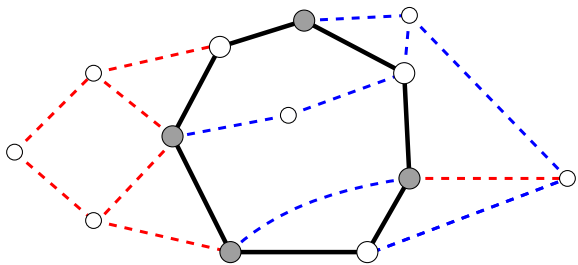
\begin{figure}[h]
		\centering
		\begin{tikzpicture}[scale=0.55]
		
		\begin{scope}
		
		\draw [-] [black,line width=0.6mm,xshift=0 cm] plot [smooth, tension=1] coordinates {(3,6) (1.4,5.5)};

        \draw [-] [black,line width=0.6mm,xshift=0 cm] plot [smooth, tension=1] coordinates {(0.5,3.8) (1.4,5.5)};

        \draw [-] [black,line width=0.6mm,xshift=0 cm] plot [smooth, tension=1] coordinates {(0.5,3.8) (1.6,1.6)};

        \draw [-] [black,line width=0.6mm,xshift=0 cm] plot [smooth, tension=1] coordinates {(4.2,1.6) (1.6,1.6)};

        \draw [-] [black,line width=0.6mm,xshift=0 cm] plot [smooth, tension=1] coordinates {(4.2,1.6) (5,3)};

        \draw [-] [black,line width=0.6mm,xshift=0 cm] plot [smooth, tension=1] coordinates {(5,3) (4.9,5)};

        \draw [-] [black,line width=0.6mm,xshift=0 cm] plot [smooth, tension=1] coordinates {(3,6) (4.9,5)};

		\draw [dashed] [tikzblue, line width=0.4mm,xshift=0 cm] plot [smooth, tension=1] coordinates {(5,3) (3,2.5) (1.6,1.6)};
		\draw [dashed] [tikzblue, line width=0.4mm,xshift=0 cm] plot [smooth, tension=1] coordinates {(3,6) (5,6.1)};
  		\draw [dashed] [tikzblue, line width=0.4mm,xshift=0 cm] plot [smooth, tension=1] coordinates {(5,6.1) (4.9,5)};
           
  		\draw [dashed] [tikzblue, line width=0.4mm,xshift=0 cm] plot [smooth, tension=1] coordinates {(5,6.1) (8,3)};

        \draw [dashed] [red, line width=0.4mm,xshift=0 cm] plot [smooth, tension=1] coordinates {(5,3) (8,3)};

        \draw [dashed] [red, line width=0.4mm,xshift=0 cm] plot [smooth, tension=1] coordinates {(-1,2.1) (0.6,3.8)};

        \draw [dashed] [tikzblue, line width=0.4mm,xshift=0 cm] plot [smooth, tension=1] coordinates {(4.3,1.6) (8,3)};

        \draw [dashed] [tikzblue, line width=0.4mm,xshift=0 cm] plot [smooth, tension=1] coordinates {(4.3,1.6) (8,3)};


        \draw [dashed] [tikzblue, line width=0.4mm,xshift=0 cm] plot [smooth, tension=1] coordinates {(2.7,4.2) (0.6,3.8)};

        \draw [dashed] [tikzblue, line width=0.4mm,xshift=0 cm] plot [smooth, tension=1] coordinates {(2.7,4.2) (4.9,5)};

        \draw [dashed] [red,line width=0.4mm] plot [smooth, tension=1] coordinates {(-2.5,3.5) (-1,2.2)};

        \draw [dashed] [red,line width=0.4mm] plot [smooth, tension=1] coordinates {(-2.5,3.5) (-1,5)};

        \draw [dashed] [red,line width=0.4mm] plot [smooth, tension=1] coordinates {(1.2,5.5) (-1,5)};

        \draw [dashed] [red,line width=0.4mm] plot [smooth, tension=1] coordinates {(0.5,3.8) (-1,5)};

        \draw [dashed] [red,line width=0.4mm] plot [smooth, tension=1] coordinates {(-1,2.2) (1.6,1.6)};

        \draw[black,fill=white] (2.7,4.2) ellipse (0.15 cm  and 0.15 cm);

        \draw[black,fill=white] (-2.5,3.5) ellipse (0.15 cm  and 0.15 cm);

        \draw[black,fill=white] (-1,5) ellipse (0.15 cm  and 0.15 cm);

        \draw[black,fill=white] (-1,2.2) ellipse (0.15 cm  and 0.15 cm);

        \draw[black,fill=white] (5,6.1) ellipse (0.15 cm  and 0.15 cm);

        \draw[black,fill=white] (8,3) ellipse (0.15 cm  and 0.15 cm);

		\draw[black,fill=gray!75,] (3,6) ellipse (0.2 cm  and 0.2 cm);	
		\draw[black,fill=white] (1.4,5.5) ellipse (0.2 cm  and 0.2 cm);			
		\draw[black,fill=gray!75] (0.5,3.8) ellipse (0.2 cm  and 0.2 cm);
		\draw[black,fill=gray!75] (1.6,1.6) ellipse (0.2 cm  and 0.2 cm);	
		\draw[black,fill=white] (4.2,1.6) ellipse (0.2 cm  and 0.2 cm);	
		\draw[black,fill=gray!75] (5,3) ellipse (0.2 cm  and 0.2 cm);
		\draw[black,fill=white] (4.9,5) ellipse (0.2 cm  and 0.2 cm);


		\end{scope}
		\end{tikzpicture}
		
		\caption{A SRAP instance where the black edges denote the given cycle, the dashed edges are the links, and the blue links form a feasible solution. The shaded nodes are the terminals $R$.}
		\label{fig:l2l}
\end{figure}

In terms of approximability, the Steiner Ring Augmentation Problem captures both WCAP and STAP as special cases, in the sense that an $\alpha$-approximation for SRAP implies the same guarantee for both WCAP and STAP. We show that it also implies improved approximation algorithms for 2-SCAP and $k$-SAG.

\begin{restatable}{lemma}{redtoSRAP}\label{lem:redtoSRAP}
    If there is an $\alpha$-approximation for SRAP, then there is an $\alpha$-approximation for $2$-SCAP. If there is an $\alpha$-approximation for SRAP when $R = V(H)$, then there is an $\alpha$-approximation for $k$-SAG.
\end{restatable}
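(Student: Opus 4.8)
For both statements the goal is to reduce a given instance to a \emph{single} SRAP instance with no loss in the approximation factor, so that an $\alpha$-approximate SRAP solution pulls back to an $\alpha$-approximate solution of the original. The starting point is the cut characterization of feasibility: by Menger's theorem, a link set $S$ is feasible for a $2$-SCAP instance (resp.\ a $k$-SAG instance) if and only if $S$ crosses every \emph{deficient cut}, that is, every $\emptyset \neq U \subsetneq V$ with $U \cap R \neq \emptyset \neq R \setminus U$ and with $\abs{\delta_E(U)}$ equal to the Steiner edge-connectivity of $H$ for $R$ (which is $2$ for $2$-SCAP and $k$ for $k$-SAG, since $H$ already realizes it). So both problems ask for a cheapest set of links --- where links may run through non-terminal ``Steiner'' vertices --- crossing a prescribed family of small cuts, and the plan is to re-encode that family as the family of deficient cuts of a cycle.

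\textbf{Reducing $2$-SCAP.} Here the deficient cuts are the $2$-edge-cuts of $H$ separating two terminals. By uncrossing, this family is represented by a cactus $\mathcal{C}$ (the cactus representation of minimum cuts~\cite{FF2009}, adapted to the terminal-separating case) together with a map $\psi\colon V \to V(\mathcal{C})$, in the sense that the deficient cuts are exactly the sets $\psi^{-1}(W)$ as $W$ ranges over the sides of $2$-cuts of $\mathcal{C}$. Since every vertex of a cactus has even degree, $\mathcal{C}$ has an Euler tour; let $H'$ be the cycle obtained by listing the nodes of $\mathcal{C}$ in the order the tour visits them (a node is listed once per visit). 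I would then build the SRAP instance on $H'$ as follows: a vertex of $H'$ is a terminal iff the corresponding node of $\mathcal{C}$ is the image under $\psi$ of a terminal; between any two copies of the same node of $\mathcal{C}$ we place a link of cost $0$; and every original link $\{x,y\}$ becomes a link of the same cost joining some copy of $\psi(x)$ to some copy of $\psi(y)$. The crucial claim to verify is that the $2$-cuts of $H'$ which are \emph{not} crossed by any cost-$0$ link are in cost-preserving bijection with the $2$-cuts of $\mathcal{C}$, hence with the deficient cuts of the original instance. Granting this, and using that any SRAP solution may be assumed to contain all cost-$0$ links for free, projecting an optimal $2$-SCAP solution yields a feasible SRAP solution of no larger cost, and un-projecting an SRAP solution (deleting the cost-$0$ links) yields a feasible $2$-SCAP solution of the same cost; so the optima coincide.

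\textbf{Reducing $k$-SAG.} Now $H=(R,E)$ and every vertex of $H$ is a terminal, so the deficient cuts are precisely the global minimum cuts of $H$, with the Steiner vertices $V \setminus R$ free to lie on either side. These are represented by the cactus $\mathcal{C}$ of $H$~\cite{FF2009} (for $k$ even a genuine cactus; for $k$ odd a tree, which we turn into a degenerate cactus by doubling each edge, i.e.\ by Euler-touring the tree with each edge used twice). Apply the same construction: Euler-tour $\mathcal{C}$ into a cycle $H'$, add the cost-$0$ identification links, and route the original links onto copies, \emph{keeping} the Steiner vertices $V \setminus R$ as off-cycle vertices that links may still use. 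Since every node of $\mathcal{C}$ is a contraction of a nonempty set of terminals, every vertex of $H'$ is a terminal, so this is an SRAP instance with $R = V(H')$, and equivalence of the optima follows exactly as before.

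\textbf{The main obstacle.} The Menger characterization and the ``project/un-project'' arguments are routine; the real work is the cactus-to-cycle step. The naive approach --- solve the ring-augmentation subproblem for each cycle of $\mathcal{C}$ separately and take the union --- is \emph{not} cost preserving, because a single link can be needed on many cycles of $\mathcal{C}$ at once, so one genuinely must encode all of $\mathcal{C}$ in one ring instance. The heart of the argument is therefore the bijection claim: in the Euler-tour cycle $H'$, the $2$-cuts left uncrossed by the cost-$0$ identification links must be \emph{exactly} the images of the $2$-cuts of $\mathcal{C}$ --- no extra ones (which would add spurious covering constraints) and none missing (which would let infeasible solutions through) --- and the Steiner vertices must be threaded through this correspondence correctly (a Steiner vertex incident to a single solution link covers no cut, but this is reflected automatically, since SRAP is itself defined via terminal edge-connectivity rather than explicit cut covering). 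Once this correspondence is established, both inequalities between the optima are immediate and both assertions of the lemma follow; the only ingredient beyond the known WCAP-to-WRAP reduction is carrying the terminal set and the Steiner vertices along.
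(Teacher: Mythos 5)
Your approach is fundamentally the same as the paper's: represent the family of dangerous $2$-cuts by a cactus and then unfold the cactus into a cycle via an Euler tour, gluing the copies of a split node with cost-$0$ links. For $k$-SAG this works directly, and your treatment there essentially matches the paper's (the aside about odd $k$ giving a tree is a distraction --- the cactus representation applies to any graph regardless of parity --- but it is harmless).

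For $2$-SCAP, however, there is a genuine gap. You invoke ``the cactus representation of minimum cuts, adapted to the terminal-separating case,'' but the Dinitz--Karzanov--Lomonosov cactus represents \emph{global} min-cuts. In $2$-SCAP the graph $H = (V,E)$ is only Steiner $2$-edge-connected: it may be disconnected (isolated Steiner vertices, or whole Steiner components), and the terminal-containing component may contain bridges that don't separate terminals. In that case the global min-cut value of $H$ is $0$ or $1$, and the cactus would represent those cuts, not the terminal-separating $2$-cuts you actually want to cover. There is no off-the-shelf ``cactus of terminal-separating $\lambda_R$-cuts''; the paper remarks explicitly that for general $k$-SCAP this family \emph{cannot} be cactus-represented, and it is only the $k=2$ structure that saves the day. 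Concretely, the paper first proves a preprocessing lemma: every Steiner connected component can be contracted to an isolated vertex, and every $1$-cut in the terminal component can be contracted as well, \emph{without altering} the family of dangerous cuts (because any dangerous cut must contain or avoid each such piece entirely). Only after this contraction is the terminal component genuinely $2$-edge-connected, so the standard cactus representation applies and your Euler-tour unfolding goes through. Without this step your reduction, as written, would apply the cactus theorem outside its hypotheses. Adding the preprocessing lemma would make the argument line up with the paper's.
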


Our main result is an improved approximation algorithm for SRAP.

\begin{restatable}{theorem}{mainTheorem}\label{thm:main_theorem}
    There is a $(1 + \ln{2}  + \varepsilon)$-approximation algorithm for SRAP.
\end{restatable}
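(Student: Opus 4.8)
The plan is to follow the relative greedy paradigm of Cohen--Nutov and Traub--Zenklusen: start from a feasible solution $F_0$ with $c(F_0) \le 2\,\OPT$ (obtained, e.g., from Jain's iterative rounding applied to the survivable network design formulation in which the cycle edges $E$ are present at zero cost), and then repeatedly replace parts of the current solution by cheap ``components'' drawn from a structured family $\mathcal{C}$ of link sets, until no further improvement of the required quality is possible.

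First I would recast SRAP as a covering problem over a \emph{cyclic} family of cuts. As in WRAP, the only cuts of $H$ that a solution must newly cover are the ``$2$-cuts'' $\delta(X)$ with $X \cap V(H)$ a proper contiguous arc $A$ of the cycle that splits $R$; but now, because $S$ may route through the Steiner vertices $V \setminus V(H)$, covering such a cut means that $S$ contains a path (through links and Steiner vertices) joining $A$ to the complementary arc. I would first apply a ``shadow'' preprocessing step — closing under the operation of restricting a covering structure to a sub-arc at no increase in cost — so that one may assume every useful component has a clean interface with the arc structure. The set of terminal-separating arcs is naturally organized cyclically, and this is precisely what makes the ring structure (rather than the tree structure underlying STAP) the right level of generality here.

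Next I would define the family $\mathcal{C}$ of components as, roughly, minimal link sets whose covered arcs form a contiguous range of the cyclic family, realized by a bounded-depth ``Steiner star/tree'' hanging off the cycle. For a current solution $F$ and a component $C$, let $\drop_F(C)$ be the maximum cost of a subset of $F$ whose removal keeps $F \cup C$ feasible; the greedy step picks $C \in \mathcal{C}$ (approximately) maximizing $\drop_F(C)/c(C)$ and, while this ratio exceeds $1$, updates $F \gets (F \setminus D) \cup C$. The two technical pillars are: (i) a \emph{decomposition lemma}, stating that any feasible solution — in particular $\OPT$ — can be partitioned into components $C_1,\dots,C_t \in \mathcal{C}$ with $\sum_i c(C_i) \le \OPT$ such that, against any feasible $F$, the $C_i$ jointly drop all links of $F$ not belonging to $\OPT$'s own component structure (so that averaging yields some component of ratio at least $c(F \setminus \text{(protected part)})/\OPT$); and (ii) a polynomial-time \emph{subroutine} that, given $F$, computes a best component by dynamic programming along the cycle, exploiting the contiguity of each component's coverage together with the bounded structure of its Steiner attachment.

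Given these pillars, the analysis is the standard one: writing $\Phi$ for the cost of the links of $F$ not yet charged to $\OPT$, each greedy step decreases $\Phi$ multiplicatively while its added cost is at most $(\OPT/\Phi)$ times the (uncharged) cost it removes, so summing the resulting logarithmic series bounds the final cost by $\OPT\bigl(1 + \ln(c(F_0)/\OPT)\bigr) \le (1+\ln 2)\,\OPT$; the $\varepsilon$ loss absorbs the approximate component selection, a bound on the number of iterations (via cost scaling), and the slack in the decomposition lemma. I expect pillar~(i) to be the crux: one must show that an arbitrary feasible solution, whose links may chain arbitrarily through Steiner vertices, can be reorganized into arc-contiguous components without loss — the cyclic topology (no canonical root) and the interaction between Steiner routing and the cut family make this substantially more delicate than in the tree and (non-Steiner) ring cases handled previously, and it is where I would spend most of the effort, likely via an uncrossing/laminarization argument on the witness sets of $\OPT$ combined with the shadow closure.
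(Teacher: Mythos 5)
Your high-level framework---relative greedy starting from a $2$-approximate $F_0$, a decomposition lemma, and a dynamic-programming subroutine---matches the paper's outline, and your endgame analysis (logarithmic summation over greedy steps) is standard and correct. But there is a genuine gap in how $F_0$ is obtained, and it propagates through everything else. You propose taking an arbitrary $2$-approximation (e.g.\ from Jain's iterative rounding) and closing under shadows. The paper instead requires $\vecF_0$ to be a highly structured \emph{directed} solution that is $R$-\emph{special}: an $r$-out arborescence on terminals only, planar when the ring is embedded in the plane, with at most one outgoing arc per ring-direction at each vertex, and crucially incident to no Steiner nodes. Establishing that such a $2$-approximate directed solution always exists is the central technical contribution (\Cref{thm:2approx}), proved by Euler-touring each full component of $\OPT$ to get a collection of directed cycles on ring nodes, invoking a ``cycle merging lemma'' to fuse these into a single directed cycle of the same cost, shortcutting over non-terminal ring vertices, and finally taking a non-shortenable WRAP solution on the terminal-only ring to extract the arborescence structure. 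The paper explicitly remarks both that the decomposition theorem \emph{fails} if $\vecF_0$ is incident on Steiner nodes, and that an arbitrary feasible directed solution \emph{cannot} in general be made $R$-special (see \Cref{fig:cant-make-special})---only directed-cycle solutions can be. So a Jain-style $F_0$, however shadow-closed, will not plug into the machinery: the responsibility assignment of cuts to directed links, the definition of $\drop_{\vecF_0}$, and the festoon/dependency-graph argument in the decomposition theorem all hinge on the $R$-special arborescence structure. Your pillar~(i) is indeed the crux, but the missing idea is not uncrossing of $\OPT$'s witness sets; it is manufacturing the right $\vecF_0$ in the first place.

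A secondary omission: to make both the enumeration of candidate components and the DP polynomial-time, and to even phrase the decomposition theorem, the paper first $\gamma$-restricts full components via Borchers--Du (\Cref{lem:gamma_restriction}) and passes to a Hyper-SRAP instance in which each hyper-link has bounded size $\gamma = 2^{\lceil 1/\varepsilon'\rceil}$. Your ``bounded-depth Steiner star/tree hanging off the cycle'' gestures at this but never pins down a size bound; without it the component family is not polynomially enumerable and the $\alpha$-thinness notion driving the DP has no leverage. With these two ingredients in place---the $R$-special directed $\vecF_0$ and the $\gamma$-restriction to hyper-links---the rest of your sketch aligns with the paper's \Cref{algo:localgreedy} and its analysis.
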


Our result on SRAP implies an improved approximation for $k$-SCAP when $k = 2$. 


\begin{corollary}
    There is a $(1 + \ln{2} + \varepsilon)$-approximation algorithm for the 2-Steiner Connectivity Augmentation Problem.
\end{corollary}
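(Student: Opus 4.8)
The plan is immediate: the corollary follows by composing the two results already established, so the ``proof'' is really a matter of plugging the SRAP algorithm into the reduction. First I would invoke \Cref{thm:main_theorem} to obtain, for any fixed $\varepsilon > 0$, a polynomial-time algorithm $\mc{A}_\varepsilon$ that solves SRAP within factor $1 + \ln 2 + \varepsilon$. Then I would apply the first clause of \Cref{lem:redtoSRAP} with $\alpha = 1 + \ln 2 + \varepsilon$, which states precisely that an $\alpha$-approximation for SRAP yields an $\alpha$-approximation for $2$-SCAP. Composing the (polynomial-time) reduction of \Cref{lem:redtoSRAP} with $\mc{A}_\varepsilon$ therefore produces the desired $(1+\ln 2 + \varepsilon)$-approximation for $2$-SCAP.

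The only points that need checking are routine bookkeeping around the reduction, and these are already handled by \Cref{lem:redtoSRAP} itself. First, the reduction must run in polynomial time, so that its composition with $\mc{A}_\varepsilon$ remains polynomial; concretely this amounts to computing a cactus-type representation of the value-$2$ cuts of $H$ that separate a pair of terminals in $R$ and then forming one SRAP instance per relevant cycle of that structure, all in polynomial time. Second, one should confirm that the reduction preserves the approximation ratio in the per-instance sense: solving each SRAP subinstance within factor $\alpha$ and reassembling yields a $2$-SCAP solution of cost at most $\alpha \cdot \OPT$. Both of these are exactly what \Cref{lem:redtoSRAP} asserts, so nothing new has to be proved here; the substance lives in the earlier lemma and theorem.

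I do not expect any genuine obstacle: all the difficulty is already absorbed into \Cref{thm:main_theorem} (the relative-greedy analysis for SRAP) and into \Cref{lem:redtoSRAP} (the structural decomposition of $2$-SCAP into ring-augmentation instances). If I wanted to be completely careful about the additive $\varepsilon$, I would just note that the number of SRAP subinstances produced by the reduction is polynomial in the input size and that the dependence on $\varepsilon$ in \Cref{thm:main_theorem} is confined to the running time rather than to the guarantee, so that for every fixed $\varepsilon > 0$ the composed algorithm is a genuine polynomial-time $(1 + \ln 2 + \varepsilon)$-approximation for $2$-SCAP.
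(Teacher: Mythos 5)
Your proposal is correct and takes exactly the paper's (implicit) route: the corollary is an immediate composition of \Cref{thm:main_theorem} with the first clause of \Cref{lem:redtoSRAP}, and the paper states it without further proof for precisely that reason. The only inaccuracy is in your aside describing the reduction as producing ``one SRAP instance per relevant cycle'' of the cactus---in fact \Cref{lem:redtoSRAP} unfolds the entire cactus into a \emph{single} ring via zero-cost links (following G\'alvez et al.), yielding one SRAP instance---but since you invoke the lemma as a black box this does not affect the correctness of the argument.
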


This result, together with the results in~\cite{RZZ2022} show that a better-than-2 approximation can be achieved for $k$-SCAP whenever $k \in \{1,2\}$. Recall that in the case of global connectivity augmentation, these two cases are enough to obtain improved algorithms for any $k$. However, these reductions no longer hold in the Steiner setting, where the cuts to be covered are much more complicated. The challenge of obtaining a result for $k$-SCAP when $k \geq 3$ is that the cuts to be covered are no longer a subcollection of the minimum cuts in the given graph. Thus, there is no way to represent these in a cactus or ring structure.

However, in the case of $k$-SAG, we can replace $H$ with a cactus, and subsequently a ring without changing the structure of its minimum cuts. Hence, the $k$-SAG problem ultimately reduces to the special case of SRAP where all cycle nodes are terminals. In this case, we can adapt the local search methodology introduced by Traub and Zenklusen~\cite{TZ2022} to achieve an improved approximation ratio of $(1.5+ \varepsilon)$.

\begin{restatable}{theorem}{approxForSAG}\label{thm:1.5forSAG}
There is a $(1.5+\varepsilon)$-approximation algorithm for SRAP when $R = V(H)$. Hence there is a $(1.5+\varepsilon)$-approximation algorithm for the $k$-SAG problem for any $k$.
\end{restatable}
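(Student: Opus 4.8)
By \Cref{lem:redtoSRAP} it suffices to give a $(1.5+\varepsilon)$-approximation for SRAP in the case $R=V(H)$; fix such an instance, so $H$ is a cycle and every vertex of $H$ is a terminal. The first task is to expose the covering structure. Since $E$ consists exactly of the cycle edges, a set of links $S\subseteq L$ is feasible if and only if for every pair of cycle edges $\{e,f\}$, which splits $V(H)$ into two arcs $A$ and $\bar A$, some connected component of $(V,S)$ contains a vertex of $A$ and a vertex of $\bar A$; equivalently, after contracting $A$ and $\bar A$ each to a single vertex, $(V,S)$ still joins the two. Thus SRAP with $R=V(H)$ is a covering problem over exactly the same family of ``$2$-cuts'' of the cycle that underlies WRAP, the only difference being that the covering objects are no longer single links but connected subgraphs of $L$ (``Steiner components''), paid for by total link cost, and such a component covers precisely the $2$-cuts that it crosses.

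With this reformulation in hand I would run the Traub--Zenklusen local search of \cite{TZ2022}, in the form adapted to cycles in \cite{TZ2022new}. Maintain a feasible $S\subseteq L$; a local step removes a subset $D\subseteq S$ and adds a subset $A\subseteq L$ with $\abs{A}\le\tau(\varepsilon)$ for a suitable constant $\tau(\varepsilon)$ so that $(S\setminus D)\cup A$ is still feasible, and is applied whenever it decreases $c(S)$ by at least a $(1-\delta)$ factor for a polynomially small $\delta$; the usual potential argument then bounds the number of iterations by a polynomial. Before starting, close $L$ under \emph{shadows}: for any configuration covering a set $C$ of $2$-cuts, add at the same or smaller cost every configuration covering a subset of $C$. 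Shadow-completion, together with the option to contract Steiner components, is what allows one to treat the covering objects as if they were the intervals of WRAP during the analysis.

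The heart of the proof is to show that a (near-)locally-optimal $S$ satisfies $c(S)\le(1.5+\varepsilon)\cdot c(S^\star)$, where $S^\star$ is an optimal solution. Following \cite{TZ2022new}, I would decompose $S$ together with a witness assignment linking each link of $S$ to the $2$-cuts for which it is critically responsible, then consider a carefully chosen distribution over admissible local moves, each of which adds $O_\varepsilon(1)$ links drawn from the shadows of $S^\star$ and deletes the links of $S$ that those additions render redundant. Summing the non-improvement inequality $c(D_M)\le c(A_M)+\delta$ over this distribution, arranged so that every link of $S$ is deleted with total weight at least $1$ while every link of $S^\star$ is added with total weight at most $3/2$, yields $c(S)\le(1.5+\varepsilon)\cdot c(S^\star)$. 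The combinatorial engine producing the factor $3/2$ is the same ear-decomposition and alternating-path exchange argument used in \cite{TZ2022,TZ2022new}.

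The step I expect to be the genuine obstacle is the structural part of the first paragraph: taming Steiner components. A connected piece of $L$ can cover a family of $2$-cuts that is not an interval, so the decomposition and exchange lemmas of \cite{TZ2022new} do not apply verbatim. The plan is to prove, in the spirit of the Steiner-tree-augmentation analysis of \cite{RZZ2022}, a normal form: one may restrict attention to Steiner components that are trees of bounded size attaching to the cycle in a controlled pattern, obtained by contracting or splitting off the remainder without increasing cost, so that each such component plays exactly the role of a single link-with-shadows. Once this normal form and the corresponding shadow operation are in place, the local-move bookkeeping and the $(1.5+\varepsilon)$ accounting go through as in \cite{TZ2022new}, and \Cref{lem:redtoSRAP} then yields the stated guarantee for the Steiner Augmentation of a Graph problem.
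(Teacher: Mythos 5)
Your high-level plan (invoke \Cref{lem:redtoSRAP}, then run the local-search framework of \cite{TZ2022} in its cycle form \cite{TZ2022new}) is the same one the paper pursues, and your identification of ``taming Steiner components'' as the technical obstacle is accurate. However, the proposal misses the single structural fact that actually makes $R=V(H)$ essential, and its plan for handling Steiner components diverges from (and is vaguer than) what the paper does.

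The crucial point you do not address: the local-search algorithm maintains witness sets $W_f$ of directed links and, after each exchange step, must \emph{re-shorten} the directed links so that $\vecF=\bigcup_f W_f$ is again an $R$-special solution. When $R=V(H)$, any feasible directed solution can be iteratively shortened to an $R$-special one of no greater cost (this is exactly the non-shortenable/arborescence structure from \cite{TZ2022new}), so the invariant is maintainable. When $R\neq V(H)$ this fails --- the paper exhibits a feasible directed solution with no $R$-special solution of equal or smaller cost (\Cref{fig:cant-make-special}) --- which is precisely why the paper only proves $(1+\ln 2+\varepsilon)$ in the general Steiner case via relative greedy, but $(1.5+\varepsilon)$ when $R=V(H)$. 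Your sketch fixes $R=V(H)$ at the start but never uses this anywhere: as written, the same argument would purport to give $(1.5+\varepsilon)$ for general SRAP, which is not what the paper establishes. Without pinning down where $R=V(H)$ enters, the proof does not isolate the correct mechanism.

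On the Steiner-component issue, the paper does not seek a ``normal form'' in which a Steiner component is reduced to the role of a single link-with-shadows by contraction or splitting off. Instead it abstracts each full component to a \emph{hyper-link} --- the set of ring vertices it touches --- and handles coverage directly in this language. Two separate ingredients make this tractable: (i) \Cref{lem:gamma_restriction} (via Borchers--Du and the Dreyfus--Wagner algorithm) shows that $\gamma$-restricted hyper-links lose only a $(1+\varepsilon)$ factor, giving polynomially many candidate hyper-links, and (ii) \Cref{thm:DPthm} gives a polynomial-time DP to optimize over $\alpha$-thin sets of hyper-links against the witness weights $\bar c$. Your proposal gestures at bounded-size tree components but neither names the $\gamma$-restriction step nor addresses how the best local move is found in polynomial time; the claim that each component ``plays exactly the role of a single link-with-shadows'' is not correct (a hyper-link can cover a non-interval family of dangerous cuts, and the exchange lemmas of \cite{TZ2022new} are genuinely generalized, via festoons of hyper-links and the dependency graph, not merely reused). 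So the proposal has the right skeleton but leaves exactly the two load-bearing pieces --- the $R$-special shortening argument and the hyper-link/DP machinery --- unproved.
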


\subsection{Related Work}
There is a broad body of work on algorithms for network connectivity, for various notions of robustness and in both the weighted and unweighted settings.

We focus on edge-connectivity. Jain's iterative rounding algorithm achieves an approximation ratio of 2 for the general survivable network design problem~\cite{J2001} in the weighted setting. As discussed above, this implies a 2-approximation for minimum cost $k$-edge-connected spanning subgraph (min $k$-ECSS) by taking $r_{ij} = k$ for all pairs of nodes. This is the best currently known approximation ratio for this problem, and can be achieved through a variety of classical methods such as the Primal Dual method, see~\cite{WGMV1993}.

The Weighted Connectivity Augmentation Problem (WCAP) is a special case of min $k$-ECSS where there is a 0 cost $(k-1)$-edge-connected spanning subgraph. For odd $k$, WCAP reduces to the extensively studied Weighted Tree Augmentation Problem (WTAP). Even WTAP is known to be APX-hard and is NP-hard on various special cases such as when the given tree has diameter four~\cite{frederickson1981approximation}\cite{KKL2004}. The current best approximation ratio for WTAP is $(1.5+\varepsilon)$ due to Traub and Zenklusen in~\cite{TZ2022}. Their approach was inspired by the work of Cohen and Nutov~\cite{CN2013} who gave a $(1+\ln{2})$-approximation for WTAP when the given tree has constant diameter. 

It is known that the natural linear programming relaxation for WTAP has integrality gap at most 2 and at least 1.5~\cite{cheriyan2008integrality}. Stronger formulations have been considered for WTAP: Fiorini et.~al. introduced the ODD-LP~\cite{FF2009} which can be optimized over in polynomial time, despite having exponentially many constraints. The ODD-LP has integrality gap at most $2-\frac{1}{2^{h-1}}$ for WTAP on instances where the given tree can be rooted to have height $h$~\cite{PRZ2021}. For unweighted tree augmentation, SDP relaxations with an integrality gap of $(\frac{3}{2} + \varepsilon)$ have been studied~\cite{CG18, CG18a}.

In the case that $k$ is even, WCAP reduces in an approximation preserving way to the Weighted Cactus Augmentation Problem. Traub and Zenklusen broke the barrier of 2 for this problem and gave a $(1.5+\varepsilon)$-approximation algorithm, matching the approximation ratio for WTAP \cite{TZ2022new}.

Improved ratios can be obtained for these and other problems in the unweighted case. For unweighted $2$-ECSS, a factor $\frac{4}{3}$-approximation was known \cite{SV14, HVV19} with recent improvements achieving $1.326$~\cite{GGA23} and finally $(1.3 + \varepsilon)$ \cite{KN23}. For unweighted TAP, the state of the art is a 1.393-approximation due to~\cite{cecchetto2021bridging}, which also holds more generally for the unweighted Connectivity Augmentation Problem. The first better-than-2-approximation for the Forest Augmentation Problem, which generalizes both (unweighted) $2$-ECSS and TAP, was given recently by Grandoni, Jabal Ameli, and Traub~\cite{GAT22}. 

Nutov~\cite{nutov2010nodeweightedSNDP} considered the node-weighted version of SNDP and provided an $O(k\log n)$-approximation for this problem, where $k$ is the maximum connectivity requirement. This implies a tight $O(\log n)$-approximations for the node-weighted versions of all the problems considered in this paper.


\subsection{Preliminaries}\label{sec:prelims}

Suppose $G = (V,E)$ is a graph with vertex set $V$ and edge set $E$. For a non-empty subset of vertices $C \subsetneq V$, the cut $\delta(C)$ consists of all edges in $E$ with exactly one endpoint in $C$. For a subset $X \subseteq E$, we denote by $\delta_X(C) := \delta(C) \cap X$. A cut $C$ is a $k$-cut if $|\delta(C)| = k$.

The edge-connectivity $\lambda(u,v)$ between a pair of vertices $u, v \in V$ is the maximum number of edge-disjoint paths between $u$ and $v$ in $G$. Equivalently, $\lambda(u,v) = \lambda(v,u)$ is the minimum cardinality of a cut $\delta(C)$ with $u \in C$ and $v \not \in C$. A graph is said to be $k$-edge-connected if $\lambda(u,v) \geq k$ for all pairs $u,v \in V$. Given a subset of terminals $R \subseteq V$, we say that $G$ is Steiner $k$-edge-connected on $R$ if $\lambda(u,v) \geq k$ for all pairs of terminals $u,v \in R$.

Given a Steiner $k$-edge-connected graph $G = (V,E)$ on terminals $R$, fix $r \in R$, and define $$\mathcal{C}'' := \{C \subseteq V \setminus r: |\delta(C)| = k, C \cap R \neq \emptyset\}$$ to be the family of $k$-cuts of $G$ which separate some terminal from $r$. We call the cuts in $\mathcal{C}''$ \textbf{dangerous cuts}.

The $k$-SCAP problem is a hitting set problem where the ground set is the collection of links $L$, and the sets are $\delta(C)$ where $C \in \mathcal{C}''$. That is, a set of links which is a solution to this hitting set problem will cause the graph to become Steiner $(k+1)$-edge-connected. The link $\ell$ ``covers'' those dangerous cuts $C$ for which $\ell \in \delta(C)$.

Recall the Steiner Ring Augmentation Problem.

\SRAP*

We refer to the nodes in $R$ as $\textbf{terminals}$, and the nodes in $V \setminus R$ as \textbf{Steiner nodes}. We will also refer to $H$ as the \textbf{ring} to distinguish it from a generic cycle. It will be convenient to fix a root $r \in R$ of the ring and an edge $e_r \in E$ incident on $r$. 

We now introduce some terminology which allows us to phrase the SRAP problem as a covering problem on a collection of ring-cuts only. Indeed, given the ring $H = (V(H), E)$, denote the set of min-cuts of $H$ as:
$$\mathcal{C}' = \{C \subseteq V(H) \setminus r: |\delta_E(C)| = 2\}.$$

Since we are only interested in connectivity between the terminals, we only need to cover the subfamily of $\mathcal{C}'$ which separates terminals. Let 
$$\mathcal{C} = \{C \subseteq V(H) \setminus r : |\delta_E(C)| = 2, C \cap R \neq \emptyset\}.$$
We call the cuts in $\mathcal{C}$ \textbf{dangerous ring-cuts}.

We will use a similar notion of ``full components" as introduced in Ravi, Zhang and Zlatin~\cite{RZZ2022} for STAP. Consider any solution $S \subseteq L$ to SRAP.

\begin{definition}
A \textbf{full component} of a SRAP solution $S$, is a maximal subtree of the solution where each leaf is a ring node (that is, a vertex of $V(H)$), and each internal node is in $V \setminus V(H)$.
\end{definition}

It is a basic fact that any link-minimal SRAP solution can be uniquely decomposed into link-disjoint full components. 

\begin{definition}
Let $S$ be a solution to SRAP. We say that a set $A \subseteq V(H)$ is \textbf{joined} by $S$ if there is a full component with leaves $A$.
\end{definition}

\begin{definition}
We say that a cut $C \in \mathcal{C}$ is \textbf{covered} by a solution $S$ if $A \cap C \neq \emptyset$ and $ A \cap \bar C \neq \emptyset$ for some subset of ring nodes $A$ which are joined by a full component of $S$. \end{definition}

Hence, we can think of the SRAP problem as the problem of hitting the dangerous ring-cuts with full components.

\begin{lemma}\label{lem:feasible-iff-covered}
A solution $S$ is feasible for SRAP iff all dangerous ring-cuts are covered by $S$.
\end{lemma}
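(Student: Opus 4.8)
The plan is to prove both implications through Menger's theorem, exploiting that the minimum cuts of the cycle $H$ have a simple arc structure. Recall that $(V, E \cup S)$ is Steiner $3$-edge-connected on $R$ if and only if $|\delta_{E \cup S}(D)| \ge 3$ for every $D \subseteq V$ with $\emptyset \ne D \cap R$ and $R \not\subseteq D$. Because $H$ is a cycle and $R \subseteq V(H)$, the quantity $|\delta_E(D)|$ is even; it is $0$ only when $D \cap V(H) \in \{\emptyset, V(H)\}$, which cannot happen when $D$ separates two terminals, and if $|\delta_E(D)| \ge 4$ then $|\delta_{E\cup S}(D)| \ge 3$ trivially. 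So the only cuts that matter are those with $C' := D \cap V(H)$ a nonempty proper arc of the cycle separating two terminals. Assume $S$ is link-minimal (which we may, and which is what makes its decomposition into full components well defined; recall that distinct full components are edge-disjoint and share no Steiner node). Since $E$ and $S$ are disjoint, $\delta_{E\cup S}(D) = \delta_E(D) \ducup \delta_S(D)$, and as $|\delta_E(D)| = 2$ for such $D$, feasibility of $S$ amounts to: $\delta_S(D) \ne \emptyset$ for every arc-cut $D$ separating two terminals.

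($\Leftarrow$) Assume every dangerous ring-cut is covered, and fix such a $D$ with arc $C' = D \cap V(H)$ separating terminals $u \in C'$ and $v \notin C'$. Exactly one of $C'$, $V(H)\setminus C'$ avoids the root $r$; it has two boundary edges in $E$ and contains a terminal ($u$ or $v$), so it is a dangerous ring-cut $C \in \mathcal{C}$ with the same cycle-boundary as $C'$. By hypothesis some full component $F$ of $S$ has ring-node leaves both inside and outside $C$; since $D \cap V(H) = C'$ has the same boundary as $C$, $F$ has a leaf in $D$ and a leaf in $V \setminus D$. As $F$ is connected and made only of links, some link of $F$ lies in $\delta_S(D)$, so $\delta_S(D) \ne \emptyset$. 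Hence $S$ is feasible.

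($\Rightarrow$) Assume $S$ is feasible and fix $C \in \mathcal{C}$; pick $t \in C \cap R$, and note $r \notin C$. Suppose for contradiction that no full component of $S$ has leaves on both sides of $C$; then each full component has all its leaves in $C$, or all its leaves outside $C$. Let
$D := C \cup \{\, s \in V \setminus V(H) : s \text{ is an internal node of a full component all of whose leaves lie in } C \,\}$.
Then $D$ is a proper nonempty subset of $V$ separating $t$ from $r$, and since the extra vertices of $D$ lie off the cycle, $D \cap V(H) = C$ and $\delta_E(D) = \delta_E(C)$ has exactly two edges. Using that full components are edge-disjoint and share no Steiner node, one checks that every link of $S$ has both endpoints in $D$ (if it belongs to a full component with all leaves in $C$) or both endpoints outside $D$ (otherwise); hence $\delta_S(D) = \emptyset$ and $|\delta_{E\cup S}(D)| = 2$, contradicting feasibility via Menger. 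Therefore $C$ is covered.

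The main obstacle is the ($\Rightarrow$) direction. It is tempting to argue that feasibility forces some link to cross $C$, and that the full component containing this link must straddle $C$ --- but this is false: a full component whose Steiner center is joined only to ring nodes inside $C$ has links crossing $C$ yet does not help any cut separating $C$ from its complement. The purpose of enlarging $C$ to $D$ is to absorb the Steiner nodes of exactly such ``internal'' full components, and verifying that $D$ has an empty link-boundary is the one place where the vertex structure of the full-component decomposition (disjointness of Steiner parts) is genuinely used; everything else is a routine Menger computation on the arcs of the ring.
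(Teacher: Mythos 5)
Your proof is correct and follows essentially the same approach as the paper's: both directions reduce cuts of $(V, E \cup S)$ to the ring by observing that any separating cut restricts to an arc of the cycle, and the reverse direction handles Steiner nodes by enlarging the ring-cut $C$ to absorb them. The only difference is presentational: the paper grows $C$ one Steiner node at a time to constructively produce a path to a ring node outside $C$, whereas you define the enlarged cut $D$ in one shot and argue by contradiction that $\delta_S(D) = \emptyset$ --- a version that makes explicit the vertex-disjointness of full components' Steiner parts, which the paper's phrase ``the full component containing this path covers $C$'' leaves implicit.
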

\begin{proof}
    Suppose that each dangerous ring-cut of $H = (V(H),E)$ is covered. That is, for each $C\in \mathcal{C}$, there is a full component of $S$ joining vertices $A \subseteq V(H)$ such that $A \cap C \neq \emptyset$ and $A \cap \bar C \neq \emptyset$. We will show that any cut $C' \subseteq V$ which separates a pair of terminals has $|\delta_{E \cup S}(C')| \geq 3$, thereby implying the existence of 3 pairwise edge-disjoint paths between all terminal pairs. 
    
    Consider the cut $C' \cap V(H)$ obtained by restricting $C'$ to $V(H)$. We may assume that $C' \cap V(H)$ is a 2-cut, otherwise $|\delta_E(C')| \geq 3$ already. Furthermore, note that $C' \cap R \neq \emptyset$. Hence, $C' \cap V(H)$ is a dangerous ring-cut, and there is some full component of $S$ covering it. This implies that $|\delta_S(C')| \geq 1$ so $|\delta_{E \cup S}(C')| \geq 3$ as desired.

    For the reverse direction, suppose that $S$ is a feasible solution to SRAP. Consider any dangerous ring-cut $C \in \mathcal{C}$. We will show that there must be a path in $S$ from a node in $C$ to another ring node which is not contained in $C$, implying that there is some full component covering $C$. Since $|\delta_{E \cup S}(C)| \geq 3$, there must be a link $(u,v) \in S$ with $u \in C$ and $v \not \in C$. If $v$ is a ring node, we are done. Otherwise, the cut $C' = C \cup v$ is another dangerous cut which is covered by $S$. Repeating this process eventually yields a path of links from a node in $C$ to a ring node outside $C$, hence the full component of $S$ containing this path covers $C$.
\end{proof}

This motivates the definition of the Hyper-SRAP problem: we are given a ring $H = (V(H),E)$ with terminals $R \subseteq V(H)$, a root vertex $r \in R$, and a collection of hyper-links $\mathcal{L} \subseteq 2^{V(H)}$ with non-negative costs $c: \mathcal{L} \to \mathbb{R}_{\geq 0}$.

Let $\mathcal{C} = \{C \subseteq V(H) \setminus r: |\delta_E(C)| = 2, C \cap R \neq \emptyset \}$ be the set of dangerous ring-cuts. A cut $C \in \mathcal{C}$ is \textbf{covered} by a hyper-link $\ell$ if $\ell \cap C \neq \emptyset$ and $\ell \cap \bar C \neq \emptyset$. The Hyper-SRAP problem is to find a minimum cost subset of hyper-links so that all cuts in $\mathcal{C}$ are covered.

We will use the notion of the hyper-link intersection graph to characterize feasible solutions to Hyper-SRAP. First, we define what it means for two hyper-links to be intersecting.



\begin{definition}
    Let $\ell$ and $\ell'$ be a pair of hyper-links. Let $(v_1, \ldots, v_k)$ be the sequence of vertices of $\ell \cup \ell'$ obtained by traversing the ring (in either direction). Then $\ell$ and $\ell'$ are \textbf{intersecting} if there are vertices $v_{i_1},v_{i_2},v_{i_3},v_{i_4}$ with $i_1 < i_2 < i_3 < i_4$ such that $v_{i_1}, v_{i_3} \in \ell$ and $v_{i_2}, v_{i_4} \in \ell'$. 
\end{definition}

Given an instance of Hyper-SRAP with ring $H = (V(H),E)$ and hyper-links $\mathcal{L}$, we define the hyper-link intersection graph $\Gamma$ as follows. For each hyper-link $\ell \in \mathcal{L}$ there is a node $v_\ell$. Two nodes $v_{\ell_1}$ and $v_{\ell_2}$ are adjacent in the hyper-link intersection graph if and only if $\ell_1$ and $\ell_2$ are intersecting hyper-links. For ring vertices $u,v \in V(H)$, we say that there is a path from $u$ to $v$ in $\Gamma$ if there is a path in $\Gamma$ from a hyper-link containing $u$ to a hyper-link containing $v$.

\begin{definition}
We say that a full component is \textbf{$\gamma$-restricted} if it joins at most $\gamma$ ring nodes. We say that a solution to SRAP is \textbf{$\gamma$-restricted} if it uses only $\gamma$-restricted full components. Analogously, we say an instance of Hyper-SRAP is \textbf{$\gamma$-restricted} if each hyper-link has size at most $\gamma$. 
\end{definition}

Similar to the approach developed in~\cite{RZZ2022} for STAP, we can work with $\gamma$-restricted solutions to SRAP while losing an arbitrarily small constant in the approximation ratio.
This follows from a result of Borchers and Du for Steiner trees~\cite{BD1997}.

\begin{lemma}
\label{lem:gamma_restriction}
For an instance of SRAP, let $S^* \subseteq L$ be an optimal solution and $S_\gamma \subseteq L$ be an optimal $\gamma$-restricted solution, where $\gamma(\varepsilon)= 2^{\lceil{{\frac{1}{\varepsilon}}\rceil}}$ for some $\varepsilon > 0$. Then $\frac{c(S_\gamma)}{c(S^*)} \leq 1 + \varepsilon.$
\end{lemma}
\begin{proof}
By the optimality of $S^*$, each full component of $S^*$ is minimum cost Steiner tree on the ring nodes $A$ that it joins, in the graph $(A \cup (V \setminus V(H)),L)$. By \cite{BD1997}, there is a $\gamma$-restricted Steiner Tree joining $A \subseteq V(H)$ in this graph with at most $(1 + \varepsilon)$ times the cost. Applying this to each full component of $S^*$ yields a solution of cost at most $(1+\varepsilon)c(S^*)$ which only has full components joining at most $\gamma$ ring nodes.
\end{proof}

Recall that the minimum cost Steiner tree problem can be solved in polynomial time when the number of terminals is constant.

\begin{theorem}[Dreyfus and Wagner~\cite{DW1971}]\label{thm:Steiner-tree-FPT}
    The minimum Steiner tree problem can be solved in time $O(n^3 \cdot 3^p)$ where p is the number of terminals.
\end{theorem}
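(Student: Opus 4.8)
The plan is to reduce to a polynomial-size Hyper-SRAP instance and run a Traub--Zenklusen-style local search on it. By \cref{lem:redtoSRAP} it suffices to give a $(1.5+\varepsilon)$-approximation for SRAP in the case $R = V(H)$, since this transfers verbatim to $k$-SAG. So I would fix such an instance together with a small parameter $\varepsilon_0 = \Theta(\varepsilon)$. First, apply \cref{lem:gamma_restriction}: with $\gamma = 2^{\lceil 1/\varepsilon_0\rceil}$ the cheapest $\gamma$-restricted solution satisfies $c(S_\gamma) \le (1+\varepsilon_0)\,c(S^*)$, so I only need to approximate the best $\gamma$-restricted solution. I would encode this as a $\gamma$-restricted Hyper-SRAP instance on the same ring $H$ with the same terminal set $R = V(H)$: the candidate hyper-links are all $\ell \subseteq V(H)$ with $2 \le |\ell| \le \gamma$ (there are $n^{O(1)}$ of them), and $c(\ell)$ is the minimum cost of a Steiner tree connecting $\ell$ inside $(\ell \cup (V \setminus V(H)), L)$, which by \cref{thm:Steiner-tree-FPT} is computable in time $n^{O(1)}$ since $\gamma$ is a constant. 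Replacing each full component of a $\gamma$-restricted SRAP solution by the hyper-link on its ring nodes (cost only decreases) shows $\OPT_{\mathrm{Hyper}} \le c(S_\gamma)$, while expanding each hyper-link of a Hyper-SRAP solution into a minimum Steiner tree and taking the union yields a feasible SRAP solution of no greater cost (merging full components only covers more cuts). Hence it reduces to $(1.5+\varepsilon_0)$-approximating $\gamma$-restricted Hyper-SRAP with $R = V(H)$.

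Next I would pin down the feasibility structure the local search runs against. Specializing \cref{lem:hyperlinksfeasible} to $R = V(H)$, a set $S$ of hyper-links is feasible iff every ring node is joined to $r$ in the hyper-link intersection graph $\Gamma$ restricted to $S$, i.e. $\Gamma|_S$ ``spans'' all of $V(H)$ in the connectivity sense of \cref{lem:hyperlinksfeasible}. Because $V(H)$ carries a cyclic order, adjacency in $\Gamma$ is a \emph{crossing} relation --- two hyper-links are adjacent iff they share a ring node or interleave cyclically --- which is the structural analogue of the chord-intersection graph that Traub and Zenklusen use for WRAP, with the feature (used crucially) that each hyper-link now spans only a constant number of arcs of the ring.

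Then I would run the Traub--Zenklusen local search (\cite{TZ2022,TZ2022new}) on this Hyper-SRAP instance: maintain a feasible hyper-link set $S$; a local move deletes $D \subseteq S$ and inserts candidate hyper-links $A$ with $|D|, |A| \le k_0(\varepsilon_0)$ so that $(S \setminus D) \cup A$ stays feasible and its cost drops by the prescribed relative amount. Each step enumerates $n^{O(\gamma k_0)} = n^{O(1)}$ pairs $(D,A)$ and checks feasibility through $\Gamma$ in polynomial time, and a standard lazy-local-search / cost-scaling argument caps the number of improving steps at $n^{O(1)}$. For the ratio, I would transcribe TZ's analysis of a local optimum $S$: decompose the crossing structure of $S \cup \OPT_{\mathrm{Hyper}}$ along the cyclic order of $V(H)$ into bounded-size witness pieces and charge $c(S)$ against $\OPT_{\mathrm{Hyper}}$, with the footprints and crossings of hyper-links of size at most $\gamma$ playing the role of chords, to conclude $c(S) \le (1.5+\varepsilon_0)\,\OPT_{\mathrm{Hyper}} \le (1.5+\varepsilon_0)(1+\varepsilon_0)\,c(S^*)$. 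Expanding $S$ back into Steiner trees gives a SRAP solution of that cost; choosing $\varepsilon_0$ with $(1.5+\varepsilon_0)(1+\varepsilon_0) \le 1.5+\varepsilon$ finishes the SRAP bound, and \cref{lem:redtoSRAP} then finishes $k$-SAG.

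The main obstacle is verifying that TZ's structural and charging lemmas survive replacing two-endpoint links (chords) by hyper-links of size at most $\gamma$. Two things need care. First, their decomposition of the symmetric difference into bounded local witness structures exploits the planarity-like crossing pattern of chords on a cycle; I would need to check that a hyper-link --- whose footprint is a constant number of arcs and whose intersections are cyclic interleavings --- still admits such a decomposition, absorbing the extra constant factor into $\gamma$ and $k_0$. Second, I would need to confirm that the feasibility characterization of \cref{lem:hyperlinksfeasible}, being a connectivity condition on $\Gamma|_S$, has the closure properties under the swap operations that make the local moves well-defined and make TZ's exchange arguments go through. I expect the first point --- reconciling hyper-link footprints with the chord-based decomposition --- to require the bulk of the technical work, with the remainder being a faithful adaptation of \cite{TZ2022new}.
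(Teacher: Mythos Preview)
Your proposal does not address the stated theorem at all. The statement is the classical Dreyfus--Wagner result that minimum Steiner tree can be solved exactly in time $O(n^3\cdot 3^p)$; in the paper this is simply quoted with a citation and never proved. Your write-up is instead a proof plan for \cref{thm:1.5forSAG} (the $(1.5+\varepsilon)$-approximation for SRAP with $R=V(H)$ and hence for $k$-SAG), and it even invokes \cref{thm:Steiner-tree-FPT} as a black box to price hyper-links. So as a proof of the statement in question it is vacuous.

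As a side remark on what you actually sketched: even as a plan for \cref{thm:1.5forSAG}, the approach diverges from the paper in a way that leaves a real gap. The paper does not do bounded-size swap local search on hyper-links directly against $\Gamma$-connectivity. It maintains, for each link in the current solution, a \emph{witness set} of at most two directed links so that $\vecF=\bigcup_f W_f$ is an $R$-special directed solution, optimizes $\bar c(\drop_{\vecF}(Z))-1.5\,c(Z)$ over $\alpha$-thin hyper-link sets $Z$ via the dynamic program of \cref{thm:DPthm}, and analyzes progress through the potential $\Phi$ and the decomposition theorem (\cref{thm:decomposition}). Your plan of enumerating pairs $(D,A)$ with $|D|,|A|\le k_0$ and ``transcribing TZ's analysis'' skips the two ingredients that make the $1.5$ ratio go through here: the $R$-special arborescence structure (which is what lets one define $\drop$ and prove the decomposition) and the use of $\alpha$-thin rather than constant-size replacement sets. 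Without those, the charging you allude to has no anchor.
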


Because \Cref{lem:feasible-iff-covered} shows that the feasibility of a solution only depends on the nodes that are joined by full components of $S$, we can effectively disregard the Steiner nodes outside the ring and observe that any instance of SRAP is equivalent to an instance of Hyper-SRAP in which the cost of a hyper-link $A$ is the minimum cost Steiner tree connecting $A$ in the graph $(A \cup (V \setminus V(H)), L)$. By \Cref{lem:gamma_restriction}, and \Cref{thm:Steiner-tree-FPT}, we can perform this reduction from an arbitrary SRAP instance to an instance of $\gamma$-restricted Hyper-SRAP in polynomial time while only losing a factor of $(1+\varepsilon)$ in the approximation ratio.  

Finally, we will make use of directed solutions to the SRAP problem. If $\vecF$ is a collection of directed links between pairs of vertices of the ring, then a dangerous ring-cut $C \in \mathcal{C}$ is covered by $\vecF$ if $\delta_{\vecF}^-(C) \neq \emptyset$, i.e. if there is an arc in $\vecF$ which enters $C$. Then $\vecF$ is a feasible \textbf{directed solution} if all dangerous ring-cuts are covered by $\vecF$. Analogously, if $S$ is a set of undirected links and $\vecF$ is a set of directed links then we will say that $S \cup \vecF$ is a feasible \textbf{mixed solution} if every dangerous ring-cut is covered by $S$ or by $\vecF$. 

\section{Overview of Techniques}
The main contribution of this article is to prove \Cref{thm:main_theorem}.

\mainTheorem*

We show this implies improved approximation guarantees for both 2-SCAP and $k$-SAG. Recall that in the $k$-SCAP problem, we are given a graph $H$ which is Steiner $k$-edge-connected on the terminal set $R$. We want to augment this graph by including additional links of minimum cost so that there exists $k+1$ pairwise edge-disjoint paths between every pair of terminals. This is equivalent to covering the dangerous cuts of $H$ with a minimum cost set of links.

Note that if $k$ is larger than the global edge-connectivity of $H$, then there may be exponentially many dangerous cuts. Indeed even when $|R| = 2$, the number of minimum $\{s,t\}$-cuts in a graph $G$ on $n$ nodes may be exponential in $n$. This shows that, unlike in global connectivity augmentation, the set of cuts to be covered in the $k$-SCAP problem cannot be represented by a cactus structure that is efficiently computable. Although it is true that there are polynomially many dangerous cuts up to separating the same subset of terminals~\cite{dinitz1994carcass}, this is not sufficient for our purposes.

However, in the case of $k = 2$, we show that we may take $H$ to be a globally 2-edge-connected graph rather than merely Steiner 2-edge-connected. This allows us to compactly represent the cuts to be covered by a cactus on a set of nodes, some of which are terminals. We then use the technique introduced by G\a'alvez et.~al.~\cite{galvez2021cycle} to replace the cactus with a ring by adding in links of 0 cost. This brings the problem into the SRAP framework. In the case of $k$-SAG, we are guaranteed that $H$ is $k$-edge-connected so we can directly replace $H$ with a cactus with the same cut structure. This yields:

\redtoSRAP*

\subsection{The Relative Greedy Algorithm for WRAP}
We begin by briefly describing the relative greedy framework which is used by Traub and Zenklusen in~\cite{TZ2022new} for the standard Weighted Ring Augmentation Problem. 
Traub and Zenklsuen first introduce a directed weakening of the WRAP problem which allows them to obtain an initial directed 2-approximate solution with good structure. At a high level, each undirected link $\ell = (u,v)$ in the instance is replaced by a pair of directed links $(u,v)$ and $(v,u)$ of the same cost. These directed links are called ``shadows" of $\ell$ and cover a subset of the cuts which $\ell$ does. This directed weakening of WCAP can be solved optimally in polynomial time, yielding a 2-approximate directed solution for WCAP.

This solution is then improved over time by iteratively adding collections of links, while dropping links from the initial solution so that feasibility is maintained. Crucially, 
the link subsets which are added in each step are restricted to be so-called $``\alpha$-thin" subsets for some constant $\alpha \in \mathbb{N}$. Traub and Zenklusen show that the ``best'' $\alpha$-thin subset of links can be found efficiently in each step (their Optimization Theorem), while also being a broad enough class so that the absence of an improving $\alpha$-thin subset implies good guarantees on the cost of the current solution (the Decomposition Theorem). At a high level, these three ingredients: a structured 2-approximation, an optimization theorem, and a decomposition theorem, are enough to implement the relative greedy algorithm yielding an approximation guarantee of $1 + \ln(2) + \varepsilon$. 



\subsection{The Relative Greedy Algorithm for SRAP}
To employ a relative greedy algorithm for SRAP, we show how each of these three ingredients can be enacted in the Steiner setting. The first challenge is to obtain an initial 2-approximate directed solution on the nodes of the ring. While a 2-approximation for SRAP follows from e.g.~Jain's result, this solution may use links between Steiner vertices outside the ring, and in general does not have enough structure to be useful for the relative greedy framework.

Hence, we begin by showing how to compute a directed 2-approximation with helpful structural properties. To do this, we must perform several preprocessing steps on the given SRAP instance to ensure that it is both ``shadow-complete" and ``metric-complete" simultaneously. We call such instances ``complete'' instances and show that a complete instance can be obtained without loss of generality in at most three rounds of metric and shadow completion.

Next, we compute an initial, highly structured, directed 2-approximation for SRAP, which we call an $R$-special solution. Roughly speaking, an $R$-special solution is a directed solution which is a planar $r$-out-arborescence, has out-degree at most 2, and is only incident on the terminals $R$. We more formally define the properties that are required of this initial solution in \Cref{sec:rspecial}, and describe how to compute one in polynomial time on any complete SRAP instance.

We can now iteratively improve upon our initial 2-approximation by adding and dropping links while maintaining feasibility. However, unlike in standard WRAP, the subsets of links we add correspond to $\alpha$-thin subsets of \textit{hyper-links}, rather than standard links of size 2. We utilize \Cref{lem:gamma_restriction} and \Cref{thm:Steiner-tree-FPT} to compute the costs of all hyper-links of constant size at most $\gamma$. Then, in each iteration, we greedily select an $\alpha$-thin subset of hyper-links to add to our solution, dropping links from the initial directed solution while maintaining feasibility. The collection is chosen so as to minimize the ratio of costs between the hyper-links added and the directed links which can be dropped as a result of adding these hyper-links. 
At the end of the algorithm, we include all links in each full component corresponding to chosen hyper-links.

In \Cref{sec:decompandopt}, we describe how we can prove an optimization theorem so that each step can be executed in polynomial time. Also in~\Cref{sec:decompandopt} we discuss the main decomposition theorem for Hyper-SRAP, which is necessary for proving the desired approximation guarantee. 

The full details of the relative greedy algorithm are described in \Cref{app:sec:relgreedyalgo}, where we prove \Cref{thm:main_theorem}.

\subsection{An $R$-special 2-approximate solution}\label{sec:rspecial}


In \Cref{app:sec:2approx}, we show that for any complete instance of SRAP, we can compute a directed 2-approximate arborescence solution which is only incident on terminals. We call a directed solution which satisfies the properties in~\Cref{thm:2approx} an \textbf{$R$-special} solution. Let $\OPT$ denote the cost of the optimal solution.
\begin{restatable}{theorem}{twoApprox}\label{thm:2approx}
    Given a complete instance of SRAP, there is a polynomial time algorithm which yields a directed solution $\vecF$ of cost at most $2\OPT$ such that: 

    \begin{enumerate}
        \item $\vecF$ is only incident on the terminals $R$
        \item $(R,\vecF)$ is an $r$-out arborescence.
        \item $(R, \vecF)$ is planar when $V(H)$ is embedded as a circle in the plane.
        \item For any $v \in V$, no two directed links in $\delta^+_{\vecF}(v)$ go in the same direction along the ring.
    \end{enumerate}
\end{restatable}

The proof of \Cref{thm:2approx} is significantly more involved than the directed 2-approximation for standard WRAP. Unlike in WRAP, the optimal solution for SRAP may use links between Steiner nodes outside the ring in order to increase the connectivity between pairs of terminals in the ring. Nonetheless, using analagous techniques to~\cite{RZZ2022} for STAP, we can obtain a 2-approximate solution consisting of directed links only between ring nodes. Essentially, taking a directed cycle on the nodes joined by each full component of the optimal solution yields a directed solution of at most 2 times the cost. We can do the same for SRAP, yielding:

\begin{lemma}\label{lem:2appx_cycles}
    Given a complete instance of SRAP, there is a directed solution of cost at most $2\OPT$ consisting of a collection of directed cycles on the ring nodes. 
\end{lemma}

At this point we could iteratively shorten the solution to obtain a 2-approximate planar $r$-out arborescence as in~\cite{TZ2022new}. However, this is not sufficient for our purposes as this directed solution still uses Steiner nodes \textit{inside} the ring. 

Hence, to prove \Cref{thm:2approx}, we will begin with the 2-approximate directed solution consisting of a collection of directed cycles on the ring nodes as guaranteed by \Cref{lem:2appx_cycles}. Then, we prove a ``cycle merging lemma" which allows us to iteratively merge these cycles to eventually obtain a feasible directed solution which is a \textit{single} directed cycle of at most the cost. This yields: 

\begin{lemma}\label{lem:single-cycle}
    Given a complete instance of SRAP, let the optimal solution have cost $\OPT$. Then there exists a directed solution of cost at most $2\OPT$ whose links form a single directed cycle containing $R$.
\end{lemma}

Finally, we can use the links added in the second metric completion step to shortcut over the Steiner nodes in the ring to obtain a directed cycle solution that only touches terminals. 

\begin{lemma}\label{lem:terminal-cycle}
    Given an instance of SRAP, if the optimal solution has cost $\OPT$, then there is a directed solution of cost at most $2\OPT$ whose links consist of a single directed cycle with node set $R$.
\end{lemma}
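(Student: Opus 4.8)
The plan is to take the single directed cycle $\vecF$ through $r$ produced by \Cref{lem:single-cycle} and simply ``shortcut'' past every Steiner node appearing on it. Write the cycle as a cyclic sequence of ring vertices $v_1 = r, v_2, \dots, v_m, v_1$, where consecutive pairs are directed links of $\vecF$. Some of the $v_i$ may be Steiner nodes (in $V(H) \setminus R$). Delete all of these, obtaining the subsequence $w_1 = r, w_2, \dots, w_t$ of the terminals in order of first appearance along the cycle, and replace each maximal stretch $w_j, v_{i_1}, \dots, v_{i_s}, w_{j+1}$ of the old cycle (with all $v_{i_\ell}$ Steiner) by a single directed link $(w_j, w_{j+1})$. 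The key point is that, after the second metric completion performed in \Cref{sec:complete-instances}, such a directed link is guaranteed to exist in the instance with cost at most the total cost of the directed path $w_j \to v_{i_1} \to \dots \to v_{i_s} \to w_{j+1}$ it replaces. Summing over all stretches, the resulting directed cycle on $R$ has cost at most $c(\vecF) \le 2\OPT$.

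The steps, in order, would be: (i) apply \Cref{lem:single-cycle} to get the single directed cycle $\vecF$ through $r$ with $c(\vecF) \le 2\OPT$; (ii) record the cyclic order of ring vertices it visits and contract out the Steiner nodes as described, invoking metric completeness of the directed links to bound the cost of each replacement link; (iii) argue feasibility: the new object is still a single directed cycle, and it is incident to \emph{every} terminal $r' \in R$ (since each terminal on the old cycle survives, and every terminal lies on the old cycle because $\vecF$ is feasible and feasibility of a directed solution that is a single cycle through $r$ forces it to reach all of $R$ — equivalently, via \Cref{lem:hyperlinksfeasible} applied to the associated undirected cycle, the hyper-link on its vertex set must cover all dangerous ring-cuts, hence contain all terminals); and (iv) conclude that this directed cycle on $R$ is a feasible directed solution of cost at most $2\OPT$.

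One subtlety to handle carefully: a terminal might in principle be visited more than once by $\vecF$ (it need not be a simple cycle), or conversely a Steiner stretch might wrap around; I would address this by keeping the cycle as a closed walk and only merging consecutive Steiner-node runs, so that the output remains a single closed directed walk on $R$ — which is all that is claimed (the lemma says ``a single directed cycle with node set $R$'', not a simple cycle), and which suffices for feasibility since it still covers every dangerous ring-cut that $\vecF$ covered. I also need that shortcutting can only \emph{increase} the set of covered cuts or leave it unchanged — but here it is cleaner to note directly that the new cycle passes through all of $R$, and a directed cycle through all terminals covers every dangerous ring-cut (each such cut separates some terminal from $r$, so the cycle must enter it).

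The main obstacle I expect is not any of the above — it is purely bookkeeping — but rather making sure the cost bound is clean: specifically that the second metric completion really does supply a directed link $(w_j, w_{j+1})$ of cost at most the cost of the Steiner path it replaces, with consistent orientation along the ring. Since that completion was defined in \Cref{sec:complete-instances} precisely as ``for every pair $u,v$ with a directed path from $u$ to $v$, add $(u,v)$ with cost the shortest such path'', the bound is immediate, and there is no genuine difficulty; the lemma follows in a few lines.
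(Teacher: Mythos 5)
Your proof is correct and takes essentially the same approach as the paper: take the single directed cycle from \Cref{lem:single-cycle}, shortcut past the Steiner nodes, and invoke the second metric completion step to bound the cost of each replacement link. The paper's version is terser but identical in substance; your additional checks (that every terminal appears on the cycle, that a directed cycle through all of $R$ covers every dangerous ring-cut) are correct and are implicitly relied on in the paper's one-line feasibility claim.
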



Having shown that there exists a directed 2-approximate solution to any complete SRAP instance which is incident only on the terminals $R$, we can now shorten the directed links in this solution to obtain the desired arborescence structure. This follows directly from the planar and structural properties that Traub and Zenklusen show hold for any non-shortenable directed WRAP solution~\cite[Theorem 2.6]{TZ2022new}, and our \Cref{thm:2approx} follows.


\subsection{Decomposition and Optimization Theorems}\label{sec:decompandopt}

An $R$-special directed solution is necessary because it allows us to leverage a decomposition result on directed solutions with respect to the optimum. Because of the decomposition result, it can be argued that every iteration of the algorithm will find an improving local move as long as the current solution is expensive. Traub and Zenklusen prove a decomposition result of this kind to bound the approximation ratio of the relative greedy algorithm for WRAP~\cite{TZ2022new}. In our setting, we need an analagous decomposition theorem for the hyper-links which correspond to the full components in the optimal SRAP solution. We prove the decomposition theorem for arbitrary Hyper-SRAP instances with respect to any $R$-special directed solution. 

First we need to define the notion of an $\alpha$-thin collection of hyper-links. Recall that $\mathcal{C}'$ is the set of minimum cuts of the ring $H$ which do not include the root.

\begin{definition}
    A collection of hyper-links $K \subseteq \mathcal{L}$ is $\alpha$-thin if there exists a maximal laminar subfamily $\mathcal{D}$ of $\mathcal{C}'$ such that for each $C \in \mathcal{D}$, the number of hyper-links in $K$ which cover $C$ is at most $\alpha$.
\end{definition} 


With this definition, we can state \Cref{thm:decomposition} for Hyper-SRAP.

\begin{restatable}[Decomposition Theorem]{theorem}{decompositionTheorem}\label{thm:decomposition}
Given an instance of Hyper-SRAP $(H = (V(H),E), R, \mathcal{L})$, suppose $\vecF_0$ is an $R$-special directed solution and $S \subseteq \mathcal{L}$ is any solution. Then for any $\varepsilon > 0$, there exists a partition $\mathcal{Z}$ of $S$ into parts so that:
\begin{itemize}
    \item For each $Z \in \mathcal{Z}$, $Z$ is $\alpha$-thin for $\alpha = 4\lceil{1/\varepsilon}\rceil$.
    \item There exists $Q \subseteq \vecF_0$ with $c(Q) \leq \varepsilon \cdot c(\vecF_0)$, such that for all $f \in \vecF_0 \setminus Q$, there is some $Z \in \mathcal{Z}$ with $f \in \text{drop}_{\vecF_0}(Z)$. That is, $\vecF_0 \setminus Q \subseteq \bigcup_{Z \in \mathcal{Z} } \drop_{\vecF_0}(Z)$.
\end{itemize}

\end{restatable}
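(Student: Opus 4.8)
The goal is to partition an arbitrary solution $S$ into $\alpha$-thin parts $\mathcal{Z}$ such that the directed links of $\vecF_0$ which can be dropped using these parts cover almost all of $\vecF_0$. The natural strategy is to follow the structure of the Traub--Zenklusen decomposition for WRAP, but to work throughout with the laminar family $\mathcal{D} \subseteq \mathcal{C}'$ witnessing thinness and with hyper-links rather than links. First I would fix a maximal laminar subfamily $\mathcal{D}$ of $\mathcal{C}'$; since $\mathcal{C}'$ is the family of ring $2$-cuts (which, embedded on the cycle, correspond to arcs of the ring), $\mathcal{D}$ can be taken to be a nested sequence of ``intervals'' of the ring not containing $r$, so $\mathcal{D}$ has a tree structure. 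For each cut $C \in \mathcal{D}$, a hyper-link $\ell \in S$ covers $C$ iff $\ell$ has a vertex in $C$ and a vertex outside $C$. The idea is to charge each covering hyper-link to the cut and then group hyper-links so that each cut receives at most $\alpha$ of them within any single group. Because $\mathcal{D}$ is laminar of ``depth'' governed by the number of nested intervals, and each hyper-link covers a contiguous range of such cuts (after $\gamma$-restriction, each hyper-link touches at most $\gamma$ ring nodes, hence spans a bounded number of maximal laminar cuts in a controlled way), one can color the hyper-links with $O(1/\varepsilon)$ colors so that within each color class every cut in $\mathcal{D}$ is crossed at most $4\lceil 1/\varepsilon\rceil$ times — this is where the constant $\alpha = 4\lceil 1/\varepsilon \rceil$ comes from, presumably via a ``$q$-out-of-every-window'' partitioning argument as in \cite{TZ2022new}, applied to the laminar (tree-like) order on $\mathcal{D}$ rather than a linear order.

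The second bullet is where the interaction with $\vecF_0$ enters. I would use the $R$-special structure of $\vecF_0$: $(R,\vecF_0)$ is a planar $r$-out arborescence, so each arc $f = (u,v) \in \vecF_0$ is associated with the ring-cut ``below'' $f$, i.e. the set of vertices $v$ separates from $r$ in the arborescence; the planarity and the property that no two out-arcs of a vertex go the same direction mean these cuts form a laminar family compatible with $\mathcal{D}$. The notion $\drop_{\vecF_0}(Z)$ (defined in Section~\ref{sec:dropping_dir}) should be: $f$ can be dropped once $Z$ covers the ring-cut associated with $f$ together with enough structure to reroute. So the task reduces to: show that for all but an $\varepsilon$-fraction (by cost) of arcs $f \in \vecF_0$, the part $Z$ containing the hyper-link(s) of $S$ responsible for covering $f$'s cut already ``drops'' $f$. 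Since $S$ is feasible, every dangerous ring-cut, in particular every cut associated with an arc of $\vecF_0$, is covered by some hyper-link of $S$; the arc $f$ lands in the drop-set of whichever part $Z$ contains a witnessing hyper-link. The exceptional set $Q$ arises from arcs whose associated cut is covered only by a hyper-link whose inclusion in its part would violate the dropping conditions (e.g. the reroute path is not monotone, or the hyper-link crosses $f$'s cut but the resulting mixed solution loses feasibility at some other cut); one bounds $c(Q)$ by charging these bad arcs to the laminar cuts in $\mathcal{D}$ where over-crossing forced a split, using that $\vecF_0$ is a $2$-approximation and the splitting happens only $O(\varepsilon)$-often relative to cost.

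The main obstacle, I expect, is making the dropping argument work with hyper-links instead of links. In WRAP a link is an interval on the ring and ``$f$ can be dropped given link $\ell$'' is a clean planarity/crossing condition; with a hyper-link $\ell$ of size up to $\gamma$, $\ell$ induces a whole full component (a Steiner tree) and covers a union of ring-cuts corresponding to the gaps between consecutive ring-vertices of $\ell$. I would handle this by decomposing each hyper-link's ``coverage footprint'' into its constituent elementary covered cuts and treating the hyper-link as if it were the bundle of links spanning consecutive ring-vertices it joins — Lemma~\ref{lem:hyperlinksfeasible} on the hyper-link intersection graph is the tool that legitimizes replacing the hyper-link by such a connectivity structure. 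The bookkeeping to ensure that bundling does not blow up the per-cut crossing count beyond $4\lceil 1/\varepsilon\rceil$, and that the drop-sets still tile $\vecF_0 \setminus Q$, is the delicate part; I would prove it by first establishing the theorem for the case where all hyper-links have size $2$ (recovering the WRAP argument of \cite{TZ2022new} essentially verbatim), and then lifting to general $\gamma$-restricted hyper-links by the intersection-graph reduction, absorbing the size-$\gamma$ blowup into the constant $4$ and into $\gamma(\varepsilon)$.
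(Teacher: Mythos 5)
Your proposal recognizes the right ingredients (following \cite{TZ2022new}, laminar intervals, the $R$-special arborescence structure, the hyper-link intersection graph), but the mechanism you propose has a genuine gap: you want to color the hyper-links of $S$ directly so that each color class is $\alpha$-thin, and then hope the drop-sets work out. This decouples two requirements that the paper's proof handles jointly, and the decoupled version does not obviously work. By the characterization of dropping (Lemma~\ref{lem: whencanIdrop}), an arc $(u,v) \in \vecF_0$ is in $\drop_{\vecF_0}(Z)$ only if $Z$ contains a \emph{path} in the hyper-link intersection graph from a hyper-link touching $v$ to one touching a $v$-good vertex. A naive thinness-driven coloring of hyper-links can easily scatter such a chain of intersecting hyper-links across several parts, in which case \emph{no} part drops $(u,v)$, and you have no way to bound the mass of such arcs by $\varepsilon \cdot c(\vecF_0)$. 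Your description of the exceptional set $Q$ as ``arcs whose cut is covered only by a hyper-link that would violate dropping conditions'' is vague precisely because this coupling is missing.

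The paper's proof resolves this with a two-level structure you do not propose. First, $S$ is partitioned into \emph{festoons} (maximal chains of pairwise-crossing hyper-links with laminar spans), and it is a property of festoons --- not of $\gamma$-restriction --- that any ring $2$-cut is covered by at most $4$ hyper-links of a festoon; this is where the hard-coded $4$ in $\alpha = 4\lceil 1/\varepsilon\rceil$ comes from, and it is independent of $\gamma$ (so your idea of ``absorbing the size-$\gamma$ blowup into the constant $4$'' is a misdiagnosis --- $\gamma$-restriction matters only for the polynomial-time dynamic program, not for the decomposition constants). Second, a \emph{dependency graph} is built whose nodes are festoons and whose arcs encode, for each terminal $v$, the minimal chain of festoons $P_v$ needed to reach a $v$-good vertex; the $R$-special structure forces this graph to be a branching. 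The $O(1/\varepsilon)$-coloring is then applied to the \emph{arcs of this branching} (by depth modulo $q = \lceil 1/\varepsilon\rceil$), not to the hyper-links themselves; removing the cheapest color class gives $Q$, and each surviving connected component contains entire paths $P_v$ (so dropping goes through) while being tangled with at most $q$ paths along any directed path (so, via the festoon bound, it is $4q$-thin). Without the festoon/dependency-graph layer, the thinness and drop-coverage arguments do not compose, and the proof as you sketch it does not close.
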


In the above theorem, for a collection of hyper-links $K$ and an $R$-special directed solution $\vecF_0$, the notation $\drop_{\vecF_0}(K)$ denotes a set of directed links from $\vecF_0$ which can be dropped while preserving feasibility of $\vecF_0 \cup K$. For each directed link $f \in \vecF_0$, we describe a collection of dangerous ring-cuts for which $f$ is ``responsible", denoted $\mathcal{R}(f)$, and defined formally in \Cref{app:sec:dropping_dir}. 

Then $$\drop_{\vecF_0}(K) := \{f \in \vecF_0 : |\delta_{K}(C)| \geq 1 \text{ for all } C \in \mathcal{R}_{\vecF_0}(f)\}$$ is the set of all directed arcs such that the cuts they are responsible for are covered by $K$. 

This definition was used by Traub and Zenklusen~\cite{TZ2022new} for WRAP. They show that a link $(u,v)$ can be dropped if and only if $v$ is connected to a ``$v$-good'' vertex in the link-intersection graph restricted to $K$. In~\cite{TZ2022new}, a $v$-good vertex is a vertex which is not a descendant of $v$ with respect to the initial 2-approximate arborescence.

This is no longer the case in our setting, since it is possible to drop $(u,v)$ even if $K$ connects $v$ to some Steiner node in the ring. Hence, we redefine the notions of $v$-good and $v$-bad to characterize when a directed link can be dropped in our setting.
We prove the following.

\begin{lemma}\label{lem: whencanIdrop}
For a collection of hyper-links $K$, a directed link $(u,v)$ is in $\drop_{\vecF_0}(K)$ if and only if $\Gamma(K)$ contains a path from a hyper-link containing $v$ to a hyper-link containing a $v$-good vertex $w$. 
\end{lemma}

Note that \Cref{lem: whencanIdrop} does not necessarily hold if $\vecF_0$ is not $R$-special, since a directed link entering a Steiner node could be droppable even if $K$ has no hyper-links containing it. We also show that an $R$-special solution $\vecF_0$ can be augmented with artificial links to obtain a $V(H)$-special solution $\vecF'$ so that the set of $v$-bad nodes in $\vecF_0$ correspond to the set of descendants of $v$ in the arborescence $\vecF'$. 




We will use our characterization of when a directed arc is droppable to prove \Cref{thm:decomposition}. We follow the approach in~\cite{TZ2022new} which proves the result when all hyper-links have size 2 and $R = V(H)$. We construct a ``dependency graph" which allows us to partition the links of $S$ into the desired $\alpha$-thin pieces. In our hyper-link setting, the nodes of the dependency graph correspond to ``festoons" composed of hyper-links rather than festoons of links of size 2.



\Cref{lem:gamma_restriction} and \Cref{thm:Steiner-tree-FPT} allow us to convert a given instance of SRAP into an equivalent Hyper-SRAP instance efficiently. It also ensures that each local move of the algorithm runs in polynomial time. In each local move, the algorithm chooses amongst all $\alpha$-thin collections of hyper-links $K$, the choice which minimizes the ratio between the cost of the hyper-links in $K$, and the cost of the directed links which will be dropped as a result of adding $K$ to the solution. We show that this operation can be performed in polynomial time as long as all hyper-links have size at most $\gamma$.  

\begin{theorem}\label{thm:optimization}
    Given an instance of $\gamma$-restricted Hyper-SRAP, an $R$-special directed solution $\vecF_0$, a set of directed links $\vecF \subseteq \vecF_0$, and an integer $\alpha \geq 1$, there is a polynomial time algorithm which finds a collection of hyper-links $K$ minimizing $$\frac{c(K)}{c(\drop_{\vecF_0}(K)\cap \vecF)}$$ over all $\alpha$-thin subsets of hyper-links.
\end{theorem}

\Cref{thm:2approx,thm:decomposition,thm:optimization} together are essentially enough to employ a relative greedy strategy to obtain a $(1+\ln{2} + \varepsilon)$-approximation for SRAP. Beginning with an $R$-special 2-approximate directed solution, we iteratively add to this solution a greedily chosen $\alpha$-thin collection of hyper-links in each step. When a particular collection of hyper-links $K$ is chosen to be added, the directed links in $\drop_{\vecF_0}(K)$ are dropped, and this process is repeated until $\vecF_0$ becomes empty. 


\subsection{A Local Search Improvement for $k$-SAG}
Additionally, in the case that all ring nodes are terminals, we can use a local search framework introduced in~\cite{TZ2022} to give an algorithm with an improved approximation guarantee of $(1.5+\varepsilon)$. Together with \Cref{lem:redtoSRAP}, this yields:

\approxForSAG*

The key idea in the improvement is to not only consider dropping links from the initial directed solution, but to drop undirected links which were added in previous iterations by associating to each undirected link a witness set of directed links which indicate when it can be dropped.

We are able to do this in the case that $V(H) = R$ because in this case, \textit{any} feasible directed solution can be transformed into an $R$-special solution of at most the cost. However, this is not the case for general SRAP.



This illustrates why it may be surprising that \Cref{thm:2approx}, our key technical contribution, holds. While it is not true that any feasible directed solution can be made $R$-special, the proof of \Cref{thm:2approx} shows that a directed solution which consists of a collection of directed cycles on the ring \textit{can} be made $R$-special, and this is enough to prove the desired guarantees with respect to the optimum.

\section{Reductions to SRAP}\label{app:sec:srap-reductions}
In this section, we prove \Cref{lem:redtoSRAP}.

\redtoSRAP*


Consider an instance of the $2$-SCAP problem, where we are given a graph $H$ which is Steiner 2-edge-connected for a set of terminals $R$, and recall that $\mathcal{C}'' = \{C \subseteq V \setminus r: |\delta(C)| = 2, C \cap R \neq \emptyset\}$ is the set of dangerous cuts to be covered.

We will show that the connected component of $H$ which contains $R$ can be assumed to be genuinely 2-edge-connected, rather than merely Steiner 2-edge-connected.

\begin{lemma}
    We may assume without loss of generality that the connected components of $H$ consist only of isolated Steiner nodes, and a single 2-edge-connected loopless component $H'$ containing all of the terminals $R$.
\end{lemma}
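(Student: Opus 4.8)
The plan is to clean up the instance of $2$-SCAP in two stages: first remove useless Steiner nodes, then argue that what remains containing the terminals can be taken to be $2$-edge-connected and loopless.

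First I would handle the connected components of $H$. Since $H$ is Steiner $2$-edge-connected on $R$, any two terminals lie in the same connected component of $H$; call that component $H_0$. Every other component of $H$ contains no terminal. A component with no terminal and no links incident to it would be entirely useless, but in general it could still be used as a ``pass-through'' by links of $L$. However, links internal to such a Steiner component (or reaching into it) can be contracted/short-circuited: if a full component of a solution passes through a terminal-free component $D$ of $H$, then the edges of $D$ it uses only serve to connect the links of the solution to each other, so one may as well treat the Steiner nodes of $D$ as ordinary Steiner nodes not attached to $H$ — i.e., delete the edges of $D$ from $E$ and keep its vertices only as isolated Steiner nodes available for links. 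More carefully, one contracts each terminal-free component $D$ to a single Steiner vertex (which does not change any dangerous cut, since $D$ contributes no $2$-cut separating a terminal from $r$ once it has no terminal), and then, since a single vertex contributes nothing to $E$, it is an isolated Steiner node. This leaves $H_0$ plus isolated Steiner nodes. I expect this step to be routine once one observes that dangerous cuts are unaffected by operations inside terminal-free regions.

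Next, within $H_0$, I would remove loops (they never appear in any cut $\delta(C)$, so deleting them changes nothing) and then argue $2$-edge-connectedness. The component $H_0$ is connected and Steiner $2$-edge-connected on $R$. Take any bridge $e$ of $H_0$; its removal splits $H_0$ into two pieces $X$ and $Y$. If both $X$ and $Y$ contained a terminal, then $\delta(X)=\{e\}$ would be a $1$-cut separating two terminals, contradicting Steiner $2$-edge-connectivity. So one side, say $Y$, is terminal-free. Then $Y$ together with $e$ is a terminal-free ``pendant'' subtree-like region, and by the same contraction argument as above we can absorb it: contract $Y$ to a point and delete the now-loop/bridge appropriately, or simply observe that the vertices of $Y$ may be demoted to isolated Steiner nodes. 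Iterating over all bridges (and, by a $2$-cut argument, over all $1$-edge-connected ``ears'' hanging off terminal-free parts) leaves a subgraph $H'$ in which the component containing $R$ is bridgeless, hence $2$-edge-connected, and loopless, while all removed material has become isolated Steiner nodes. Throughout, the family $\mathcal{C}''$ of dangerous cuts of the reduced instance is in cost-preserving bijection with that of the original, so an $\alpha$-approximate solution transfers back.

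The main obstacle is making precise the claim that ``terminal-free pendant regions can be demoted to isolated Steiner nodes'' without changing the approximation guarantee: one must check both directions — that any solution to the reduced instance is feasible for the original (clear, since we only added back $0$-cost edges / structure that can only help), and that any solution to the original can be transformed, at no increase in cost, into one that does not rely on the removed edges of $H$. For the latter, the key point is that a full component routed through a terminal-free contracted region can be rerouted to use only links of $L$, because the Steiner nodes of that region remain present (as isolated Steiner nodes) and the region's $H$-edges were only providing connectivity that the full component's own link-tree can provide directly; formally this uses that feasibility depends only on which terminal sets are joined (\Cref{lem:feasible-iff-covered} in spirit, applied to the $2$-SCAP cut family $\mathcal{C}''$). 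Once this rerouting lemma is in hand, the rest is bookkeeping.
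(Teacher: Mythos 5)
Your high-level plan — identify the component $H'$ containing $R$, contract terminal-free components to isolated Steiner nodes, then eliminate $1$-cuts iteratively — is exactly the paper's. The handling of terminal-free components is fine. The $1$-cut step is where the proposal goes wrong. Your ``alternative'' of \emph{demoting} the vertices of the terminal-free side $Y$ to isolated Steiner nodes (i.e., deleting $E(Y)$ and the bridge $e=(u,v)$, $u\in Y$) does not preserve the instance. Deleting $e$ strictly decreases $\abs{\delta_E(C)}$ for any cut $C$ with $v \in C$ and $C \cap Y = \varnothing$; such a cut may have had $\abs{\delta_E(C)}\geq 3$ originally (hence imposed no constraint) but only $2$ afterward, creating a new dangerous cut. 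This can even make a feasible instance infeasible. Concretely, let $H'$ be the triangle $t_1 v t_2$, $Y=\{k\}$ with bridge $(k,v)$, $R=\{t_1,t_2\}$, $r=t_1$, and links $\ell_2=(v,t_2)$, $\ell_3=(k,t_1)$. The only dangerous cuts are $\{t_2\}$ (covered by $\ell_2$) and $\{t_2,v,k\}$ (covered by $\ell_3$), so $\{\ell_2,\ell_3\}$ is feasible. After demoting $k$, the cut $\{t_2,v\}$ has $\abs{\delta_{E'}(\{t_2,v\})}=2$ and is covered by neither link, so the demoted instance is infeasible. The asserted ``cost-preserving bijection'' between dangerous-cut families thus fails for demotion.

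The correct operation, and the one the paper performs, is to \emph{contract} $Y\cup\{v\}$ to a single vertex of $H'$. This is valid because every dangerous cut of $H'$ either contains $Y\cup\{v\}$ or is disjoint from it: the paper argues that a dangerous cut containing $v$ must contain $Y$ (via $C\cup Y$), and one should also check the converse direction that a dangerous cut meeting $Y$ must contain $v$ — this follows since otherwise $C\setminus Y$ would be a $1$-cut separating terminals. Your first alternative, ``contract $Y$ to a point and delete the now-loop/bridge appropriately,'' runs into the same ambiguity: after contracting $Y$ to a point $y$, the bridge $(y,v)$ is still there (it is not a loop), and the right move is to \emph{contract} it, merging $y$ with $v$, not to delete it.
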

\begin{proof}
    First, since $H$ contains a path between every pair of terminals, there must be a single connected component containing $R$, call it $H'$. So all other connected components contain only Steiner nodes.

    Now consider a connected component $K$ of Steiner nodes. Observe that any dangerous cut $C$ of size $\abs{\delta_E(C)} = 2$ separating terminals must either contain $K$, or be disjoint from $K$. Otherwise, the cut $C \cup K$ is still dangerous, but $\abs{\delta_E(C \cup K)} < 2$, contradicting the Steiner 2-edge-connectivity of $H$. Hence, $K$ may be contracted into a single isolated Steiner node. 

    Now, if $H'$ contains any cut $K$ of size $\abs{\delta_E(K)} = 1$, it cannot separate terminals, so assume $K \cap R = \varnothing$. Denote the edge crossing this cut as $\delta_E(K) = \{(u, v)\}$ with $u \in K$, $v \not \in K$. Similarly to above, we observe that any dangerous cut $C$ containing $v$ must contain $K$. Otherwise, the cut $C \cup K$ is still dangerous, but $\abs{\delta_E(C \cup K)} < 2$. So, again, we may contract $K \cup v$ without changing the structure of cuts to be covered. 

    Such a contraction removes at least one cut of size 1, so we may repeat this iteratively until no cuts of size 1 remain. This leaves us with $H'$ being a 2-edge-connected graph (and observe that neither of the contraction operations we performed could have introduced a loop). 
\end{proof}

With the above lemma, it is easy to see that any dangerous cut $C$ of $H$ is obtained by taking a dangerous cut of $H'$ (that is, a 2-cut $C \subseteq V(H')$ which separates terminals), and adding in some Steiner nodes in $V \setminus V(H')$. At a high level, this allows us to replace $H'$ with a different graph with the same cut structure without meaningfully changing the problem.

We can now apply an easy extension of a theorem of Dinits, Karzanov, and Lomonosov on the cactus representation of the min-cuts of a graph.

\begin{theorem}[Cactus representation of min-cuts, \cite{DKL1976}]\label{app:thm:cactus-rep}
Let $G = (V, E)$ be a loopless graph. There is a cactus $\widehat{G} = (U, F)$ and a map $\phi: V \to U$ such that for every 2-cut of $\widehat{G}$ with shores $U_1$ and $U \setminus U_1$, the preimages $\phi^{-1}(U_1)$ and $\phi^{-1}(U \setminus U_1)$ are the two shores of a min-cut in $G$. Moreover, every min-cut of $G$ arises in this way. 
\end{theorem}

The following strengthening of \Cref{app:thm:cactus-rep} immediately follows by letting $u \in U$ be in $R'$ if and only if $\phi^{-1}(u)$ contains a node of $R$.

\begin{corollary}\label{app:cor:terminal_cactus}
    Let $G = (V,E)$ be a loopless graph with terminals $R \subseteq V$. There is a cactus $\widehat{G} = (U,F)$, a set $R' \subseteq U$, and a map $\phi: V \to U$ such that for every 2-cut of $\widehat{G}$ separating nodes of $R'$, with shores $U_1$ and $U\setminus U_1$, the preimages $\phi^{-1}(U_1)$ and $\phi^{-1}(U \setminus U_1)$ are the two shores of a min-cut in $G$ which separates terminals. Also, every min-cut in $G$ separating terminals arises in this way. 
\end{corollary}

Applying \Cref{app:cor:terminal_cactus} to $H'$, we may replace $H'$ by a cactus $\widehat{H'}$ such that there is a correspondence between 2-cuts which separate terminals in $H'$ and 2-cuts in which separate nodes of $R'$ in $\widehat{H'}$. All links in $G$ incident to a node $v$ of $H'$ will now be incident to the corresponding node $\phi(v)$ in $\widehat{H'}$. Thus, there is a correspondence between feasible solutions to 2-SCAP and the problem of 2-SCAP where the connected component of $H$ containing $R$ is a cactus. In particular, we have shown that 2-SCAP reduces to the following problem:

\begin{problem}[Steiner Augmentation of a Cactus]
We are given a cactus $H = (V(H), E)$, which is a subgraph of $G = (V,E \dot \cup L)$. The links $L$ have non-negative costs $c: L \to \mathbb{R}_{\geq 0}$. There is also a set of terminals $R \subseteq V(H)$. 

The goal is to select $S \subseteq L$ of cheapest cost so that the graph $H' = (V, E \cup S)$ has $3$ pairwise edge-disjoint paths between $u$ and $v$ for all $u,v \in R$.
\end{problem}

By the addition of zero-cost links (c.f Theorem 4 in~\cite{galvez2021cycle}), we can unfold the cactus further so that $H$ is a cycle. This is precisely the Steiner Ring Augmentation Problem, which we restate here:

\SRAP*

Each of these reductions can be performed in polynomial time. And, since the structure of dangerous cuts to be covered remains the same, any solution to the $2$-SCAP instance gives rise a solution of the same cost in the SRAP instance, and every solution in the SRAP instance arises in such a way. This proves the first half of \Cref{lem:redtoSRAP}.

The second half of \Cref{lem:redtoSRAP}, the reduction from $k$-SAG, follows in an almost identical (albeit simpler) fashion. We replace the $k$-edge-connected graph $H$ with a cactus $\widehat{H}$ using \Cref{app:thm:cactus-rep}, such that there is a correspondence between the min-cuts of $H$ and the 2-cuts of $\widehat{H}$. This time, all nodes in $\widehat{H}$ are terminals. We can again add zero cost links to replace $\widehat{H}$ with a cycle, yielding an instance of SRAP in which $R = V(H)$. 

\section{Complete Instances}\label{app:sec:complete-instances}
In this section, we define a notion of a \textbf{complete} SRAP instance and show that we may assume that we are working with instances of this form without loss of generality. This ensures that the necessary links are available in order to prove \Cref{thm:2approx}.

\begin{definition}
    An instance of SRAP is called a \textbf{complete instance} if for every $u, v \in V(H)$, there is a directed link $(u, v)$ whose cost is equal to the shortest directed path from $u$ to $v$, and for every directed link $(u, v)$, the instance contains all shortenings of $(u, v)$ with at most cost $c((u, v))$. 
\end{definition}
 
To obtain a complete instance, we will iteratively perform two types of completion steps on our instance: metric completion and shadow completion.

For the first metric completion, we place a new undirected link between each pair of vertices of the ring $u,v \in V(H)$, with cost equal to the shortest path from $u$ to $v$ which uses only links (if there is no link-path from $u$ to $v$ the cost is infinity). This does not affect the cost of the optimal solution, so we may assume without loss of generality that the instance is metric complete. After this step, there is an undirected link between every pair of vertices in the ring.

For shadow completion, for each link $\ell = (u,v)$ with $u,v \in V(H)$, we will add to the instance a collection of directed links known as the \textbf{shadows} of $\ell$. These will all have the same cost as $\ell$. The shadows of $\ell$ consist of the two directed links $(u,v)$ and $(v,u)$ as well as all shortenings of these two directed links. If $(u,v)$ is a directed link, then $(s,t)$ is a \textbf{shortening} of $(u,v)$ if $v = t$ and $s$ is a vertex on the path from $u$ to $v$ in $E \setminus e_r$. With this definition, it is not hard to see that a shadow of $\ell$ covers a subset of the cuts in $\mathcal{C}$ that $\ell$ covers. Hence, we may perform this step without affecting the cost of the optimal solution.

Finally, we do a second round of metric completion on the newly added directed links, similar to the first. For every pair of vertices $u,v \in V(H)$, if there is a directed path from $u$ to $v$, we add a directed link $(u,v)$ with cost equal to the cost of the shortest such path. 

After these operations, we are left with an instance of SRAP such that any solution to this instance can be converted into a solution to the original SRAP instance with the same cost. Hence, we may assume that these operations have been performed on the given SRAP instance without loss of generality. See \Cref{app:fig:complete-instance} for an example of this preprocessing.

Furthermore, we claim that any further iterations of these two preprocessing operations (metric completion and shadow completion) will not change the instance. 

\begin{restatable}{lemma}{completeinstance}\label{lem:complete-instance}
Any SRAP instance can be made complete by performing metric completion, then shadow completion, then a second metric completion. 
\end{restatable}

\begin{proof}
    Let $L$ denote the set of links in the original instance, and $L^1, L^2, L^3$ denote the links after each of the three preprocessing steps, respectively. We want to show that $L^3$ is a complete instance. Clearly the instance is already metrically complete, since the final preprocessing operation is a metric completion. So it suffices to show that all shortenings of links in $L^3$ already exist in $L^3$. 

    Consider any link $(u, v) \in L^3$. If $(u, v)$ did not arise from the second metric completion step, then it is in $L^2$ and all of its shortenings were added in the shadow completion step. 

    So we focus on the case in which $(u, v)$ arose as a metric completion of some path of links $(u=w_0, w_1), (w_1, w_2), \ldots, (w_{k-1}, w_k = v)$, where all links on this path are in $L^2$. Without loss of generality, suppose that $u$ lies to the left of $v$. Consider an arbitrary shortening of $(u,v)$, which takes the form $(s, v)$ for some $s$ lying between $u$ and $v$. There must be some link $(w_i, w_{i + 1})$ on the path such that $w_i$ lies to the left of $s$, and $w_{i + 1}$ to the right of $s$. Then $(s, w_{i+1})$ is a shortening of $(w_i, w_{i+1})$. Moreover, since $(w_i, w_{i + 1}) \in L^2$, then so is $(s, w_{i+1})$. 

    In particular, $L^2$ contains the path of links $(s, w_{i+1}), \ldots, (w_{k-1}, v)$ whose total cost is at most the cost of the path from $u$ to $v$. And hence, $L^3$ contains the shortening $(s, v)$ of $(u, v)$. 
\end{proof}

\begin{figure}[h]
\begin{centering}
\includegraphics[scale=0.7]{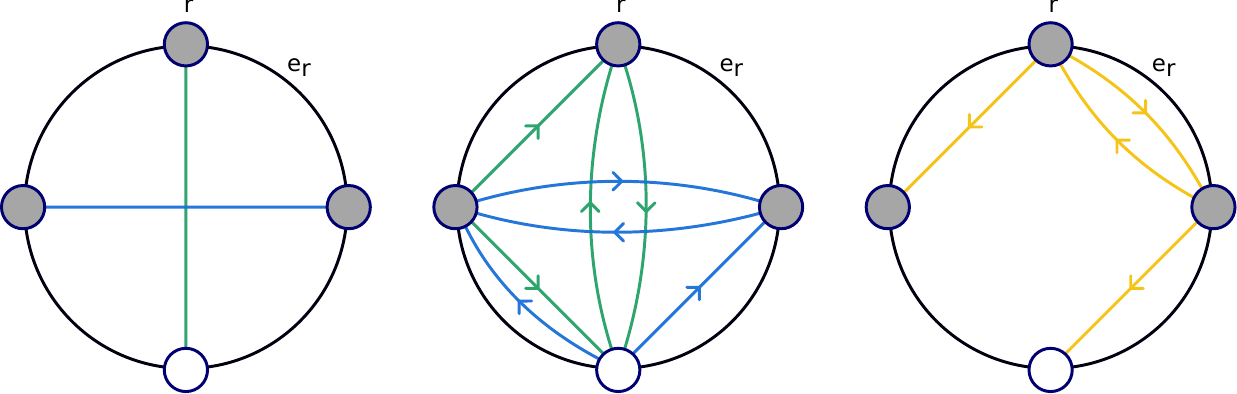} 
\caption{An example of a SRAP instance undergoing preprocessing steps to obtain a complete instance. The leftmost SRAP instance has two undirected links $\ell_1$ (green) and $\ell_2$ (blue) in $L$. There are no links added in $L^1$. The middle picture shows the directed links added in $L^2$, where the green arcs are shadows of $\ell_1$ and have cost $c(\ell_1)$, and the blue arcs are shadows of the $\ell_2$ with cost $c(\ell_2)$. Finally, the third picture shows the (undominated) directed links added in $L^3$ in yellow. Each of these arcs have cost $c(\ell_1) + c(\ell_2)$. The final completed SRAP instance contains all of these links.}\label{app:fig:complete-instance}
\end{centering}
\end{figure}

\section{A structured 2-approximate solution for SRAP}\label{app:sec:2approx}

Recall that for a rooted SRAP instance on ring $H$, the set of cuts to be covered is $$\mathcal{C} = \{C \subseteq V(H) \setminus r : |\delta_E(C)| = 2, C \cap R \neq \emptyset\}.$$ A set of directed links $\vecF$ is feasible if $\delta^-_{\vecF}(C) \geq 1$ for all $C \in \mathcal{C}$. 
In this section, we show that we can find a 2-approximate directed-link solution to SRAP which is only incident on terminals, and is highly structured. We call a directed solution which satisfies the conditions in Theorem~\ref{thm:2approx} an \textbf{$\boldsymbol{R}$-special} directed solution.

\twoApprox* 

\subsection{An $R$-special 2-approximate solution}
We now proceed with the 2-approximate $R$-special solution for SRAP. First, we make an existential claim about the existence of a 2-approximate directed solution which only touches terminals, then we show that we can find one efficiently.

In the proof of \Cref{app:lem:single-cycle}, we will first show that there is a 2-approximate directed solution which consists of a collection of directed cycles on nodes of the ring. We will then merge these cycles to obtain a directed solution which is a single directed cycle.

It is helpful to notice that an undirected cycle on nodes $A \subseteq V(H)$ covers the same set of dangerous ring-cuts that a directed cycle on $A$ covers, which is also the same set of dangerous ring-cuts that the hyper-link $A$ covers. Thus, the following lemma, proved in \Cref{app:sec:dropping_dir}, characterizes when a collection of such cycles is feasible. 

\begin{lemma}\label{lem:hyperlinksfeasible}
Suppose $(H, \mathcal{L}, R)$ is an instance of Hyper-SRAP with root $r$ and intersection graph $\Gamma$. Then $S \subseteq \mathcal{L}$ is feasible iff for each terminal $r' \in R$, there is a path between $r$ and $r'$ in the hyper-link intersection graph restricted to $S$.
\end{lemma}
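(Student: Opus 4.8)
### Proof plan for Lemma \ref{lem:hyperlinksfeasible}

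The plan is to prove both directions by connecting the "intersection‑graph connectivity" condition to the covering condition "every dangerous ring‑cut is covered," which is the definition of feasibility for Hyper‑SRAP. Fix the root $r$ and the ring $H = (V(H),E)$. Since $H$ is a cycle, the dangerous ring‑cuts $C \in \mathcal{C}$ are exactly the arcs of the cycle that avoid $r$ and contain at least one terminal; such an arc is determined by the two ring‑edges in $\delta_E(C)$. The key elementary observation I would establish first is: a hyper‑link $\ell$ covers the arc‑cut $C$ (i.e. $\ell \cap C \neq \emptyset$ and $\ell \cap \bar C \neq \emptyset$) if and only if, walking around the cycle, $\ell$ has a vertex strictly inside the arc $C$ and a vertex strictly outside it — equivalently $\ell$ "straddles" at least one of the two boundary edges of $C$. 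From this, two hyper‑links $\ell_1,\ell_2$ are intersecting (per the definition in the paper) precisely when the set of ring‑edges straddled by $\ell_1$ and the set straddled by $\ell_2$ either overlap or interleave cyclically; in particular, if $\ell_1$ and $\ell_2$ are intersecting then the union of ring‑edges they straddle, together with the cuts "between" them, behaves like the union of two crossing or overlapping arcs.

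For the forward direction ($S$ feasible $\Rightarrow$ path in $\Gamma|_S$ between $r$ and every $r'$), I would argue by contradiction. Suppose $S$ is feasible but for some terminal $r'$ there is no path in $\Gamma|_S$ from a hyper‑link containing $r$ to one containing $r'$. Let $Z$ be the union of all hyper‑links in $S$ lying in connected components of $\Gamma|_S$ that are reachable from a hyper‑link containing $r'$; so $r' \in Z$ and $r \notin Z$, and no hyper‑link of $S$ straddles the "boundary" of $Z$ in a way that would intersect a hyper‑link outside $Z$. More carefully, I would look at the cyclic arrangement of the ring vertices touched by hyper‑links in this $r'$‑side family: because no hyper‑link of that family intersects any hyper‑link outside it, the vertices they touch are confined to an arc $C$ of the cycle that contains $r'$ but not $r$ (here is where the non‑interleaving is used), and moreover no hyper‑link in all of $S$ straddles the two boundary edges of that arc — any such straddling link would have to intersect one of the $r'$‑side links. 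Then $C \in \mathcal{C}$ is a dangerous ring‑cut that no link of $S$ covers, contradicting feasibility of $S$ via the covering definition.

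For the reverse direction (path in $\Gamma|_S$ between $r$ and every $r'$ $\Rightarrow$ $S$ feasible), take any dangerous ring‑cut $C \in \mathcal{C}$ with a terminal $r' \in C$. By hypothesis there is a path $v_{\ell_0}, v_{\ell_1}, \dots, v_{\ell_m}$ in $\Gamma|_S$ with $r \in \ell_0$ and $r' \in \ell_m$. Then $\ell_m$ touches the inside of $C$ and $\ell_0$ touches the outside of $C$ (namely $r$), so walking along the path there must be a consecutive pair $\ell_i,\ell_{i+1}$ (or a single $\ell_i$) that straddles a boundary edge of $C$: formally, let $i$ be the largest index such that $\ell_i$ has a vertex outside $C$; if $i = m$ we are done since $\ell_m$ already straddles, otherwise $\ell_{i+1}$ lies entirely inside $C$ while $\ell_i$ has a vertex outside, and since $\ell_i,\ell_{i+1}$ are intersecting, $\ell_i$ must also have a vertex inside $C$ (a pure outside‑link cannot intersect a pure inside‑link across the boundary of an arc without itself entering the arc — this is exactly the interleaving case of the intersection definition). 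Either way some $\ell_j \in S$ covers $C$, so all cuts of $\mathcal{C}$ are covered and $S$ is feasible.

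The main obstacle is the geometric/combinatorial bookkeeping in the forward direction: correctly defining the arc $C$ associated to a connected component of $\Gamma|_S$ and proving it is genuinely an arc (i.e. the touched vertices do not "wrap around" to surround $r$) and that no link of $S$ straddles its boundary. This requires carefully unpacking the paper's definition of intersecting hyper‑links — in particular the interleaving clause — to rule out the configurations that would break the arc structure; one convenient way is to fix the cyclic order starting at $r$ and track, for the $r'$‑side family, the minimal arc containing all its vertices, then show any link straddling an endpoint edge of that arc would be forced (by interleaving or by sharing a vertex) to intersect some link of the family. The other direction is essentially a clean induction/extremal‑index argument once the straddling reformulation of "covers" is in place.
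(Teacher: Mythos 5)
Your proposal follows essentially the same route as the paper. The paper first proves an auxiliary statement (its Lemma~\ref{lem:link-intersection-covering}) characterizing exactly when a single dangerous ring-cut $C$ is covered: $\delta_K(C)\neq\varnothing$ iff there is a path in $\Gamma(K)$ from a hyper-link meeting $C$ to a hyper-link meeting $\bar C$; the forward implication is trivial, and the reverse implication uses precisely your observation that since $C$ is an interval, a hyper-link entirely inside $C$ cannot interleave with (or share a vertex with) one entirely outside $C$. Your reverse-direction extremal-index argument is a re-derivation of this. For the forward direction (feasible $\Rightarrow$ connected), the paper uses the same contradiction: take the set $W$ of ring vertices touched by the component of $\Gamma|_S$ reachable from $r'$; the interval $I_W$ of $W$ is a dangerous ring-cut containing $r'$ but not $r$, and any hyper-link of $S$ covering $I_W$ would have to intersect a hyper-link of that component, forcing it into the component and producing the contradiction (this is the same argument the paper makes explicit later in Lemma~\ref{lem: whencanIdrop}). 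You have correctly identified that the one real piece of work is verifying that a hyper-link with a vertex inside $I_W$ and a vertex outside $I_W$ must intersect some hyper-link of the $r'$-component; the needed case analysis is that either the inside vertex already lies in $W$ (shared vertex) or it is strictly between the extremes of $W$, in which case the connectivity of the component forces some reached hyper-link to have vertices on both sides of it, giving the interleaving. Your plan is sound and matches the paper's; the only small mislabel is the parenthetical ``here is where the non-interleaving is used'' attached to the claim that the touched vertices lie in an arc avoiding $r$ --- that part is immediate from $r\notin W$ and the definition of intervals avoiding $e_r$, and the interleaving analysis is needed only for the straddling claim you address afterward.
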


We now proceed with the proof. For a node set $A \subseteq V(H)$, denote by $I_A \subseteq V(H)$ the nodes in the minimal interval containing $A$ which does not contain $e_r$. 

\begin{lemma}\label{app:lem:single-cycle}
    Given a complete instance of SRAP, let the optimal solution have cost $\OPT$. Then there exists a directed solution of cost at most $2\OPT$ whose links form a single directed cycle containing $R$.
\end{lemma}
\begin{proof}
    Let $(U^*, F^*)$ be a full component of the optimal SRAP solution. Then $(U^*, F^*)$ is a tree. Starting from an arbitrary vertex in $U^* \cap V(H)$, we take an Euler tour of this tree traversing each link exactly twice. This induces an ordering of the ring nodes which are visited during the tour $a_1, \ldots, a_k$, such that the undirected link set $F = \{(a_1,a_2), \ldots, (a_{k-1},a_k),(a_k,a_1)\}$ has cost at most $2c(F^*)$ by metric completion. 

    Now, consider the directed link set $\vecF = \{(a_1,a_2), \ldots, (a_{k-1},a_k),(a_k,a_1)\}$. The cost of $\vecF$ is at most $c(F)$ since it consists of shadows of links in $F$. Hence $c(\vecF) \leq 2c(F^*)$. Furthermore, since $\vecF$ forms a directed cycle on the nodes joined by $F^*$, any cut covered by $F^*$ is also covered by $\vecF$. 

    Repeating this for each full component in the optimal solution yields a directed solution with cost at most $2\OPT$. Furthermore, this directed solution consists of a collection of directed cycles on the nodes of the ring.

    We now exploit the following lemma to obtain a directed solution with cost at most $2\OPT$ consisting of a single directed (not necessarily simple) cycle.

\begin{lemma}[Cycle Merging Lemma]\label{app:lem:cycle-merging}
    Consider a SRAP instance on the ring $H = (V(H), E)$. Let $\vecF_S$ and $\vecF_A$ be two directed cycles of links on nodes $S, A \subseteq V(H)$, respectively, where $r \in S$. If $S$ and $A$ are intersecting as hyper-links, then there exists a directed link set $\vecF_{S \cup A}$ which forms a directed cycle on $S \cup A$ of cost at most $c(\vecF_S) + c(\vecF_A)$.
\end{lemma}
\begin{proof}
    First observe that if $A \cap S \neq \varnothing$, we simply define $\vecF_{S \cup A} = \vecF_S \cup \vecF_A$, and the statement of the lemma is satisfied.

    So, suppose that $A \cap S = \varnothing$. Denote the cycle links as $\vecF_S = \{(s_1, s_2), \ldots, (s_{k-1}, s_k), (s_k, s_1)\}$, with $s_1 = r$, and $\vecF_A = \{(a_1, a_2), \ldots, (a_{\ell-1}, a_\ell), (a_\ell, a_1)\}$. Consider the interval $I_A$ of $A$. Since $S$ and $A$ are intersecting (as hyper-links), and $S$ contains the root $r$ while $r \not \in I_A$, it must be the case that the cycle formed by $\vecF_S$ leaves $I_A$ at least once. Let $(s_i, s_{i+1}) \in \vecF_S$ be a directed link leaving $I_A$. 
    
    Now consider the interval $I^* = I_{\{s_i, s_{i+1}\}}$. We claim that $\vecF_A$ leaves $I^*$ at least once. Indeed, since $(s_i, s_{i+1})$ leaves $I_A$, some $a \in A$ is in $I^*$. But all of $A$ cannot be contained in $I^*$, since then $(s_i, s_{i+1})$ leaving $I_A$ would imply that $s_i \in A$, contradicting our assumption that $A \cap S = \varnothing$. Let $(a_j, a_{j+1}) \in \vecF_A$ be a directed link leaving $I^*$. 

    We now define $\vecF_{S \cup A}$ to be the directed links along the cycle
    $$
        s_1, \ldots, s_i, a_{j+1}, a_{j+2}, \ldots, a_{j}, s_{i+1}, \ldots, s_1.
    $$
    See \Cref{app:fig:cycle-merging} for an example. Observe that $(s_i, a_{j+1})$ is a shortening of $(a_j, a_{j+1})$ and $(a_j, s_{i+1})$ is a shortening of $(s_i, s_{i+1})$. Therefore $c(\vecF_{S \cup A}) \leq c(\vecF_S) + c(\vecF_A)$ as desired. Moreover, $\vecF_{S \cup A}$ forms a cycle on $S \cup A$.
    \begin{figure}[h]
    \begin{centering}
    \includegraphics[scale=0.7]{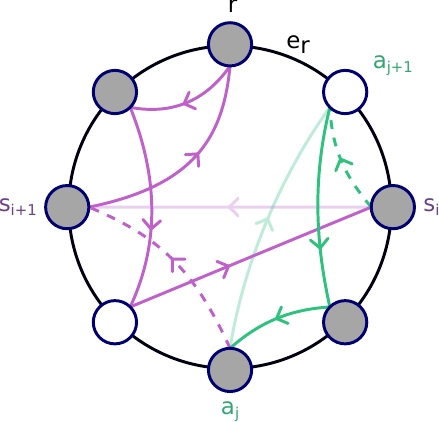} 
    \caption{An example of cycle merging, with $\vecF_S$ in purple and $\vecF_A$ in green. Note that $(s_i, s_{i+1})$ leaves the interval $I_A$, and $(a_j, a_{j+1})$ leaves the interval $I_{\{s_i, s_{i+1}\}}$. Dropping these two links, and adding the shortenings drawn as dotted arcs creates the single cycle $\vecF_{S \cup A}$.}\label{app:fig:cycle-merging}
    \end{centering}
    \end{figure}
\end{proof}

We can now use \Cref{app:lem:cycle-merging} to transform a directed solution that is a collection of directed cycles on ring nodes $A_1, \ldots, A_p$ into a single directed cycle on $A := \bigcup_{i=1}^p A_i$. Let $\vecJ$ denote the directed solution consisting of a collection of directed cycles on the nodes in the ring. Since $r$ is a terminal, it must be contained in some cycle $\vecF_{S_0}$ on $S_0 \subseteq V(H)$. We build a new directed solution starting with $\vecF_{S_0}$ by repeatedly apply \Cref{app:lem:cycle-merging}, cycle merging. In each step, choose a new cycle $\vecF_A$ from $\vecJ$ that has not yet been merged and such that $A$ is intersecting with $S_i$, and update $\vecF_{S_i}$ by merging: $\vecF_{S_{i+1}} \gets \vecF_{S_i \cup A}$. Once all cycles from $\vecJ$ are merged in this way, this yields a single directed cycle. It is feasible, since it is incident to all terminals, and its cost is at most that of $\vecJ$, which is at most $2\OPT$. 

It remains only to show that at every step, if there are un-merged cycles from $\vecJ$, then one of them is on a set of nodes which is intersecting with $S_i$. First, observe that $\vecF_{S_i}$ is a cycle on all of the nodes belonging to the cycles merged so-far, so it suffices to find some un-merged $\vecF_A$ in $\vecJ$ such that $A$ is intersecting with the nodes in one of the cycles merged so-far. 

Since $\vecJ$ is feasible, the corresponding collection of undirected cycles is also feasible. Hence, by \Cref{lem:hyperlinksfeasible}, the hyper-links on the vertices of the cycles in $\vecJ$ are connected in the hyper-link intersection graph. Now for any un-merged cycle $\vecF_{A'}$, look at the path in the hyper-link intersection graph from the hyper-link on $S_0$ to the hyper-link on $A'$. The first hyper-link along this path corresponding to an un-merged cycle $\vecF_A$ must be intersecting with a merged cycle, so is intersecting with $S_i$. 
\end{proof}

With the above lemma in hand, and due to our second metric completion step, we can now shortcut over the non-terminals in the cycle to obtain a 2-approximate directed solution which only touches terminals.

\begin{lemma}\label{app:lem:terminal-cycle}
    Given an instance of SRAP, if the optimal solution has cost $\OPT$, then there is a directed solution of cost at most $2\OPT$ whose links consist of a single directed cycle with node set $R$.
\end{lemma}
\begin{proof}
    This follows by taking the directed cycle solution $\vecF$ from \Cref{app:lem:single-cycle} and short-cutting through all non-terminal nodes. Since the third preprocessing step ensures that the directed links are metrically complete, replacing a path of directed links with a single link cannot increase the cost. Hence, the resulting solution has cost at most $c(\vecF) \leq 2\OPT$.
\end{proof}

Having shown that there exists a directed 2-approximate solution to any SRAP instance which is incident only on the terminals $R$, we now proceed to show how to compute one in polynomial time. In particular, we will prove that the optimal such solution can be found efficiently, and moreover, that the solution can be shortened to have additional structure.

Recall that the Weighted Ring Augmentation Problem (WRAP) is a special case of SRAP in which all nodes are terminals, and there are no nodes outside of the ring $H$. A directed solution $\vecF$ to WRAP is \textbf{non-shortenable} if it is feasible but deleting or strictly shortening any link $f \in \vecF$ results in an infeasible solution. Traub and Zenklusen proved that a non-shortenable directed solution to WRAP has a planar arborescence structure, and can be computed in polynomial time.

\begin{lemma}[Lemma 2.5 in~\cite{TZ2022new}]\label{app:lem:TZ2.5}
    A non-shortenable optimal directed solution to the WRAP problem can be found in polynomial time.
\end{lemma}

Reducing our problem to a WRAP instance and applying the above lemma yields the main theorem of this section:

\begin{proof}[Proof of \Cref{thm:2approx}]
    Consider a SRAP instance on the ring $H = (V(H), E)$ with terminals $R \subseteq V(H)$, root $r \in R$, and links $L$. We construct a new ring $H' = (R, E')$ on the terminals by iteratively replacing each non-terminal node in $V(H)$ with an edge between its two neighbors in $H$. We now create a WRAP instance on $H'$ by taking the subset $L'$ of links which are directed links incident only to $R$. 

    We apply \Cref{app:lem:TZ2.5} (Lemma 2.5 from~\cite{TZ2022new}) to get a non-shortenable optimal directed solution $\vecF$ to this WRAP instance. Since \Cref{app:lem:terminal-cycle} guarantees the existence of a directed 2-approximation which only touches terminals, this implies that $c(\vecF) \leq 2\OPT$. 

    Now consider $\vecF$ as a solution to the SRAP instance on $H$. Since $\vecF$ is non-shortenable when viewed as a solution to the WRAP instance on $H'$, Theorem 2.6 from~\cite{TZ2022new} shows that it satisfies the three conditions when viewed as a solution to $H$. It remains only to argue that $\vecF$ is actually feasible. This is immediate from the arborescence structure of $\vecF$: there is a path from $r$ to every terminal. 
\end{proof}

\section{A $(1+\ln{2} + \varepsilon)$-approximation for SRAP}\label{app:sec:relgreedyalgo}
In this section, we describe a polynomial time algorithm for SRAP which achieves an approximation ratio of $(1+\ln{2} + \varepsilon)$. 
Before writing the algorithm, we comment on a slight departure of our algorithm from the relative greedy algorithm for standard WRAP. 

Given any SRAP instance, we may convert it into an equivalent complete instance (see \Cref{app:sec:complete-instances}). We observe that for any directed link $f = (u, v)$ on the ring, there is a collection of undirected links whose coverage is at least that of $f$, and whose total cost is at most $c(f)$. We call this set $\kappa _f$. Indeed, if $f$ is a shadow of an undirected link $\ell$, then we may take $\kappa _f = \{\ell\}$.  Otherwise, $f$ arises from the metric completion of some directed path of links $(u = s_0,s_1), (s_1,s_2), \ldots, (s_{k-1},s_k = v)$. Each of these links in the directed path may themselves be shortenings of undirected links $\{\ell_1, \ldots, \ell_k\}$. In this case, $\kappa_f = \{\ell_1, \ldots, \ell_k\}$. Note in particular that $\kappa_f$ covers all of the dangerous ring-cuts that $f$ covers, so if $f$ is in some $R$-special solution $\vecF_0$, then $f \in \drop_{\vecF_0}(\kappa_f)$ (see \Cref{app:sec:dropping_dir} for a formal definition of $\drop$).

In order to show that our algorithm makes sufficient progress at each step, we must have that the cost of our mixed solution does not ever increase over the course of the algorithm. In the case of WRAP, this is immediate, since any directed link is a shadow of some single undirected link of the same cost, an $\alpha$-thin set. However, in our case, $\kappa_f$ may not be $\alpha$-thin, so we explicitly consider this as a separate case in each step of our algorithm. 
See \Cref{app:algo:localgreedy}.

\begin{algorithm}[t]
\caption{Relative greedy algorithm for SRAP}
\textbf{Input:} A complete instance of SRAP with graph $G = (V,E \dot \cup L)$, ring $H = (V(H),E)$, terminals $R \subseteq V(H)$ and $c: L \to \mathbb{R}$. Also an $\varepsilon > 0$.\\
\textbf{Output:} A solution $S \subseteq L$ with $c(S) \leq (1+\ln(2) + \varepsilon) \cdot c(\OPT)$.\\~

\begin{enumerate}
\item Compute a 2-approximate $R$-special directed solution $\vecF_0$ (\Cref{thm:2approx}).
\item Let $\varepsilon' := \frac{\varepsilon/2}{1+\ln 2 + \varepsilon/2}$ and $\gamma := 2^{\lceil{1/\varepsilon'}\rceil}$.
\item For each $A \subseteq V(H)$ where $|A| \leq \gamma$, compute the cheapest full component joining $A$ and denote the cost by $c_A$.
\item Create an instance of $\gamma$-restricted Hyper-SRAP on the ring $H = (V(H),E)$ with hyper-links $\mathcal{L} = \{\ell_A : A \subseteq V(H), |A| \leq \gamma\}$. Set the cost of hyper-link $\ell_A$ to be $c_A$.
\item Initialize $S_0 := \emptyset$
\item Let $\alpha:= 4 \lceil 4/\varepsilon \rceil$
\item While $\vecF_i \neq \emptyset$:
\begin{itemize}
    \item Increment $i$ by 1.
    \item Compute the $\alpha$-thin subset of hyper-links $Z_i \subseteq \mathcal{L}$ minimizing $\frac{c(Z_i)}{c(\drop_{\vecF_0}(Z_i) \cap \vecF_{i-1})}$. 
    \item If $\frac{c(Z_i)}{c(\drop_{\vecF_0}(Z_i) \cap \vecF_{i-1})} > 1$, then update $Z_i = \kappa_f$ for some $f \in \vecF_{i-1}$.
    \item Let $S_i := S_{i-1} \cup Z_i$ and let $\vecF_i := 
    \vecF_{i-1} \setminus \drop_{\vecF_0}(Z_i)$.
\end{itemize}
\item \textbf{Return} A SRAP solution with full components corresponding to the hyper-links in $S := S_i$.
\end{enumerate}
\label{app:algo:localgreedy}
\end{algorithm}

The proof of the following theorem follows from a standard analysis of the relative greedy algorithm as in~\cite{CN2013},~\cite{TZ2021}, and~\cite{TZ2022new}. We include it here for completeness. 

\begin{theorem}
    \Cref{app:algo:localgreedy} is a $(1+ \ln{2} + \varepsilon)$-approximation algorithm for SRAP. 
\end{theorem}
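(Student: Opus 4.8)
The goal is the standard relative-greedy analysis adapted to this setting, so I would structure the proof around comparing, in each iteration $i$, the cost of the chosen $\alpha$-thin set $Z_i$ against the drop it induces in the residual directed solution $\vecF_{i-1}$. The key invariant to maintain is that $S_i \cup \vecF_i$ is always a feasible mixed solution: this follows because $\vecF_0$ is feasible and $\drop_{\vecF_0}(Z_i)$ is by definition a set of directed links that can be removed while $Z_i$ (hence $S_i$) compensates for their coverage. I would then track the potential $c(\vecF_i)$, which starts at $c(\vecF_0) \le 2\OPT$ and decreases to $0$, and bound $c(S)$ in terms of how much potential is destroyed per unit cost spent.

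**The main technical input — the decomposition theorem.** The crucial step is to argue that in each iteration where $\vecF_{i-1} \neq \emptyset$, there exists an $\alpha$-thin set $Z$ with $\frac{c(Z)}{c(\drop_{\vecF_0}(Z) \cap \vecF_{i-1})}$ small — specifically, at most $\frac{c(\OPT) + \varepsilon' c(\vecF_0)}{c(\vecF_{i-1})}$ or similar. This is exactly what \Cref{thm:decomposition} gives: applying it to the (Hyper-SRAP) optimum $S^* = \OPT$ and the $R$-special solution $\vecF_0$ yields a partition $\mathcal{Z}$ of $\OPT$ into $\alpha$-thin parts together with a cheap set $Q \subseteq \vecF_0$ with $c(Q) \le \varepsilon' c(\vecF_0)$ such that $\vecF_0 \setminus Q \subseteq \bigcup_{Z \in \mathcal{Z}} \drop_{\vecF_0}(Z)$. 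Intersecting with $\vecF_{i-1}$, we get $\vecF_{i-1} \setminus Q \subseteq \bigcup_{Z \in \mathcal{Z}}(\drop_{\vecF_0}(Z) \cap \vecF_{i-1})$, so $\sum_{Z \in \mathcal{Z}} c(\drop_{\vecF_0}(Z) \cap \vecF_{i-1}) \ge c(\vecF_{i-1}) - c(Q) \ge c(\vecF_{i-1}) - \varepsilon' c(\vecF_0)$. An averaging argument over $\mathcal{Z}$ (using $\sum_{Z} c(Z) = c(\OPT)$) then produces a single $Z$ with ratio at most $\frac{c(\OPT)}{c(\vecF_{i-1}) - \varepsilon' c(\vecF_0)}$. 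Since the algorithm picks the \emph{minimizing} $\alpha$-thin set (via \Cref{thm:optimization}, which guarantees this is computable in polynomial time), its chosen $Z_i$ does at least this well — unless the minimum ratio exceeds $1$, in which case the algorithm falls back to $Z_i = \kappa_f$, and one checks $\kappa_f$ has ratio $\le 1$ since $c(\kappa_f) \le c(f) \le c(\drop_{\vecF_0}(\kappa_f) \cap \vecF_{i-1})$ as $f$ itself lies in that set. Either way, the cost spent in iteration $i$ is at most $\max\!\left(1, \frac{c(\OPT)}{c(\vecF_{i-1}) - \varepsilon' c(\vecF_0)}\right)$ times the potential destroyed.

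**Summing up.** With $\Delta_i := c(\drop_{\vecF_0}(Z_i) \cap \vecF_{i-1}) = c(\vecF_{i-1}) - c(\vecF_i)$, we get $c(Z_i) \le \Delta_i \cdot \max(1, \frac{c(\OPT)}{c(\vecF_{i-1}) - \varepsilon' c(\vecF_0)})$. Split the iterations into the ``greedy regime'' where $c(\vecF_{i-1}) - \varepsilon' c(\vecF_0) > c(\OPT)$ (ratio $< 1$, bounded by $1$) and the ``expensive regime'' where $c(\vecF_{i-1}) \ge c(\OPT) + \varepsilon' c(\vecF_0)$ so the $c(\OPT)/(\cdot)$ term dominates. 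Wait — more carefully: when $c(\vecF_{i-1})$ is \emph{large} the fraction $\frac{c(\OPT)}{c(\vecF_{i-1}) - \varepsilon' c(\vecF_0)}$ is small, so that regime contributes via a telescoping/integral bound $\int_{c(\OPT)}^{c(\vecF_0)} \frac{c(\OPT)}{t}\,dt = c(\OPT)\ln\frac{c(\vecF_0)}{c(\OPT)} \le c(\OPT)\ln 2$ using $c(\vecF_0) \le 2 c(\OPT)$; when $c(\vecF_{i-1})$ is small (below roughly $c(\OPT)$, up to the $\varepsilon'$ slack) we bound each step's contribution by $1 \cdot \Delta_i$, summing to at most $c(\OPT) + O(\varepsilon') c(\vecF_0) = c(\OPT)(1 + O(\varepsilon'))$. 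Adding these, $c(S) = \sum_i c(Z_i) \le c(\OPT)(1 + \ln 2 + O(\varepsilon'))$, and the choice $\varepsilon' = \frac{\varepsilon/2}{1+\ln2+\varepsilon/2}$ is calibrated so that the $(1+\varepsilon)$-loss from $\gamma$-restriction (\Cref{lem:gamma_restriction}) compounds with this to give exactly $1 + \ln 2 + \varepsilon$. Finally, feasibility of the returned solution follows since when $\vecF_i = \emptyset$ the mixed solution $S_i \cup \vecF_i = S_i$ is feasible, and each hyper-link corresponds to an actual full component of the stated cost.

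**Main obstacle.** The delicate point is handling the $\kappa_f$ fallback branch cleanly — ensuring the potential $c(\vecF_i)$ is genuinely monotone decreasing (so the telescoping is valid) even in iterations that invoke it, and ensuring that these ``wasteful'' iterations only occur in the cheap regime where a ratio of $1$ is affordable. One must verify that once the minimum $\alpha$-thin ratio exceeds $1$, we indeed have $c(\vecF_{i-1}) - \varepsilon' c(\vecF_0) < c(\OPT)$ (otherwise the decomposition argument would have furnished a ratio-$\le 1$ set, a contradiction), which confines all fallback steps to the regime contributing the linear $c(\OPT)(1+O(\varepsilon'))$ term rather than the logarithmic one. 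The rest is the by-now-routine relative-greedy bookkeeping from \cite{CN2013,TZ2021,TZ2022new}.
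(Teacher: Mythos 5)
Your proposal is correct and follows essentially the same route as the paper: apply \Cref{thm:decomposition} to the $\gamma$-restricted Hyper-SRAP optimum $\OPT_\gamma$ (not the unrestricted $\OPT$, but you account for this loss via \Cref{lem:gamma_restriction}), average over the partition $\mathcal{Z}$ to bound the ratio of the greedily chosen $\alpha$-thin set, handle the $\kappa_f$ fallback by noting $f \in \drop_{\vecF_0}(\kappa_f)$ gives ratio at most $1$, and sum via the integral bound $c(S) \leq \int_0^{c(\vecF_0)} \min\{1, c(\OPT_\gamma)/(x - c(Q))\}\,dx$. The ``main obstacle'' you flag is already resolved by the $\min\{1,\cdot\}$ in the per-iteration bound, which holds uniformly regardless of whether the fallback branch fires, so no explicit regime-splitting is needed.
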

\begin{proof}
    First, we note that in each iteration of the algorithm, $|\vecF|$ will be reduced by at least 1. The initial value of $|\vecF|$ is at most $|R|-1$, so there are polynomially many iterations. By \Cref{app:thm:DPthm}, each iteration can be executed in polynomial time. Hence \Cref{app:algo:localgreedy} runs in polynomial time.

    The returned solution $S$ is feasible since the invariant that $S_i \cup \vecF_i$ is a feasible mixed solution is maintained throughout the algorithm.

    To complete the proof, we show that $S$ has cost at most $(1+\ln{2}+\varepsilon) \cdot c(\OPT)$. Denote by $\OPT_\gamma$ the optimal $\gamma$-restricted solution. Apply the decomposition theorem, \Cref{thm:decomposition}, to $\OPT_\gamma$ and the $R$-special solution $\vecF_0$. It gives a partition $\mathcal{Z}$ of $\OPT_\gamma$ into $\alpha$-thin parts, and some $Q \subseteq \vecF_0$ with $c(Q) \leq \sfrac{\varepsilon}{4} \cdot c(\vecF_0) \leq \sfrac{\varepsilon}{2} \cdot c(\OPT)$. Then observe that
    $$
        \frac{c(Z_i)}{c(\drop_{\vecF_0}(Z_i) \cap \vecF_{i-1})} \leq \min_{Z \in \mathcal{Z}} \frac{c(Z)}{c(\drop_{\vecF_0}(Z) \cap \vecF_{i-1})} \leq \frac{\sum_{Z \in \mc{Z}} c(Z)}{\sum_{Z \in \mc{Z}} c(\drop_{\vecF_0}(Z) \cap \vecF_{i-1})} \leq \frac{c(\OPT_\gamma)}{c(\vecF_{i-1}) - c(Q)},
    $$
    where the first inequality is by the choice of $Z_i$ in the algorithm, and the second and third follow from the statement of \Cref{thm:decomposition}. Since $f \in \drop_{\vecF_0}(\kappa_f)$, our choice of $Z_i$ implies that, $\sfrac{c(Z_i)}{c(\drop_{\vecF_0}(Z_i) \cap \vecF_{i-1})} \leq 1$. Therefore, since $\drop_{\vecF_0}(Z_i) \cap \vecF_{i-1} = \vecF_{i-1} \setminus \vecF_i$, we have
    $$
        c(Z_i) \leq \min\left\{1,  \frac{c(\OPT_\gamma)}{c(\vecF_{i-1}) - c(Q)}\right\} \cdot c(\vecF_{i-1} \setminus \vecF_i)  \leq \int_{c(\vecF_{i})}^{c(\vecF_{i-1})} \min\left\{1,  \frac{c(\OPT_\gamma)}{x - c(Q)}\right\}\,dx.
    $$
    Finally, sum over all iterations of the algorithm to get the cost of the output $S$:
    \begin{align*}
        c(S) &= \sum_i c(Z_i) \\
        &\leq \int_0^{c(\vecF_0)}\min\left\{1,  \frac{c(\OPT_\gamma)}{x - c(Q)}\right\}\,dx\\
        &= \int_0^{c(\OPT_\gamma) + c(Q)}1\,dx + \int_{c(\OPT_\gamma) + c(Q)}^{c(\vecF_0)}\frac{c(\OPT_\gamma)}{x - c(Q)}\,dx\\
        &\leq \left(1 + \frac{\varepsilon}{2}\right)\cdot c(\OPT_\gamma) + \ln\left( \frac{c(\vecF_0) - c(Q)}{c(\OPT_\gamma)} \right) \cdot c(\OPT_\gamma)\\
        &\leq \left(1 + \ln(2) + \frac{\varepsilon}{2}\right) \cdot c(\OPT_\gamma)
    \end{align*}
    And applying \Cref{lem:gamma_restriction} for our choice of $\varepsilon'$ and $\gamma$ from the algorithm, we have $c(\OPT_\gamma) \leq (1 + \varepsilon') \cdot c(\OPT)$. Hence, we get the desired bound
    $$
    c(S) \leq \left(1 + \ln(2) + \frac{\varepsilon}{2}\right) \cdot (1 + \varepsilon') \cdot c(\OPT) = \left(1 + \ln(2) + \varepsilon\right)\cdot c(\OPT). \qedhere
    $$
    
\end{proof}

\section{Dropping Directed Links}\label{app:sec:dropping_dir}
The main reason that we work with structured $R$-special directed solutions to SRAP is that it allows us to cleanly characterize when a directed link can be dropped after a collection of hyper-links are added to the solution. In this section, we recall the properties of an $R$-special directed solution and use them give two such characterizations.

\subsection{Dropping directed links via the hyper-link intersection graph}

Given an instance of Hyper-SRAP we can use the root $r$ and root-edge $e_r$ to define a notion of right and left along the ring. In particular, we imagine deleting the edge $e_r$ from the ring and consider the root to be the left-most node on the remaining path. The other node incident to $e_r$ is the right-most node in the ring. 

Consider a $R$-special directed solution $\vecF$ for the SRAP problem. Recall that $\vecF$ has the following properties:
\begin{enumerate}
    \item $\vecF$ is only incident on terminals $R$.
    \item $(R,\vecF)$ is an $r$-out arborescence.
    \item $(R, \vecF)$ is planar when $V(H)$ is embedded as a circle in the plane.
    \item For any $v \in R$, no two directed links in $\delta^+_{\vecF}(v)$ go in the same direction along the ring.
\end{enumerate}

Given an $R$-special directed SRAP solution $\vecF$, we associate to each cut $C \in \mathcal{C}$ a single link which is responsible for covering it, as in~\cite{TZ2022new}. In particular, an arc $\ell = (u,v) \in \vecF$ is \textbf{responsible} for covering a cut $C \in \mathcal{C}$ if $\ell$ enters $C$ and there is no other arc on the unique $r$--$u$ path in $(R,\vecF)$ which enters $C$. We denote the set of cuts for which a link $\ell \in \vecF$ is responsible by $\mathcal{R}_{\vecF}(\ell)$. Notice that $\mathcal{R}_{\vecF}(\ell) \neq \emptyset$ for all $\ell \in \vecF$, since if some directed link were not responsible for any cuts, then it could be deleted without affecting the feasibility of $\vecF$, but this is not true of any arc in an $R$-special directed solution.

We will show that every ring-dangerous cut $C \in \mathcal{C}$ has exactly one directed link responsible for it in an $R$-special solution $\vecF$. First we will need the following lemma which follows from the properties of $\vecF$.

\begin{lemma}\label{app:lem:descendants-interval}
    Let $\vecF$ be an $R$-special directed solution for SRAP. 
    \begin{enumerate}[(i)]
        \item For any $v \in R$, the set of descendants of $v$ in $(R, \vecF)$ is of the form $R \cap I$ for some ring interval $I \subseteq V(H)$. \label{app:lem:descendants-intervalp1}

        \item For $v_1, v_2 \in R$, the least common ancestor of $v_1$ and $v_2$ in $(R, \vecF)$ lies between $v_1$ and $v_2$. \label{app:lem:descendants-intervalp2}
    \end{enumerate}
\end{lemma}
This is an analogue to Lemma 4.8 from~\cite{TZ2022new}. In fact, this lemma follows immediately from Lemma 4.8 from~\cite{TZ2022new} by simply removing the Steiner nodes in the ring and viewing $\vecF$ as an $R$-special solution on a ring with only the terminals $R$ (c.f. the proof of \Cref{thm:2approx}).

\begin{lemma}\label{app:lem:responsible}
    For each cut $C \in \mathcal{C}$, there is exactly one directed link $\ell \in \vecF$ responsible for it.
\end{lemma}
\begin{proof}
    A cut $C \in \mathcal{C}$ clearly has at least one link responsible for it, since $\vecF$ enters all such cuts. On the other hand, no two links can both be responsible for $C$. This is because $C$ is an interval, so if $(u_1, v_1) \neq (u_2, v_2)$ are both responsible for $C$, then the least common ancestor $v$ of $v_1$ and $v_2$ is in $C$, by \Cref{app:lem:descendants-interval}\ref{app:lem:descendants-intervalp2}. Since $r \not \in C$, there must be some link in the $r$--$v$ path in $(R, \vecF)$ (and hence also on the $r$--$v_1$ path) entering $C$, contradicting that $(u_1,v_1)$ is responsible for $C$. 
\end{proof}

We can now define when a directed link will be dropped from an $R$-special directed solution $\vecF$. In particular, if a collection of hyper-links $K$ is added to the solution, then we will drop a directed link if and only if all cuts it is responsible for are covered by the hyper-links in $K$. Let $\delta_K(C)$ be the set of hyper-links in $K$ which cover $C \in \mathcal{C}$.

Formally, denote $$\drop_{\vecF}(K) = \{f \in \vecF : |\delta_{K}(C)| \geq 1 \text{ for all } C \in \mathcal{R}_{\vecF}(f)\}.$$

With this definition, if a collection of hyper-links $K$ is added to an $R$-special directed solution $\vecF$ to SRAP, and $\drop_{\vecF}(K)$ is removed, then the solution remains feasible as a mixed SRAP solution.

Our ultimate goal of this section is an alternate characterization of the set $\drop_{\vecF}(K)$. For this, we will use the notion of the hyper-link intersection graph. Recall that a pair of hyper-links $\ell_1$ and $\ell_2$ are \textbf{intersecting} if they share a vertex or there is a vertex in $\ell$ between two vertices in $\ell'$ and a vertex of $\ell'$ lies between two vertices of $\ell$.

Given an instance of Hyper-SRAP with ring $H = (V(H),E)$ and hyper-links $\mathcal{L}$, we define the hyper-link intersection graph $\Gamma$ as follows. For each hyper-link $\ell \in \mathcal{L}$ there is a node $v_\ell$. Two nodes $v_{\ell_1}$ and $v_{\ell_2}$ are adjacent in the hyper-link intersection graph if and only if $\ell_1$ and $\ell_2$ are intersecting hyper-links. For $K \subseteq \mathcal{L}$, $\Gamma(K)$ is the hyper-link intersection graph restricted to the hyper-links in $K$.

The following lemma and its proof are similar to the analogous statements for standard links (Lemma 5.4 in~\cite{TZ2022new}).
\begin{lemma}\label{app:lem:link-intersection-covering}
    Let $K \subseteq \mathcal{L}$ be a collection of hyper-links, and $C \in \mathcal{C}$ a dangerous ring-cut. Then $\delta_K(C) \neq \varnothing$ if and only if there is a path in $\Gamma(K)$ from a hyper-link containing some $v \in C$ to a hyper-link containing some $w \not \in C$. 
\end{lemma}
\begin{proof}
    If $\delta_K(C) \neq \varnothing$, then clearly such a path in $\Gamma(K)$ exists. In particular, any single hyper-link in $\delta_K(C)$ forms a path. 
    Conversely, suppose such a path exists in $\Gamma(K)$, but that $\delta_K(C) = \varnothing$ for contradiction. That is, any hyper-link in $K$ is either fully contained $C$ or fully contained in $V(H) \setminus C$. Since the path starts with a hyper-link containing $v \in C$, and ends with one containing $w \not \in C$, the path must contain some pair of intersecting hyper-links $\ell_1$ and $\ell_2$ with $\ell_1$ contained in $C$, while $\ell_2$ is contained in $V(H) \setminus C$. Hence,  clearly $\ell_1$ and $\ell_2$ do not share a vertex. Furthermore, since $C$ is an interval, any vertex lying between two vertices in $\ell_1$ must also lie in $C$, and hence not in $\ell_2$. This contradicts that $\ell_1$ and $\ell_2$ are intersecting. 
\end{proof}


\Cref{lem:hyperlinksfeasible} is an immediate corollary of \Cref{app:lem:link-intersection-covering}.

The notions of $v$-bad and $v$-good were used in~\cite{TZ2022new}, but the definitions need to be modified for our setting.

\begin{definition}
Let $v \in R$ and consider the maximal interval $I_v \subseteq V(H)$ containing $v$ such that $I_v$ does not contain a terminal which is a non-descendant of $v$ in $(R,\vecF)$. We say that the nodes in $I_v$ are \textbf{$\boldsymbol{v}$-bad}, and all nodes in $V(H) \setminus I_v$ are \textbf{$\boldsymbol{v}$-good}. 
\end{definition}

Observe that by \Cref{app:lem:descendants-interval}\ref{app:lem:descendants-intervalp1}, the interval of $v$-bad nodes actually contains all descendants of $v$ (but may also contain some Steiner nodes in the ring). In fact, we show that the family of $v$-bad intervals for $v \in R$ is a laminar family.

The following lemmas are analogous to Lemmas 5.6 and 2.10, respectively, in~\cite{TZ2022new}. The proofs are similar, but we include them here as our definitions of $v$-good and $v$-bad have changed. 

\begin{lemma}\label{app:lem:vbad-cut}
    A directed link $(u, v) \in \vecF$ is responsible for a cut $C \in \mathcal{C}$ if and only if $v \in C$ and all of the nodes in $C$ are $v$-bad.
\end{lemma}
\begin{proof}
    If $v \in C$ and all nodes of $C$ are $v$-bad, then $C \cap R$ contains only descendants of $v$. In particular, $u \not \in C$, so $(u, v)$ enters $C$, and no other link on the $r$--$v$ path in $\vecF$ can enter $C$. 

    Conversely, if $(u, v)$ is responsible for cut $C$, then it enters $C$, so $v \in C$. Moreover, for every $t \in C \cap R$, the $r$--$t$ path in $\vecF$ must have a link entering $C$. By \Cref{app:lem:responsible} we know that $(u, v)$ is the only link responsible for $C$, so every such $t$ must be a descendant of $v$. That is, $C \cap R$ are all descendants of $v$, and in particular, $C$ contains no terminals which are non-descendants of $v$. Hence, since $C$ is an interval, it is contained in the maximal interval containing no terminal non-descendants of $v$, which is precisely the set of $v$-bad nodes. 
\end{proof}

This yields the main result of this subsection, which provides a characterization of when a directed link $(u,v)$ can be dropped, namely when $v$ is connected to a $v$-good vertex through the hyper-link intersection graph. This is the criterion we will work with in \Cref{app:sec:decompthm}.

\begin{lemma}\label{app:lem: whencanIdrop}
For a collection of hyper-links $K$, a directed link $(u,v)$ is in $\drop_{\vecF}(K)$ if and only if $\Gamma(K)$ contains a path from a hyper-link containing $v$ to a hyper-link containing a $v$-good vertex $w$. 
\end{lemma}
\begin{proof}
    First suppose that there is such a path in $\Gamma(K)$. Then any cut $C \in \mathcal{R}_{\vecF}((u,v))$ for which $(u, v)$ is responsible cannot contain $w$, since by \Cref{app:lem:vbad-cut} it only contains $v$-bad nodes. Hence, by \Cref{app:lem:link-intersection-covering}, we have that $\delta_K(C) \neq \varnothing$. 

    Conversely, if $v$ is not connected to a $v$-good vertex by $K$ in the hyper-link intersection graph, then consider the set $K_v \subseteq V(H)$ of nodes reachable from $v$ via the hyper-link intersection graph of $K$. Let $I_{K_v}$ be the interval of these nodes. Then $I_{K_v} \in \mathcal{C}$, since $v \in I_{K_v}$, $r \not \in I_{K_v}$ (since $r$ is $v$-good) and it is a 2-cut. Moreover, since $K_v$ are all $v$-bad, then $I_{K_v}$ are as well. By \Cref{app:lem:vbad-cut}, this implies that $(u, v)$ is responsible for the cut $I_{K_v}$. We finish the proof by arguing that $K$ does not cover $I_{K_v}$, implying that $(u, v) \not \in \drop_{\vecF}(K)$. This is clear from the definition of the hyper-link intersection graph: if some hyperlink in $K$ covers $I_{K_v}$, then it must be intersecting with a hyperlink reachable from $v$ in the hyper-link intersection graph, thus contradicting that $K_v$ are all of the nodes reachable from $v$ in the hyper-link intersection graph of $K$. 
\end{proof}

\subsection{Dropping directed links via the artifical link extension}

We now describe how we can add artificial directed links to the $R$-special solution $\vecF_0$ to obtain a $V(H)$-special solution $\vecF'$. This brings our situation closer to the standard WRAP problem, and in particular allows us to give an alternative characterization of which directed links can be dropped which will be useful for the dynamic programming algorithm in \Cref{app:sec:dp}. 

We first show that the $v$-bad intervals form a laminar family.

\begin{lemma}
    Suppose $\vecF_0$ is an $R$-special solution, and let $I_v \subseteq V(H)$ denote the $v$-bad interval. Then the family of all $v$-bad intervals $\mathcal{F} = \{I_v : v \in R\}$ is laminar.
\end{lemma}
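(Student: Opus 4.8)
The plan is to verify the laminarity condition directly: for any $u, v \in R$, I will show that one of $I_u \subseteq I_v$, $I_v \subseteq I_u$, or $I_u \cap I_v = \emptyset$ holds. The case $u = r$ (or $v = r$) is trivial: the root has no non-descendants, so the only constraint in the definition of $I_r$ is being a ring interval, which gives $I_r = V(H)$, and this contains every $I_v$. For the rest I would first record an elementary fact: the set of descendants of $v$ in $(R,\vecF_0)$ is always contained in $I_v$. Indeed, letting $D$ be the minimal ring interval spanning these descendants, \Cref{lem:descendants-interval}\ref{lem:descendants-intervalp1} gives that $R \cap D$ is exactly the set of descendants of $v$, so $D$ contains no terminal non-descendant of $v$ and is therefore a candidate interval in the definition of $I_v$, hence $D \subseteq I_v$ by maximality. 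In particular, if $v$ is an ancestor of $u$ then $u \in I_v$.

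Next I would split on the relationship of $u$ and $v$ in the arborescence $(R,\vecF_0)$. If neither is an ancestor of the other, let $w = \lca(u,v)$; then $w \notin \{u,v\}$ and $w$ is a non-descendant of both. By \Cref{lem:descendants-interval}\ref{lem:descendants-intervalp2}, $w$ lies strictly between $u$ and $v$ along the ring, and being a terminal non-descendant of $u$ it cannot lie in $I_u$, nor in $I_v$. Since $I_u$ is a ring interval containing $u$ but avoiding the separating vertex $w$, it lies entirely on the $u$-side of $w$; symmetrically $I_v$ lies on the $v$-side; hence $I_u \cap I_v = \emptyset$.

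In the remaining case, say $v$ is an ancestor of $u$ (the reverse is symmetric, and $u = v$ is trivial), I would prove $I_u \subseteq I_v$. We already know $u \in I_v$. If some $z \in I_u$ were not in $I_v$, then, since $I_u$ is a ring interval containing both $u \in I_v$ and $z \notin I_v$, it must contain the ring node $s$ lying immediately past one of the two endpoints of $I_v$ (on whichever side $z$ is). Maximality of $I_v$ forces $s$ to be a terminal non-descendant of $v$, since otherwise $I_v \cup \{s\}$ would be a strictly larger valid interval; but $s \in I_u$ forces $s$ to be a descendant of $u$, hence a descendant of $v$ by transitivity, contradicting that $s$ is a non-descendant of $v$. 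I expect this containment case to be the main obstacle: one must argue carefully from the ring-interval geometry that $I_u$ protruding past $I_v$ necessarily captures precisely the terminal non-descendant of $v$ blocking the extension of $I_v$, and then use transitivity of the descendant relation to close the contradiction. The incomparable case is routine once \Cref{lem:descendants-interval}\ref{lem:descendants-intervalp2} is in hand.
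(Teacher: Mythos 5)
Your proof is correct and follows the same overall route as the paper: split on the least common ancestor $w$ of $u$ and $v$, handling the incomparable case (disjoint intervals) and the ancestor case (nested intervals) separately, with \Cref{lem:descendants-interval} doing the main work. The incomparable case matches the paper's argument exactly. For the ancestor case --- the step you flag as the main obstacle --- the paper has a cleaner argument that avoids the boundary-node contradiction: with $v$ an ancestor of $u$, the non-descendants of $v$ form a subset of the non-descendants of $u$, so $I_u$ (which avoids all terminal non-descendants of $u$) a fortiori avoids all terminal non-descendants of $v$; combining this with the observation you already made that $u \in I_v$, the union $I_u \cup I_v$ is an interval containing $v$ and avoiding terminal non-descendants of $v$, so maximality of $I_v$ gives $I_u \subseteq I_v$ directly. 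Your boundary-node argument is also valid and uses the same ingredients; it just does a bit more work than the monotonicity-of-constraints observation requires.
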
 
\begin{proof}

    Consider two terminals $u$ and $v$ with $u$-bad interval $I_{u}$ and $v$-bad interval $I_v$. By \Cref{app:lem:descendants-interval}\ref{app:lem:descendants-intervalp2}, the least common ancestor (in $\vecF_0$) $w$ of $u$ and $v$, lies between them. If $w$ is not $u$ or $v$, then it is a non-descendant of both of them, so neither $I_u$ nor $I_v$ can contain $w$. Since $u \in I_u$ and $v \in I_v$, these intervals do not intersect. 
    
    Otherwise, $w$ is equal to either $u$ or $v$; suppose $w = u$ without loss of generality. But then all non-descendants of $u$ are non-descendants of $v$, so $I_v \subseteq I_u$. Thus, the family is laminar.
\end{proof}

We now describe how we add artificial directed links to the $R$-special solution $\vecF_0$ to obtain a $V(H)$-special solution $\vecF'$. We will then use $\vecF'$ to define a notion of least common ancestor in $\vecF_0$ that is defined for any subset of ring nodes.

For each terminal $v \in R$, define the set of nodes $S_v \subseteq V(H)$ to consist of all nodes $u$ such that $I_v$ is the minimal interval of $\mathcal{F}$ which contains $u$. Notice that $S_v$ is an interval containing exactly one terminal, namely $v$, and potentially Steiner nodes to the left and right of $v$. Suppose $S_v = S^L_v \cup S^R_v \cup \{v\}$ where $S^L_v$ and $S^R_v$ are the (possibly empty) sets of Steiner nodes in $S_v$ to the right and left of $v$ respectively. Suppose $S^L_v = \{s^L_0, \ldots, s^L_k\}$ ordered right to left along the ring, and $S^R_v = \{s^R_0, \ldots, s^R_m\}$ ordered left to right along the ring. We now add the set of artificial directed links $\{(v,s^L_0), (s^L_0,s^L_1), \ldots, (s^L_{k-1}, s^L_k)\} \cup \{(v,s^R_0), (s^R_0,s^R_1), \ldots, (s^R_{m-1}, s^R_m)\}$. See \Cref{app:fig:artificial-links}.

\begin{figure}[h]
\begin{centering}

\includegraphics[scale=0.7]{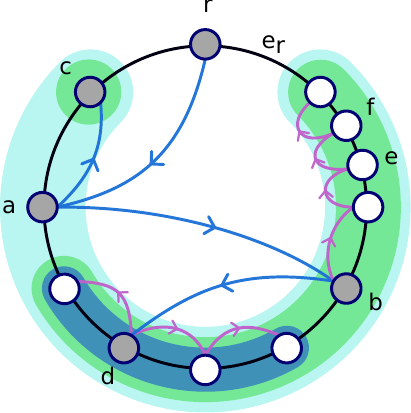} 
\caption{An example of an $R$-special solution with $R = \{r,a,b,c,d\}$ and its extension to an artificial $V(H)$-special solution. The artificial links are purple. The $r$-bad interval is always $V(H)$. The $a$-bad interval is shown in cyan. The $b$-bad and $c$-bad intervals are green, and the $d$-bad interval is dark blue. Note that $\overline \lca(\{e,f\}) = e$ and $\overline \lca(\{e,f,d\}) = b$.}\label{app:fig:artificial-links}

\end{centering}
\end{figure}

When these artificial links are added to $\vecF_0$, we obtain an associated $V(H)$-special solution $\vecF'$. Observe that for a terminal $v$, the set of descendants of $v$ in $\vecF'$ (including $v$ itself) is exactly its $v$-bad interval $I_v$. We use this to extend the least common ancestor function with respect to $\vecF_0$ to all subsets of nodes (even though the arborescence $\vecF_0$ only touches terminals). 

\begin{definition}[Extended Least Common Ancestor Function]
    Suppose $\vecF_0$ is an $R$-special solution and $A \subseteq V(H)$. Then the least common ancestor of $A$ with respect to $\vecF_0$ is $$\overline \lca(A) := \lca_{\vecF'}(A),$$ where $\lca_{\vecF'}(A)$ is the least common ancestor of $A$ in the artificial solution $\vecF'$.
\end{definition}

Notice that the above definition coincides with the standard least common ancestor function with respect to $\vecF_0$ on sets $A$ which consist only of terminals.

Finally, we state the key \Cref{app:lem:dropLCA}. For a set of hyper-links $S$, let $V(S) := \bigcup_{\ell \in S} \ell$.

\begin{lemma}\label{app:lem:dropLCA}
    If $\vecF_0$ is an $R$-special directed solution to SRAP and $S \subseteq \mathcal{L}$ is a collection of hyper-links which form a connected component in the hyper-link intersection graph, then $$\drop_{\vecF_0}(S) = \Big(\bigcup_{v \in V(S)} \delta_{\vecF_0}^-(v)\Big) \setminus \delta_{\vecF_0}^-( \overline{\lca}(V(S))).$$
\end{lemma}

\begin{proof}
First, observe that if $V(S) \cap R = \varnothing$, then both sets in the lemma statement are empty, and the lemma is true. So assume that $V(S)$ contains some terminal. 

Since $\delta^-_{\vecF_0}(v) = \emptyset$ whenever $v \not \in R$, we consider some vertex $v \in V(S) \cap R$. Suppose that $v \neq \overline{\lca}(V(S))$. Since $v \in V(S)$, there must be some vertex $u$ in $V(S)$ which is not a descendant of $v$ with respect to $\vecF'$. Since the set of descendants of $v$ in $\vecF'$ is exactly the $v$-bad interval $I_v$, we have that $u$ is a $v$-good vertex. Thus, by \Cref{app:lem: whencanIdrop}, we have that $\delta_{\vecF_0}^-(v) \in \drop_{\vecF_0}(S)$, since $S$ connects $v$ to a $v$-good vertex via the hyper-link intersection graph.

On the other hand, if $v = \overline{\lca}(V(S))$, then all other vertices in $V(S)$ are descendants of $v$ in $\vecF'$. Again, the set of descendants of $v$ in $\vecF'$ is its $v$-bad interval $I_v$, so all vertices in $V(S)$ are $v$-bad, and so by \Cref{app:lem: whencanIdrop}, $\delta_{\vecF_0}^-(v)$ is not contained in $\drop_{\vecF_0}(S)$.
\end{proof}

\section{The Decomposition Theorem for Hyper-SRAP}\label{app:sec:decompthm}
In this section, we prove the following theorem.

\decompositionTheorem*

We follow the approach in~\cite{TZ2022new} which proves the result when all hyper-links have size 2 and $R = V(H)$. We will first partition the hyper-links in $S$ into a collections of festoons whose spans form a laminar family. We will then construct a dependency graph whose nodes correspond to festoons, which will allow us to partition the links of $S$ into the desired $\alpha$-thin pieces. In the general hyper-link setting, the festoons are composed of hyper-links rather than links of size 2.

We begin by defining festoons in the context of hyper-links. The interval of a hyper-link $\ell$ is denoted $I_\ell$ and is the set of vertices in the interval between the leftmost vertex and the rightmost vertex of $\ell$. We say that hyper-links $\ell$ and $\ell'$ are \textbf{crossing} if their intervals intersect and neither is a subset of the other. Recall that two hyper-links are \textbf{intersecting} if they share a vertex, or there is a vertex in $\ell$ between two vertices in $\ell'$ and a vertex of $\ell'$ lies between two vertices of $\ell$ (when the vertices are viewed in the left to right order along the ring). Notice that if two hyper-links are crossing, then they also are intersecting.

\begin{definition}
    A \textbf{festoon} is a set of hyper-links $X \subseteq \mathcal{L}$ which can be ordered $\ell_1, \ldots, \ell_p$ such that $\ell_i$ and $\ell_{i+1}$ are crossing for $i \in \{1, \ldots, p-1\}$ and $I_{\ell_i}$ and $I_{\ell_j}$ are disjoint unless $|i-j| = 1$. 
\end{definition}

Notice that whether a set of hyper-links is a festoon only depends on their hyper-link intervals. Let $\mathcal{C}' = \{C \subseteq V(H) : |\delta_E(C)| = 2, r \not \in C\}$ be the family of minimum cuts of the ring $H$. One of the key properties of festoons is that no minimum cut is covered by more than 4 hyper-links in a festoon. 
\begin{lemma}
    Let $X$ be a festoon and $C \in \mathcal{C}'$. Then $|\{\ell \in X : \ell \text{ covers } C\}| \leq 4$.
\end{lemma}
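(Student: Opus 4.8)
The plan is to use the defining structure of a festoon directly. Let $X = \{\ell_1, \ldots, \ell_p\}$ be a festoon, ordered so that consecutive hyper-links are crossing and non-consecutive hyper-links have disjoint intervals. Fix a minimum cut $C \in \mathcal{C}'$, which (since $r \notin C$ and $|\delta_E(C)| = 2$) is an interval of the ring not containing the root. A hyper-link $\ell$ covers $C$ exactly when $\ell$ has a vertex inside $C$ and a vertex outside $C$; in particular, $\ell$ covers $C$ only if the interval $I_\ell$ meets both $C$ and its complement, i.e., $I_\ell$ is not contained in $C$ and is not disjoint from $C$. I would first reduce to a statement purely about the intervals $I_{\ell_1}, \ldots, I_{\ell_p}$: the set of indices $i$ for which $\ell_i$ covers $C$ is contained in the set of indices $i$ for which $I_{\ell_i}$ "straddles the boundary of $C$" in the sense that $I_{\ell_i} \cap C \neq \emptyset$ and $I_{\ell_i} \setminus C \neq \emptyset$.

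**The key combinatorial claim.** Since $C$ is an interval of the ring avoiding the root, it has (at most) two "endpoints" on the ring-path obtained by deleting $e_r$ — a left endpoint and a right endpoint. An interval $I_{\ell_i}$ straddles the boundary of $C$ only if $I_{\ell_i}$ contains one of these two endpoints (more precisely, one of the two edges/positions where $C$ meets its complement). So it suffices to bound, for a single fixed boundary position $x$ of $C$, the number of indices $i$ such that $x \in I_{\ell_i}$, and show this is at most $2$; then summing over the (at most) two boundary positions gives $4$. Now here is where the festoon structure is used: if $I_{\ell_i}$ and $I_{\ell_j}$ both contain the same point $x$, their intervals intersect, so by the festoon definition we must have $|i - j| \le 1$. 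Hence the set of indices whose interval contains $x$ is a set of pairwise-consecutive indices, which forces it to have size at most $2$ (three distinct indices cannot be pairwise within distance $1$). Combining: at most $2$ hyper-links straddle each of the at most $2$ boundary positions of $C$, so at most $4$ hyper-links of $X$ cover $C$.

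**Writing it out.** I would phrase the argument as: (1) observe that any $\ell_i$ covering $C$ has $I_{\ell_i}$ meeting both $C$ and $V(H)\setminus C$, hence $I_{\ell_i}$ contains at least one of the two "transition points" of the interval $C$ along the ring; (2) for a fixed transition point $x$, the indices $i$ with $x \in I_{\ell_i}$ are pairwise at index-distance $\le 1$ by the disjointness clause in the festoon definition, so there are at most two such indices; (3) sum over the two transition points. A small subtlety to handle carefully: $C$ is a cut of the cycle $H$, so as a subset of the cycle it is a contiguous arc with exactly two boundary edges; since the excerpt has already fixed $e_r$ and deleted it to linearize the ring, and $r \notin C$, the set $C$ is genuinely an interval on the resulting path, so it has a well-defined left and right end — this is the only place the root/root-edge convention enters, and it just needs a sentence.

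**Expected main obstacle.** The proof is essentially bookkeeping, and the only real care needed is in step (1): pinning down precisely what "$I_{\ell_i}$ contains a transition point of $C$" means and why covering $C$ implies it. One has to be a little careful that "covering" is defined via vertices of $\ell_i$ (a hyper-link is a vertex set, not just an interval), whereas the festoon condition is about intervals $I_{\ell_i}$; but since $\ell_i \subseteq I_{\ell_i}$, a hyper-link that has vertices on both sides of a boundary of $C$ certainly has its interval straddling that boundary, which is all we need for the upper bound. I do not anticipate any genuine difficulty beyond getting these containments stated cleanly.
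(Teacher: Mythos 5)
Your proof is correct and follows essentially the same approach as the paper's: identify the two boundary vertices of the interval $C$, observe that any covering hyper-link's interval must contain one of them, and then use the festoon condition (non-consecutive hyper-links have disjoint intervals) to conclude at most two intervals can contain any fixed ring vertex. The paper phrases the two cases as "hyper-links with vertices to the left of $u$" and "to the right of $v$" rather than "intervals containing $u$" and "containing $v$," but the underlying argument—three intervals sharing a point would have two non-consecutive ones intersecting—is the same.
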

\begin{proof}
Suppose $X = {\ell_1, \ldots, \ell_p}$ is a festoon of hyper-links and $C \in \mathcal{C}'$. Then $C$ is an interval between vertices say $u$ and $v$ in $V(H)$, where $u$ is the left endpoint of $C$ and $v$ is the right endpoint.

We will show that there are at most two hyper-links in $\delta_X(C)$ which contain vertices to the left of $u$. A symmetric argument shows that there can be at most two hyper-links in $\delta_X(C)$ containing a vertex to the right of $v$. Together, these imply that there are at most 4 hyper-links in $\delta_X(C)$.

Suppose there are 3 hyper-links covering $C$ which contain vertices to to the left of $u$. Since they cover $C$, each of these hyper-links must also contain some vertex in $C$. Thus, their hyper-link intervals all contain the vertex $u$. But this is impossible, as the definition of festoons requires that $I_{\ell_i}$ and $I_{\ell_j}$ are disjoint unless $|i-j| = 1$.  
\end{proof}

\begin{definition}[$\alpha$-thin set of hyper-links]
    A set of hyper-links $K$ is $\alpha$-thin if there exists a maximal laminar subfamily $\mathcal{D}$ of $\mathcal{C}'$ such that for every cut $C \in \mathcal{D}$, we have $|\{\ell \in K : \ell \text{ covers } C\}| \leq \alpha$.
\end{definition}

Given a festoon $X$, let $I_X$ denote the festoon interval which is the interval from the left-most vertex in the festoon to the right-most.
We can partition $S$ into a collection of festoons $\mathcal{X}$, so that the festoon intervals form a laminar family. To do this, we iteratively construct the partition by choosing a festoon with the largest interval in each iteration and adding it to the partition.
We now prove that if the set $S$ is partitioned in this way, it yields a partition $\mathcal{X}$ such that the set of festoon intervals form a laminar family.

\begin{lemma}
    The festoon intervals $\{I_X : X \in \mathcal{X}\}$ form a laminar family.
\end{lemma}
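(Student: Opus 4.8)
The plan is to show that the festoon intervals form a laminar family by exploiting the greedy construction: at each step we pick a festoon $X$ whose interval $I_X$ is inclusion-maximal among the remaining hyper-links. I would first argue that two festoons built this way cannot have crossing intervals, and for this I would focus on the earlier-chosen festoon, say $X$ with interval $I_X$, and a later festoon $Y$ with interval $I_Y$. Since $X$ was chosen before $Y$, by the maximality rule $I_Y$ was not larger than $I_X$ at the time $X$ was chosen, and more importantly all hyper-links of $Y$ were still available when $X$ was selected. The key point to nail down is: if some hyper-link $\ell \in Y$ has $I_\ell$ crossing $I_X$ (intervals overlap, neither contained in the other), then $\ell$ is \emph{crossing} some hyper-link of $X$ — because $X$'s hyper-links collectively span $I_X$ as a "chain" of crossing intervals, so an interval that pokes out of one side of $I_X$ while also entering $I_X$ must cross one of the chain links at the boundary. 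But then $\ell$ could have been added to extend the festoon $X$, contradicting the maximality of the festoon $X$ that the greedy step produced (here I would need the greedy step to build each festoon maximally, not just pick a maximal-interval seed — I'd make that explicit in the construction).

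The main structural lemma I would isolate first is: for a festoon $X = \ell_1, \dots, \ell_p$ with $I_{\ell_1}$ the leftmost and $I_{\ell_p}$ the rightmost, the interval $I_X = I_{\ell_1} \cup \dots \cup I_{\ell_p}$ and moreover $X$ is "interval-maximal" in the sense that no hyper-link outside $X$ has an interval crossing $I_X$ while being disjoint from $I_{\ell_j}$ for all but at most one $j$ — otherwise it would have been absorbed. From this, the dichotomy follows: given festoons $X$ (earlier) and $Y$ (later) in the partition, every hyper-link of $Y$ either has interval disjoint from $I_X$ or has interval contained in $I_X$; hence $I_Y = \bigcup_{\ell \in Y} I_\ell$ is either disjoint from $I_X$ or contained in $I_X$. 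That is exactly laminarity. The direction "$I_X$ contained in $I_Y$" cannot happen for $X$ earlier than $Y$ since $I_X$ was maximal at the time $X$ was chosen and $Y$'s hyper-links were all available then, so $I_X$ could not be strictly inside the interval spanned by a still-available festoon.

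The step I expect to be the main obstacle is the claim that a hyper-link $\ell$ whose interval \emph{crosses} $I_X$ must in fact be \emph{crossing} (in the hyper-link sense) some $\ell_j \in X$, so that it genuinely could have extended the festoon chain. Crossing of intervals is necessary but I need to also check the "festoon" compatibility conditions — namely that after inserting $\ell$ at the appropriate end, the new sequence still has the property that non-adjacent intervals are disjoint. This requires knowing that $\ell$ sticks out past exactly one end of $I_X$ and overlaps only the extreme hyper-link $\ell_1$ or $\ell_p$ on that side; if $\ell$'s interval straddled deep into the chain it might overlap several $\ell_j$'s nonadjacently. I would handle this by a case analysis on where the endpoints of $I_\ell$ fall relative to the endpoints of the $I_{\ell_j}$'s, using that the $I_{\ell_j}$ are "consecutive" (only adjacent ones overlap) so an interval overlapping $\ell_j$ and $\ell_{j'}$ with $|j - j'| \geq 2$ would have to contain all of $I_{\ell_{j+1}}, \dots, I_{\ell_{j'-1}}$, and then one checks whether this contradicts maximality of $I_X$ directly (such an $\ell$ together with a sub-chain still forms a festoon with a larger or equal interval, available when $X$ was picked). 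I would also remark that in the degenerate cases ($p = 1$, or $I_X$ a single point) the argument simplifies and the conclusion is immediate.
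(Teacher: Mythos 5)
You take essentially the same route as the paper: order the festoons by the greedy construction, assume toward a contradiction that an earlier $X$ and a later $Y$ have crossing intervals, and exhibit a festoon with a strictly larger interval that was available when $X$ was chosen. The paper states this argument in only a few sentences and leaves implicit exactly the step you flag as the main obstacle---namely that a hyper-link $m \in Y$ whose interval straddles the boundary of $I_X$ genuinely \emph{crosses} some $\ell_i \in X$, and that appending $m$ preserves the festoon conditions; your idea of appending $m$ to a \emph{prefix} $\ell_1, \ldots, \ell_i$ (rather than to the whole chain) is the right fix, with $i$ taken minimal so that $I_{\ell_i}$ meets $I_m$. One small simplification: you need not separately exclude $I_X \subsetneq I_Y$, since laminarity permits nesting; only the crossing case must be ruled out.
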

\begin{proof}
Suppose that $X, Y \in \mathcal{X}$ are such that $I_X$ and $I_Y$ cross. We also assume without loss of generality that $X$ was added the festoon family before $Y$ was. 

Let $\ell_1, \ldots, \ell_p$ be the hyper-links in the festoon $X$, numbered according to the festoon order. We will assume $X$ is to the left of $Y$. Since they cross, there is some hyper-link in $Y$ whose left endpoint is in $I_X$. But then this hyper-link would have been included in $X$ to form a longer interval. 
\end{proof}

Given the laminar structure of the festoon intervals, we can define a partial order on the festoons in $\mathcal{X}$. In particular, we will say that $X \prec Y$ if $I_X \subseteq I_Y$.

\begin{definition}
    We say that two festoons $X$ and $Y$ are \textbf{tangled} if some hyper-link in $X$ is intersecting with some hyper-link of $Y$.
\end{definition}

Suppose that $X_1, \ldots, X_p$ is a sequence of festoons such that $X_1$ contains a vertex $v \in R \setminus r$, $X_i$ and $X_j$ are tangled iff $|i-j| = 1$, and $X_p$ contains a $v$-good vertex, but no festoons $X_i$ for $i < p$ contain a $v$-good vertex. Then by \Cref{app:lem: whencanIdrop}, $\{X_1,\ldots,X_p\}$ is a minimal collection of festoons such that the directed link $(u,v) \in \vecF$ can be dropped. We will choose for each $v \in R$, a minimal collection of festoons $\mathcal{X}_v = \{X_1, \ldots, X_p\}$ as above so that $|I_{X_p}|$ is minimized. This will allow us to construct the dependency graph with the desired properties.

We now turn to defining the dependency graph $(\mathcal{X},A)$.
It will have a vertex for each festoon in $\mathcal{X}$, and its arcs are obtained by inserting for each directed link $(u,v) \in \vecF$ a directed path corresponding to a minimal set of festoons as above. Formally, for each $v \in R \setminus r$, there will be a directed path $P_v$ consisting of the festoons in $\mathcal{X}_v$ where $(X_i,X_{i-1}) \in A$ for $i \in \{2, \ldots p\}$. 

It will be helpful to consider a subgraph of the dependency graph which is constructed only from paths $P_v$ where $v \in U$ for some set of terminals $U \subseteq R \setminus r$. This is called the \textbf{$U$-dependency graph}. 

We will use that the dependency graph is a branching, which follows from the minimality in its construction and the partial order defined on festoons. Clearly, this implies that the $U$-dependency graph is also a branching for any $U \subseteq R \setminus r$.

\begin{lemma}\label{app:lem:branching}
    The dependency graph $(\mathcal{X},A)$ is a branching. That is, the in-degree of every node $X \in \mathcal{X}$ is at most 1 and it contains no cycles.
\end{lemma}
\begin{proof}
    Since $(X,Y) \in A$ implies $Y \prec X$, it is clear that $(\mathcal{X},A)$ contains no directed cycles. We now show that each festoon $X$ has in-degree at most 1. Suppose for the sake of contradiction that $(Y,X) \in A$ and $(Z,X) \in A$ where $Y \neq Z$. Since the arcs in $A$ arise from a collection of directed paths, it must be the case that $(Y,X) \in P_v$ and $(Z,X) \in P_w$ for distinct vertices $v,w \in R \setminus r$.
    
    We will use the following key property, used in~\cite{TZ2022new}, which continues to hold in our setting: for any pair of terminals $v,w \in R \setminus r$, either $w$ is $v$-good or $v$ is $w$-good. Hence, we assume without loss of generality that $w$ is $v$-good. 
    
    Since $(Z,X) \in P_w$, we have that $w$ is contained in the festoon interval $I_X$ of $X$. Since the $v$-good vertices form an interval, $X$ must contain a hyper-link which contains a $v$-good vertex. But by \Cref{app:lem: whencanIdrop}, this means that $Y$ is not necessary for a directed link entering $v$ to be dropped, contradicting $(Y,X) \in P_v$.
\end{proof}

We can bound the thinness of the components of the dependency graph with the following lemmas, which are direct extensions of analogous lemmas in~\cite{TZ2022new}.  Let $\alpha$ be some positive integer, and $U \subseteq R \setminus r$.

\begin{lemma}\label{app:lem:tangled_and_thin}
    If $\mathcal{X}'$ is a connected component of the $U$-dependency graph and the set $$\{Z \in \mathcal{X}' : X \preceq Z, \text{ } X \text{ and } Z \text{ are tangled}\}$$ has cardinality at most $\alpha$ for every festoon $X \in \mathcal{X}'$, then $\mathcal{X}'$ is $4(\alpha + 1)$-thin.
\end{lemma}
\begin{proof}
    The proof is the same as the proof of Lemma 7.16 in~\cite{TZ2022new}. Note that Lemma 7.8 in~\cite{TZ2022new} holds for festoons of hyper-links as well as standard festoons.
\end{proof}

\begin{lemma}\label{app:lem:tangled_are_ancestors}
    Suppose $X$ and $Y$ are festoons in the same connected component of the $U$-dependency graph. If $X$ and $Y$ are tangled, then they have an ancestor-descendant relationship in the $U$-dependency graph. 
\end{lemma}
\begin{proof}
    Again, the proof is a direct extension of Lemma 7.17 in~\cite{TZ2022new}.
\end{proof}


\begin{lemma}\label{app:lem:key_lemma_dependencygraph}
    Let $(\mathcal{X}',A')$ be a connected component of the $U$-dependency graph. If $$|\{u \in V : P \cap P_u \neq \emptyset\}| \leq \alpha$$ for every directed path $P \subseteq A'$ then the collection of links which are contained in festoons of $\mathcal{X}'$ is $4(\alpha+1)$-thin.
\end{lemma}
\begin{proof}
First, notice that for every $X \in \mathcal{X}$, there are at most $\alpha$ festoons in $\mathcal{X}'$ which are at least $X$ in the partial ordering. This is because $X$ can be tangled with at most one festoon from each $P_u$, which follows from minimality as well as \Cref{app:lem:tangled_are_ancestors}. Hence, by \Cref{app:lem:tangled_and_thin}, $\mathcal{X}'$ is $4(\alpha+1)$-thin as desired.
\end{proof}

Having shown that the dependency graph is a branching, the property given in \Cref{app:lem:key_lemma_dependencygraph} is enough to replicate the same argument which was introduced by Traub and Zenklusen in~\cite{TZ2021} for the Weighted Tree Augmentation Problem to prove the decomposition theorem. We break the dependency graph up into pieces, such that each corresponds to a collection of hyper-links which is $\alpha$-thin, while only destroying a small fraction of the sets $P_u$. We reproduce the proof below.

\begin{proof}[Proof of \Cref{thm:decomposition}]
Let $q := \lceil \frac{1}{\varepsilon} \rceil$. We will construct an arc labeling for each connected component $(\mathcal{X}',A')$ of the
$(R \setminus r)$-dependency graph $(\mathcal{X},A)$, which is a branching by \Cref{app:lem:branching}. The arcs in the same set $P_u$ will receive the same label. 

We define the labeling inductively as follows. For each directed path $P_u$ which begins at the root of the arborescence $(\mathcal{X}',A')$, we set the labels of the arcs in this path to be 0. For a directed path $P_u$ which begins at a node $X$ which is not the root, we set the label of the arcs in $P_u$ to be $j+1$, where $j$ is the label of the unique arc entering $X$. We perform the labeling in this fashion for each connected component of the dependency graph to obtain a labeling of all arcs in $A$

Now, let $Q_i \subseteq \vecF$ be the set of directed links $(u,v)$ such that $P_v$ received label $i$, where $i \in \{0, \ldots, q-1\}$. This is a partition of $\vecF$ into $q$ parts. Hence, the average cost of the sets $Q_i$ is $c(\vecF) / q$, implying that the cheapest set among $Q_0, \ldots, Q_{q-1}$ has cost at most $c(\vecF) / q$. Let this cheapest part be denoted by $Q \subseteq \vecF$. 

We define $U \subseteq R$ to be those terminals which are not entered by a directed link in $Q$, and consider the $U$-dependency graph. The $U$-dependency graph is obtained by deleting from $(\mathcal{X},A)$ all arcs with label $i$ for some $i \in \{0, \ldots, q-1\}$. Hence, each of its connected components satisfies the hypothesis of \Cref{app:lem:key_lemma_dependencygraph} where $\alpha = q-1$, implying that the set of hyper-links in the festoons of each component is $4q$-thin. This yields our partition of the hyper-links of $S$ into parts where each part is $4\lceil \frac{1}{\varepsilon} \rceil$-thin.  

Finally, for each directed link $(u,v) \in \vecF \setminus Q$ there is some connected component of the $U$-dependency graph which contains all the arcs in $P_v$. Hence, $(u,v)$ is droppable by adding all the hyper-links in the festoons of this component. Since, $c(Q) \leq \varepsilon \cdot c(\vecF)$, we have the desired property. 
\end{proof}

\section{Dynamic Programming to find the best $\alpha$-thin component}\label{app:sec:dp}
In this section, we prove that, given an $R$-special directed solution $\vecF_0$, and a subset $\vecF \subseteq \vecF_0$, the $\alpha$-thin collection of $\gamma$-restricted hyper-links $K$ which minimizes the ratio $$\frac{c(K)}{c(\drop_{\vecF_0}(K) \cap \vecF)}$$ can be found in polynomial time.

As is now standard following the results of~\cite{TZ2021}\cite{TZ2022new}\cite{RZZ2022}, we will focus on maximizing $\slack_\rho(K)$ defined as $$\slack_\rho(K) := \rho \cdot c(\drop_{\vecF_0}(K) \cap \vecF) - c(K).$$

Notice that deciding whether the ratio $\frac{c(K)}{c(\drop_{\vecF_0}(K) \cap \vecF)}$ is smaller than a fixed $\rho^*$ is equivalent to deciding whether $\slack_{\rho^*}(K)$ is greater than 0. Thus, if we can efficiently find a maximizer of the slack function for any given $\rho$, then we can use binary search to obtain our desired result. 

It will be convenient for the results in \Cref{app:sec:1.5approx} to prove a more general result allowing different cost functions on the two terms of the slack function. 

\begin{theorem}\label{app:thm:DPthm}
    Given an instance of SRAP, let $\vecF_0$ be an $R$-special directed solution. Let $\tilde{c}: \vecF_0 \to \mathbb{R}_{\geq 0}$ be a cost function on $\vecF_0$. Then a maximizer of $\Tilde{c}(\drop_{\vecF_0}(K)) - c(K)$ over all $\alpha$-thin subsets of $\gamma$-restricted hyper-links can be computed in polynomial time. 
\end{theorem}

Given \Cref{app:thm:DPthm}, we can maximize $\slack_\rho$ by setting $\tilde c(\ell) = \rho \cdot c(\ell)$ for $\ell \in \vecF \cap \vecF_0$ and 0 otherwise. We will prove the above theorem by dynamic programming. 

Traub and Zenklusen prove this optimization theorem for the WRAP problem in~\cite{TZ2022new}. The proof in our setting is an extension of their methods. The key differences in our context are twofold. First, we are working with an $R$-special directed solution $\vecF_0$ which is incident only on the terminals $R$, whereas in WRAP, the arborescence contains all the nodes of the ring. Secondly, we must extend their techniques to hyper-links, rather than standard links containing pairs of vertices of the ring. 

To handle this, we will add artificial links to the $R$-special solution $\vecF_0$ to obtain a $V(H)$-special solution $\vecF'$, which allows us to extend the least common ancestor function to all subsets of ring nodes, as described in \Cref{app:sec:dropping_dir}. This allows us to leverage \Cref{app:lem:dropLCA}, which is the analogue of Lemma 2.11 in~\cite{TZ2022new}, and is crucial in computing new table entries from already computed ones. Finally, we exploit the fact that we are only working with hyper-links of size at most $\gamma$, implying that there are polynomially many hyper-links available.

For a ring $H = (V(H),E)$, let the set of 2-cuts not containing the root be denoted by $\mathcal{C}' := \{C \subseteq V(H) : |\delta_E(C)| = 2, r \not \in C\}$. Recall that a collection of hyper-links $K$ is $\alpha$-thin if there exists a maximal laminar subfamily $\mathcal{D}$ of $\mathcal{C'}$ such that there are at most $\alpha$ hyper-links in $K$ which cover $C$ for each $C \in \mathcal{D}$. In the following, we will build up a solution by considering subproblems defined on intervals of the ring. As such we need a definition of a hyper-link set which is an $\alpha$-thin with respect to an interval $C \subseteq V(H)$.

\begin{definition}
    A collection of hyper-links $K$ is $(\alpha,C)$-thin if there exists a maximal laminar subfamily $\mathcal{D}$ of $\mathcal{C}'$ on ground set $C$, such that for each $\bar C \in \mathcal{D}$, there are at most $\alpha$ hyper-links from $K$ which cover $\bar C$.
\end{definition}

We will use the notation $\delta(C)$ to denote the set of hyper-links which cover the cut $C \in \mathcal{C}'$. We maintain a table $T$ of polynomial size with a table entry $T[C,B, \mathcal{T}, \phi,\psi]$ where:

\begin{itemize}
    \item $C \in \mathcal{C}'$,
    \item $B \subseteq \delta(C)$ is of size at most $\alpha$,
    \item $\mathcal{T}$ is a partition of $B$,
    \item $\phi: \mathcal{T} \to V$,
    \item $\psi: \mathcal{T} \to \{0,1\}$.
\end{itemize}

Note that there are $|V(H)|^2$ choices for $C$. Since there are at most $|V(H)|^\gamma$ possible $\gamma$-restricted hyper-links, there are at most $|V(H)|^{\gamma \alpha}$ choices for $B$. Finally, if $\alpha$ is a constant then there are constantly many choices for $\mathcal{T}$ and $\psi$, and polynomially many choices for $\phi$. Thus, the overall dimensions of the table are polynomial.

We now describe how to interpret a subproblem corresponding to a table entry $T[C,B,\mathcal{T},\phi,\psi]$. A subset $S \subseteq \mathcal{L}$ of hyper-links realizes this table entry if:

\begin{itemize}
    \item The hyper-links in $S$ contain some vertex of $C$,
    \item $\delta_S(C) = B$,
    \item $\mathcal{T}$ consists of the non-empty sets $S_i \cap B$, where $S_1, \ldots, S_q$ are the connected components of $S$ in the hyper-link intersection graph,
    \item For each set $S_i \cap B \in \mathcal{T}$, we have $\phi(S_i \cap B) = \overline\lca(V(S_i))$,
    \item For each set $S_i \cap B \in \mathcal{T}$, we have $\psi(S_i \cap B) = 1$ if and only if $\phi(S_i \cap B) \in V(S_i)$.
\end{itemize}

Thus, the table entry $T[C,B,\mathcal{T},\phi,\psi]$ contains the maximizer and maximum value of $$\tilde c\left(\drop_{\vecF_0}(S) \cap \bigcup_{v \in C} \delta^{-}_{\vecF_0} (v)\right) - c(S)$$ over all $(\alpha,C)$-thin collections of hyper-links $S$ which realize this table entry.

Traub and Zenklusen show how to compute the table entry $T[C,B,\mathcal{T},\phi,\psi]$ from previously computed ones in polynomial time. At a high level, we will enumerate over all possible table entries whose solutions can be combined to yield a solution to $T[C,B,\mathcal{T},\phi,\psi]$. This is done by guessing a partition of $C$ into two neighboring cuts $C_1$ and $C_2$ from $\mathcal{C}'$ (notice that there are at most $|V(H)|$ choices for this partition), and choices of $B_1$ and $B_2$ which are compatible with each other and also respect the $(\alpha,C)$-thinness. In particular, we must have $\delta_{B_2}(C_1) = \delta_{B_1}(C_2)$ and $|\delta_{B_1 \cup B_2}(C)| \leq \alpha$. Finally, we must have $\mathcal{T}_i$, $\phi_i$ and $\psi_i$ interact in such a way as to yield a solution to $T[C,B,\mathcal{T},\phi,\psi]$ when their solutions are combined. 

Suppose $(C_1,B_1,\mathcal{T}_1,\phi_1,\psi_1)$ and $(C_2,B_2,\mathcal{T}_2,\phi_2,\psi_2)$ are table entries such that $C_1$ and $C_2$ are adjacent intervals, $\delta_{B_2}(C_1) = \delta_{B_1}(C_2)$ and $|\delta_{B_1 \cup B_2}(C_1 \cup C_2)| \leq \alpha$. Then the merger of these table entries is defined as $(C,B,\mathcal{T},\phi,\psi)$, where $C = (C_1 \cup C_2)$, $B = \delta_{B_1 \cup B_2}(C)$, and $\mathcal{T}$, $\phi$, and $\psi$ are defined as follows:

Consider the graph with vertex set $B_1 \cup B_2$ where $x$ and $y$ are adjacent if either: $x$ and $y$ are intersecting hyper-links, $x$ and $y$ are in $B_1$ and in the same set in the partition $\mathcal{T}_1$, or $x$ and $y$ are in $B_2$ and in the same set of the partition $\mathcal{T}_2$. Then $\mathcal{T}$ is the partition of $B_1 \cup B_2$ corresponding to the connected components of this graph. For any $T \in \mathcal{T}$ which by definition is equal to the union of some parts $(T^1_1, T^1_2, \ldots, T^1_{q_1})$ from $\mathcal{T}_1$ and some parts $(T^2_1, T^2_2, \ldots, T^2_{q_2})$ from $\mathcal{T}_2$, define $\phi(T) = \overline\lca(\phi_1(T^1_1), \ldots, \phi_1(T^1_{q_1}), \phi_2(T^2_1), \ldots, \phi_2(T^2_{q_2}))$. Finally, let $\psi(T) = 1$ if there exists some $T^i_j$ with $\psi_i(T_j) = 1$ and $\phi_i(T^i_j) = \phi(T)$. With this definition, if $S_1$ is a set of hyper-links which realizes $(C_1,B_1,\mathcal{T}_1,\phi_1,\psi_1)$ and $S_2$ realizes $(C_2,B_2,\mathcal{T}_2,\phi_2,\psi_2)$, then $S_1 \cup S_2$ realizes their merger.

Thus, to compute the optimal value for table entry $T[C,B,\mathcal{T},\phi,\psi]$, we can enumerate over pairs $(C_1,B_1,\mathcal{T}_1,\phi_1,\psi_1)$ and $(C_2,B_2,\mathcal{T}_2,\phi_2,\psi_2)$ whose merger is $(C,B,\mathcal{T},\phi,\psi)$. Let $\pi(C,B,\mathcal{T},\phi,\psi)$ denote the optimal value for this table entry. Then 
\begin{align*}
\pi(C,B,\mathcal{T},\phi,\psi) = \max \Bigg\{&\pi(Q_1) + \pi(Q_2) + c(B_1 \cap B_2) + \sum_{u \in U_{Q_1,Q_2}} \tilde c(\delta^-_{\vecF_0}(u)) :\\
&Q_1 = (C_1,B_1,\mathcal{T}_1,\phi_1,\psi_1), \text{ and } \\
&Q_2 = (C_2,B_2,\mathcal{T}_2,\phi_2,\psi_2) \text{ have merger } (C,B,\mathcal{T},\phi,\psi)\Bigg\},
\end{align*}
where $U_{Q_1,Q_2} = \{\phi_i(T) : T \in \mathcal{T}_i \text{ with } \psi_i(T) = 1 \text{ and } \phi_i(T) \in C_i, \text{ for } i = 1,2\} \setminus \{\bigcup_{T \in \mathcal{T}} \phi(T)\} $. Furthermore, the optimal solution $S$ to $(C,B,\mathcal{T},\phi,\psi)$ will be $S_1 \cup S_2$ where $S_1$ is the optimizer of $Q_1^*$ and $S_2$ is the optimizer of $Q_2^*$ for the best choice of $Q_1^*$ and $Q_2^*$ in the above maximization.

The overall solution to the problem will be found in some table entry with $C = V(H) \setminus r$. Since the table has polynomially many entries, and each can be filled in polynomial time, this proves \Cref{app:thm:DPthm}.

\section{A $(1.5+\varepsilon)$-approximation for $k$-SAG}\label{app:sec:1.5approx}
\subsection{The local search algorithm}
In this section, we show how to use the local search framework introduced in~\cite{TZ2022} to give a $(1.5 + \varepsilon)$-approximation algorithm for SRAP, in the case that $R = V(H)$. By the arguments in \Cref{app:sec:srap-reductions}, this yields a $(1.5+\varepsilon)$ approximation algorithm for $k$-SAG. See \Cref{app:algo:localsearch}.

\begin{algorithm}[h]
\caption{Local search algorithm for SRAP}
\textbf{Input:} A complete instance of SRAP with graph $G = (V,E \dot \cup L)$, ring $H = (R,E)$, and $c: L \to \mathbb{R}$. Also a constant $1 \geq \varepsilon > 0$.\\
\textbf{Output:} A solution $S \subseteq L$ with $c(S) \leq (1.5 + \varepsilon)\OPT$.\\~

\begin{enumerate}
\item Compute an arbitrary SRAP solution $S \subseteq L$. Construct witness sets $W_f$ for each $f \in S$ so that $\vecF := \bigcup_{f \in S} W_f$ is an $R$-special directed solution.
\item Let $\varepsilon' := \frac{\varepsilon/2}{1.5 + \varepsilon/2}$ and $\gamma := 2^{\lceil{1/\varepsilon'}\rceil}$.
\item For each $A \subseteq R$ where $|A| \leq \gamma$, compute the cheapest full component joining $A$ and denote the cost by $c_A$.
\item Create an instance of $\gamma$-restricted Hyper-SRAP on ring $H = (R,E)$ with hyper-links $\mathcal{L} = \{\ell_A : A \subseteq R, |A| \leq \gamma\}$. Set the cost of hyper-link $\ell_A$ to be $c_A$.
\item Let $\alpha := \lceil 8/\varepsilon \rceil$
\item\label{step:loc-search} Iterate the following as long as $\Phi(S)$ decreases in each iteration by at least a factor $(1 - \frac{\varepsilon}{12n})$:
\begin{enumerate}[(i)]
    \item Compute the $\alpha$-thin subset of hyper-links $Z \subseteq \mathcal{L}$ maximizing $\bar c(\drop_{\vecF}(Z)) - 1.5 \cdot c(Z)$, where $\vecF := \bigcup_{f \in S} W_f$. 
    \item Update the witness sets by replacing each $W_f$ with $W_f \setminus \drop_{\vecF}(Z)$.
    \item Update $S$ by adding all links of full components corresponding to the hyper-links in $Z$.
    \item\label{step:shorten-to-special} Shorten directed links in $\vecF$ to obtain an $R$-special solution. If $W_f = \emptyset$ for some $f \in S$, then remove $f$ from $S$.
\end{enumerate}
\item \textbf{Return} $S$.
\end{enumerate}
\label{app:algo:localsearch}
\end{algorithm}

The algorithm begins by computing an arbitrary SRAP solution $S$. For each link $f \in S$, we will construct a witness set $W_f$ which initially consists of two directed links. In each iteration of the algorithm, we add a collection of links to the solution along with their associated witness sets, and drop directed links from other witness sets. If the witness set of a link $f$ becomes empty then $f$ is removed from $S$. Throughout the algorithm, we maintain that $\vecF := \bigcup_{f \in S} W_f$ is a feasible directed solution. The algorithm terminates when there is no local move which substantially improves the potential function.

We now define how the initial witness sets are constructed from our arbitrary starting solution $S$. 
Suppose a link $f$ is in a full component $(U',S')$, where $U \subseteq V$ and $S' \subseteq S$. Consider the eulerian tour traversing each link in $S'$ exactly twice. This tour induces an ordering on the ring nodes in $U' \cap V(H)$, say $\{a_1, \ldots, a_k\}$. Suppose that $f \in S'$ is traversed on the Euler subpath from $a_{i}$ to $a_{i+1}$ and on the subpath from $a_j$ to $a_{j+1}$ (where we take $a_{k+1} := a_1$). Then the witness set $W_f$ will consist of the two directed links $(a_i, a_{i+1})$ and $(a_j,a_{j+1})$. 

Note that $\vecF := \bigcup_{f \in S} W_f$ is a feasible directed solution. We now update the directed links in the witness sets by iteratively shortening links as long as $\vecF$ remains feasible. By Theorem 2.6 in~\cite{TZ2022new}, at the end of this shortening process, $\vecF$ is an $R$-special directed solution. 

Now, we define a potential function $\Phi$ defined on a solution $S$ along with its witness sets.

$$\Phi(S) := \sum_{f: |W_f| = 1} c(f) + \frac{3}{2}\sum_{f: |W_f| = 2} c(f).$$

We also define a weight function for the directed links in witness sets. For a directed link $u$ in some $W_f$, we define $$\bar c(u) := \sum_{f: u \in W_f} \frac{c(f)}{|W_f|}.$$

Our algorithm then proceeds in step 6 by choosing an $\alpha$-thin collection of hyper-links maximizing $\bar c(\drop_{\vecF}(Z)) - 1.5 \cdot c(Z)$. The links in the full components corresponding to these hyper-links are added to our solution $S$, while the directed links in $\drop_{\vecF}(Z)$ are removed from witness sets. Finally, witness sets are constructed for the new undirected links which were added to the solution, and all directed links in witness sets are shortened so that $\vecF$ becomes an $R$-special solution once more. 

Notice that \Cref{app:thm:DPthm} implies that the maximization step in step 6 of \Cref{app:algo:localsearch} can be performed in polynomial time. The following lemma, whose proof is identical to Lemma 9.3 in~\cite{TZ2022new}, shows that the algorithm will terminate after polynomially many iterations. Let $\OPT$ denote the cost of the optimal augmentation. 

\begin{lemma}
\Cref{app:algo:localsearch} runs for at most $$\ln{\Big(\frac{1.5\cdot c(S)}{\OPT}\Big)} \cdot \frac{6|R|}{\varepsilon}$$ iterations where $S$ is the initial SRAP solution.
\end{lemma}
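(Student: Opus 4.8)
I would run the standard potential argument for this kind of local search (cf.\ Lemma~9.3 of \cite{TZ2022new}), showing that the potential $\Phi$ starts at $\tfrac32 c(S)$, never drops below $\OPT$, and shrinks by a fixed multiplicative factor $\bigl(1-\tfrac{\varepsilon}{6|R|}\bigr)$ in each iteration. Given these three facts, after $t$ iterations $\OPT\le\Phi\le\bigl(1-\tfrac{\varepsilon}{6|R|}\bigr)^t\cdot\tfrac32 c(S)$, so $\bigl(1-\tfrac{\varepsilon}{6|R|}\bigr)^t\ge\tfrac{2\OPT}{3c(S)}$, and taking logarithms together with $-\ln(1-x)\ge x$ yields $t\le\tfrac{6|R|}{\varepsilon}\ln\!\bigl(\tfrac{1.5\,c(S)}{\OPT}\bigr)$, which is the claim. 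The initial value is immediate: the starting witness sets are read off an Euler tour of each full component that traverses every link exactly twice, so $|W_f|=2$ for every link $f$ of the initial solution and hence $\Phi(S)=\tfrac32 c(S)$.

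For the lower bound, I would maintain as an invariant that $S$ is itself a feasible SRAP solution. This is stronger than the invariant that $\vecF=\bigcup_{f\in S}W_f$ is a feasible directed solution, but it follows from it: for every link $f$, the full component containing $f$ covers every dangerous ring-cut that any directed link of $W_f$ covers, because such a directed link is a (possibly shortened) Euler-tour arc $(a_i,a_{i+1})$ between two ring nodes $a_i,a_{i+1}$ joined by that component, and any cut it covers separates $a_i$ from $a_{i+1}$ and is therefore covered by the component (shortening an arc only shrinks its covered set). Hence the full components of $S$ cover every cut covered by $\vecF$; feasibility of $\vecF$ then gives feasibility of $S$, so $c(S)\ge\OPT$, and since every coefficient of $\Phi$ is at least $1$ we get $\Phi(S)\ge c(S)\ge\OPT$ throughout.

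For the per-iteration decrease, consider one move: an $\alpha$-thin family $Z$ of hyper-links is added and $D:=\drop_{\vecF}(Z)$ is removed from the witness sets. Adding the full components of $Z$ creates only new witness sets, each of size $2$, so $\Phi$ increases by exactly $\tfrac32 c(Z)$; and the subsequent re-shortening that restores the $R$-special structure of $\vecF$ changes neither $\Phi$, nor any $|W_f|$, nor the weights $\bar c$, since all three quantities depend only on the undirected links and the sizes of their witness sets. Removing $D$ decreases $\Phi$ by at least $\bar c(D)=\sum_{u\in D}\bar c(u)$: for each link $f$, the drop in $f$'s contribution to $\Phi$ caused by deleting the arcs of $W_f\cap D$ is at least $|W_f\cap D|\cdot c(f)/|W_f|$ (with equality unless $W_f$ is entirely removed and $|W_f|=2$), and summing $|W_f\cap D|\cdot c(f)/|W_f|$ over all $f$ equals $\bar c(D)$. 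Thus $\Delta\Phi\le\tfrac32 c(Z)-\bar c(D)$, and since the algorithm performs a move only when it substantially improves the potential, i.e.\ when $\bar c(D)-1.5\,c(Z)\ge\tfrac{\varepsilon}{6|R|}\,\Phi(S)$, we obtain $\Phi_{\mathrm{new}}\le\bigl(1-\tfrac{\varepsilon}{6|R|}\bigr)\Phi_{\mathrm{old}}$.

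The main obstacle is the bookkeeping of $\Phi$ across a single move, specifically checking that $\bar c$ is calibrated so that removing $D$ recoups at least $\bar c(D)$ of potential (the estimate $\Delta\Phi\le\tfrac32 c(Z)-\bar c(D)$ is tight except when an entire size-$2$ witness set is emptied), and that the re-shortening needed to keep $\vecF$ $R$-special is neutral for both $\Phi$ and $\bar c$. One also needs the feasibility invariant above, which rests on the (separately established) fact that adding $Z$ while dropping $\drop_{\vecF}(Z)$ preserves feasibility of the mixed solution; with these in hand the argument is a direct replay of Lemma~9.3 of \cite{TZ2022new}.
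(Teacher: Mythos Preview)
Your proposal is correct and follows exactly the approach the paper defers to (Lemma~9.3 of \cite{TZ2022new}): the potential starts at $\tfrac32 c(S)$, is bounded below by $\OPT$ via feasibility of $S$, and drops by a fixed multiplicative factor each iteration, so taking logarithms gives the bound.

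Two minor remarks. First, the $\bar c(D)-1.5\,c(Z)$ bookkeeping you carry out is unnecessary for the iteration bound: the algorithm's stopping criterion in step~6 is stated \emph{directly} as a multiplicative decrease of $\Phi$, so you may simply quote it. Your rephrasing ``the algorithm performs a move only when $\bar c(D)-1.5\,c(Z)\ge\tfrac{\varepsilon}{6|R|}\Phi(S)$'' is not equivalent to the stated criterion (your inequality $\Phi_{\text{old}}-\Phi_{\text{new}}\ge \bar c(D)-1.5\,c(Z)$ only goes one way), but this is harmless since you only need the $\Phi$ decrease. Second, there is a constant mismatch between the algorithm's factor $(1-\tfrac{\varepsilon}{12n})$ and the claimed $\tfrac{6|R|}{\varepsilon}$ iterations; this appears to be an inconsistency in the paper rather than in your argument.
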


Finally, the solution returned has cost at most $(1.5+\varepsilon)\OPT$.

\begin{lemma}
    At the end of \Cref{app:algo:localsearch}, we have $c(S) \leq (1.5 + \varepsilon)\OPT$.
\end{lemma}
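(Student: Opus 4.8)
The plan is to run the standard relative-greedy/local-search cost accounting, using the potential $\Phi$ as the bridge between the cost of the current solution $S$ and the quantity $\OPT$. First I would record two easy inequalities about $\Phi$. At every point in the algorithm, each link $f \in S$ has $|W_f| \in \{1,2\}$, so
\[
c(S) \;=\; \sum_{f : |W_f| = 1} c(f) + \sum_{f : |W_f| = 2} c(f) \;\le\; \Phi(S),
\]
and also $\bar c(\vecF) = \sum_{f} c(f) \cdot \tfrac{|W_f|}{|W_f|}$ telescopes so that $\Phi(S)$ can be rewritten as $c(S) + \tfrac12 \sum_{f : |W_f| = 2} c(f)$, i.e. $\Phi(S) - c(S) = \tfrac12 \sum_{f : |W_f|=2} c(f) \ge 0$. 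So it suffices to show that at termination $\Phi(S) \le (1.5+\varepsilon)\OPT$, since $c(S) \le \Phi(S)$.

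Next I would analyze how $\Phi$ changes in one local move, mimicking Lemma 9.4--9.5 of \cite{TZ2022new}. When the algorithm adds the full components of an $\alpha$-thin collection $Z$ and drops $\drop_{\vecF}(Z)$ from witness sets: each directed link $u \in \drop_{\vecF}(Z)$ leaves some witness set $W_f$, which either drops $|W_f|$ from $2$ to $1$ (decreasing $\Phi$ by $\tfrac12 c(f)$, which is at least $\bar c(u) = \tfrac12 c(f)$ when $|W_f| = 2$) or from $1$ to $0$, removing $f$ entirely (decreasing $\Phi$ by $c(f) = \bar c(u)$). In all cases the decrease in $\Phi$ from dropping the directed links is at least $\bar c(\drop_{\vecF}(Z))$; meanwhile the newly added links increase $\Phi$ by at most $\tfrac32 c(Z)$ (each new link starts with $|W_f| = 2$). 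So one local move changes $\Phi$ by at most $-\bigl(\bar c(\drop_{\vecF}(Z)) - \tfrac32 c(Z)\bigr)$, which is exactly the quantity maximized in step 6. Hence as long as there is a move with $\bar c(\drop_{\vecF}(Z)) - \tfrac32 c(Z) > \tfrac{\varepsilon}{6|R|}\Phi(S)$ (the termination threshold), $\Phi$ strictly decreases; at termination, no such move exists.

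The main obstacle — and the heart of the argument — is the lower bound on the best local move in terms of $\Phi(S) - \OPT$: I must show that if $\Phi(S)$ is still significantly larger than $\OPT$, then some $\alpha$-thin collection $Z$ achieves $\bar c(\drop_{\vecF}(Z)) - \tfrac32 c(Z)$ bounded below by a positive multiple of $\Phi(S) - \OPT$. This is where the decomposition theorem (\Cref{thm:decomposition}) enters: apply it to $\vecF_0 := \vecF$ (the current $R$-special directed solution) and $S^* := \OPT_\gamma$, the optimal $\gamma$-restricted solution, obtaining a partition $\mathcal{Z}$ of $\OPT_\gamma$ into $\alpha$-thin parts and a cheap exceptional set $Q \subseteq \vecF$ with $c(Q) \le \tfrac{\varepsilon'}{\cdots} c(\vecF)$ such that $\vecF \setminus Q \subseteq \bigcup_{Z \in \mathcal{Z}} \drop_{\vecF}(Z)$. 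Summing over $Z \in \mathcal{Z}$ and using that the $\drop$ sets cover almost all of $\vecF$, an averaging argument gives some $Z \in \mathcal{Z}$ with $\bar c(\drop_{\vecF}(Z)) - \tfrac32 c(Z)$ at least (roughly) $\tfrac{1}{|\mathcal{Z}|}\bigl(\bar c(\vecF) - \bar c(Q) - \tfrac32 c(\OPT_\gamma)\bigr)$; one then relates $\bar c(\vecF)$ to $\Phi(S)$ via $\Phi(S) = c(S) + \tfrac12 \sum_{|W_f|=2} c(f) \le \bar c(\vecF) + \tfrac12 c(S) \le \tfrac32 \bar c(\vecF)$ — more carefully, $\bar c(\vecF) \ge \tfrac23 \Phi(S)$ is what is needed, the factor $\tfrac32$ being exactly why the coefficient in $\Phi$ on doubly-covered links is $\tfrac32$ — so a positive move exists whenever $\Phi(S) > (1.5 + O(\varepsilon))\OPT_\gamma$. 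Combining with $\OPT_\gamma \le (1+\varepsilon')\OPT$ from \Cref{lem:gamma_restriction} and absorbing the small error from $Q$ and the threshold into $\varepsilon$, we conclude that at termination $\Phi(S) \le (1.5 + \varepsilon)\OPT$, hence $c(S) \le \Phi(S) \le (1.5+\varepsilon)\OPT$ as claimed. I expect the delicate points to be (i) tracking the $\bar c$ versus $c$ bookkeeping through the witness-set updates so that the $\tfrac32$ coefficients line up, and (ii) checking that dropping a directed link $u$ from a size-$2$ witness set really only decreases $\Phi$ (never increases it) and that the shortening step in part (iv) of step 6, which re-optimizes $\vecF$ to be $R$-special, does not increase any $\bar c$ values in a way that breaks the potential argument — this last point is exactly where $V(H) = R$ is used.
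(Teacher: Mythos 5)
The paper itself does not write out a proof of this lemma; it states only that ``The proof of the above lemma uses the same potential function analysis which has been used in \cite{RZZ2022}\cite{TZ2022new}\cite{TZ2022}.'' Your outline correctly follows that blueprint (potential function $\Phi$, witness-set drop accounting, decomposition theorem applied with $\vecF_0 = \vecF$ and $\OPT_\gamma$, averaging, termination threshold, and the $\OPT_\gamma \le (1+\varepsilon')\OPT$ adjustment), so the high-level structure is right. However, there is a concrete quantitative error in the final step.

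You set out to prove $\Phi(S) \le (1.5+\varepsilon)\OPT$ and use $c(S) \le \Phi(S)$, and you assert that ``a positive move exists whenever $\Phi(S) > (1.5 + O(\varepsilon))\OPT_\gamma$.'' That assertion is false, and the gap is a factor of exactly $\tfrac{3}{2}$. The averaging argument shows that a profitable $\alpha$-thin $Z$ exists (up to the $Q$- and threshold-slop) whenever $\bar c(\vecF) > \tfrac{3}{2}\,c(\OPT_\gamma)$. The bridge from $\bar c(\vecF)$ to the potential is $\bar c(\vecF) \ge \tfrac{2}{3}\Phi(S)$, so the hypothesis $\Phi(S) > (1.5)\,\OPT_\gamma$ only yields $\bar c(\vecF) > \OPT_\gamma$, which is not enough to guarantee a positive move; you would need $\Phi(S) > \tfrac{9}{4}\,\OPT_\gamma$. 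Running your argument to its logical end therefore gives $\Phi(S) \le (2.25+O(\varepsilon))\OPT$, which through $c(S) \le \Phi(S)$ is too weak. The correct endpoint of the chain is the exact identity you mention only in passing: since each $f \in S$ contributes $|W_f| \cdot \frac{c(f)}{|W_f|}$ to $\bar c(\vecF)$, we have $\bar c(\vecF) = c(S)$, with no slack. Stop there: the termination condition plus the decomposition theorem plus averaging give
\[
c(S) \;=\; \bar c(\vecF) \;\le\; \frac{\tfrac{3}{2}\,c(\OPT_\gamma)}{1 - O(\varepsilon)} \;\le\; (1.5+\varepsilon)\OPT,
\]
and there is no need (and indeed no room) to pass through a bound on $\Phi$. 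In short, $\Phi$ is only used for the \emph{progress} argument (the iteration count and the termination test); the final cost bound must be stated for $\bar c(\vecF) = c(S)$. Rewriting your last paragraph with $\bar c(\vecF) = c(S)$ as the quantity being bounded, and deleting the claim ``positive move whenever $\Phi(S) > 1.5\,\OPT_\gamma$,'' closes the gap. One further small point to verify when filling in details: when a single $W_f$ loses both of its directed links in one step, $\Phi$ drops by $\tfrac{3}{2}c(f)$ while the associated $\bar c$-mass is only $c(f)$; this inequality goes in the favorable direction (the actual decrease in $\Phi$ is \emph{at least} $\bar c(\drop_{\vecF}(Z)) - \tfrac{3}{2}c(Z)$), so the termination reasoning still yields the needed upper bound on $\max_Z \bigl(\bar c(\drop_{\vecF}(Z)) - \tfrac{3}{2}c(Z)\bigr)$.
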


The proof of the above lemma uses the same potential function analysis which is has been used in~\cite{RZZ2022}\cite{TZ2022new}\cite{TZ2022}, so we do not reproduce it again here. These yield the main result of this section:

\begin{theorem}
\Cref{app:algo:localsearch} is a $(1.5+\varepsilon)$-approximation algorithm for SRAP when $R = V(H)$.
\end{theorem}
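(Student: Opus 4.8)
The plan is to assemble three ingredients that have already been developed: the polynomial per-iteration bound from \Cref{thm:DPthm}, the bound on the number of iterations, and the potential-function guarantee $c(S)\le(1.5+\varepsilon)\OPT$ at termination. First I would fix the setup: start \Cref{algo:localsearch} from a feasible SRAP solution $S_0$ of cost at most $2\OPT$ (e.g.\ Jain's iterative rounding~\cite{J2001}), and reduce once and for all to an equivalent instance of $\gamma$-restricted Hyper-SRAP via \Cref{lem:gamma_restriction} and \Cref{thm:Steiner-tree-FPT}, losing only a $(1+\varepsilon)$ factor; this guarantees all hyper-links have size at most $\gamma$, so there are polynomially many of them and \Cref{thm:DPthm} is applicable in each local move.

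Next I would check that feasibility is an invariant. The witness sets are constructed so that $\vecF:=\bigcup_{f\in S}W_f$ is a feasible directed solution after the initial Euler-tour construction, and by Theorem~2.6 of~\cite{TZ2022new} the subsequent shortening makes $\vecF$ an $R$-special directed solution — here is exactly where $R=V(H)$ is used, since otherwise a feasible directed solution need not admit such a shortening (see the remark and \Cref{fig:cant-make-special}). A local move adds full components (which only increases coverage) and removes the directed links $\drop_{\vecF}(Z)$ from the witness sets; by the definition of $\drop$ the mixed solution remains feasible, and re-shortening restores an $R$-special $\vecF$. Since every dangerous ring-cut entered by $\vecF$ is covered by the full components of the current $S$, \Cref{lem:feasible-iff-covered} shows the returned $S$ is feasible.

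For the running time, the iteration-count lemma bounds the number of iterations by $\ln\!\big(1.5\,c(S_0)/\OPT\big)\cdot 6|R|/\varepsilon = O(|R|/\varepsilon)$ since $c(S_0)\le 2\OPT$; and by \Cref{thm:DPthm}, applied with $\tilde c = \bar c$, each iteration's maximization of $\bar c(\drop_{\vecF}(Z)) - 1.5\,c(Z)$ over $\alpha$-thin $\gamma$-restricted hyper-link sets runs in polynomial time, while rebuilding witness sets and shortening links to $R$-special form are clearly polynomial. Hence \Cref{algo:localsearch} runs in polynomial time.

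Finally, for the approximation ratio I would invoke the packaged cost lemma $c(S)\le(1.5+\varepsilon)\OPT$. Its proof is the technical heart and the main obstacle: at termination no move improves the potential by more than the threshold, so one applies the Decomposition Theorem (\Cref{thm:decomposition}) to the optimum (as a $\gamma$-restricted hyper-link solution) and the current $R$-special $\vecF$, obtaining a partition of $\OPT$ into $\alpha$-thin pieces and a cheap set $Q\subseteq\vecF$ whose complement lies in the union of the pieces' $\drop$ sets; summing the non-improvement inequalities over the pieces and using the identities $\bar c(\vecF)=c(S)$ and $c(S)\le\Phi(S)\le\tfrac32 c(S)$ telescopes to $c(S)\le(1.5+\varepsilon)\OPT$. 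Tracking how $\Phi$ and $\bar c$ change under a single move and matching this telescoping to the decomposition is delicate, but it is the standard Traub--Zenklusen local-search analysis already carried out in~\cite{TZ2022},~\cite{TZ2022new}, and~\cite{RZZ2022}, and the genuinely new inputs it relies on --- the $R$-special initial solution (\Cref{thm:2approx}), the hyper-link decomposition theorem (\Cref{thm:decomposition}), and the dynamic program (\Cref{thm:DPthm}) --- have all been established earlier. Combining the feasibility invariant, the polynomial running time, and this cost bound yields the theorem.
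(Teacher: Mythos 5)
Your proposal is correct and follows essentially the same route as the paper: both assemble the iteration-count lemma, the potential-function cost bound $c(S)\le(1.5+\varepsilon)\OPT$ (which the paper, like you, defers to the standard Traub--Zenklusen local-search analysis from~\cite{TZ2022},~\cite{TZ2022new},~\cite{RZZ2022}), and the polynomial per-iteration bound from \Cref{thm:DPthm}. Your writeup is a bit more explicit about the feasibility invariant and correctly pinpoints that $R=V(H)$ is what makes the re-shortening to an $R$-special solution possible; the only small imprecision is that to match the objective $\bar c(\drop_{\vecF}(Z))-1.5\,c(Z)$ you should invoke \Cref{thm:DPthm} with $\tilde c=\tfrac{2}{3}\bar c$ (or equivalently scale the hyper-link costs), not $\tilde c=\bar c$.
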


\begin{corollary}
    There is a polynomial time $(1.5+\varepsilon)$-approximation algorithm for $k$-SAG.
\end{corollary}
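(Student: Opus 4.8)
The plan is to obtain the corollary immediately by chaining the two preceding results. The theorem established for \Cref{algo:localsearch} gives a polynomial time $(1.5+\varepsilon)$-approximation algorithm for SRAP in the special case $R = V(H)$, and the second half of \Cref{lem:redtoSRAP} asserts that any $\alpha$-approximation for SRAP restricted to instances with $R = V(H)$ yields an $\alpha$-approximation for $k$-SAG. Instantiating $\alpha = 1.5 + \varepsilon$ therefore gives the claim; it remains only to spell out that the reduction is polynomial time and cost-preserving.

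Concretely, I would recall the reduction from \cref{sec:srap-reductions}: starting from a $k$-SAG instance with $k$-edge-connected graph $H = (R,E)$ embedded in $G = (V, E \ducup L)$, apply \Cref{thm:cactus-rep} to replace $H$ by a cactus $\widehat{H}$ whose $2$-cuts are in correspondence with the min-cuts of $H$, routing each link incident to $v$ to the node $\phi(v)$. Since $V(H) = R$, every node of $\widehat{H}$ has nonempty $\phi$-preimage and hence is treated as a terminal, so all cactus nodes lie in $R'$. Then, adding zero-cost links (Theorem~4 of~\cite{galvez2021cycle}), unfold the cactus into a cycle, producing an SRAP instance with $R = V(H)$. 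Each of these steps runs in polynomial time and leaves the family of dangerous cuts to be covered unchanged up to the stated correspondence, so feasible solutions transfer in both directions with identical cost.

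Thus, running the $(1.5+\varepsilon)$-approximation algorithm for SRAP (when $R = V(H)$) on the resulting cycle instance and translating the returned link set back through the unfolding and the cactus map yields a feasible $k$-SAG solution whose cost is at most $(1.5+\varepsilon)$ times the optimum.

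There is no genuine obstacle remaining at this point: the substantive work lies entirely in the potential-function analysis behind \Cref{algo:localsearch} and in the cactus-to-cycle reduction of \Cref{lem:redtoSRAP}, both of which are already available. The only point meriting a sentence of care is the observation that the cactus representation preserves the property that every ring node is a terminal, which is automatic since $V(H) = R$ forces each cactus node's preimage to meet $R$.
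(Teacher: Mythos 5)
Your proposal is correct and matches the paper's intent exactly: the corollary is obtained by combining the theorem that \Cref{algo:localsearch} is a $(1.5+\varepsilon)$-approximation for SRAP with $R = V(H)$ together with the second half of \Cref{lem:redtoSRAP}, whose reduction (cactus representation via \Cref{thm:cactus-rep}, then unfolding to a cycle via zero-cost links) you correctly recall. The paper treats this as immediate and does not spell out the reduction again, but your fuller account, including the observation that $V(H) = R$ forces every cactus node to be a terminal, is accurate and adds no gap.
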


\subsection{Necessity of $R = V(H)$}

We remark that in part \ref{step:shorten-to-special} of step \ref{step:loc-search} of \Cref{app:algo:localsearch}, it is crucial that $V(H) = R$. In the case that $R = V(H)$, any feasible directed SRAP solution can be iteratively shortened to obtain an $R$-special solution of at most the cost. This follows from Theorem 2.6 in~\cite{TZ2022new}.

However, this does not hold if $R \neq V(H)$ (see \Cref{app:fig:cant-make-special}). We are only able to obtain an initial 2-approximate $R$-special solution in~\Cref{thm:2approx} because we are initially given a feasible directed solution with a good deal of structure: namely one which consists of a collection of directed cycles on the nodes of the ring. In order to improve the approximation for general SRAP from $1+\ln(2) + \varepsilon$ to $1.5+\varepsilon$, one possibility is to argue that any directed solution which arises from the collection of witness sets during the course of the algorithm can be made $R$-special. We leave this for future work. 

\begin{figure}[h!]
    \centering

    \includegraphics[scale = 0.7]{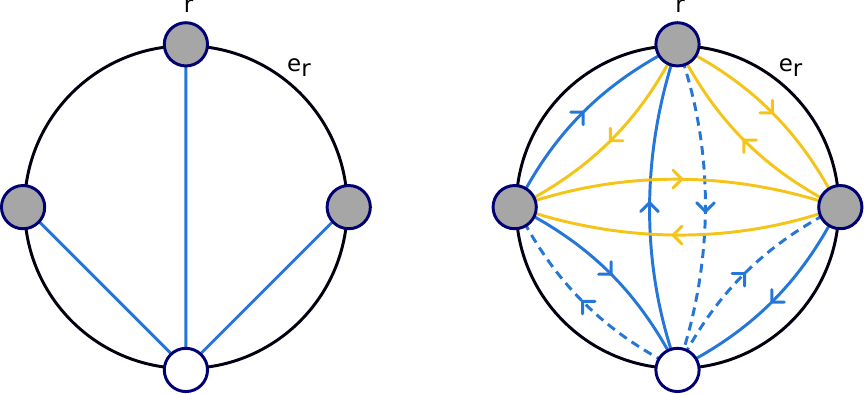}
    \caption{An example of a SRAP instance on the left, where all three undirected links have cost 1. The grey nodes are terminals. The completed instance is shown on the right, where the blue links have cost 1 and the orange links have cost 2 (only the directed links are shown). The dashed blue arcs form a feasible directed solution of cost 3, but there is no $R$-special solution of cost at most 3.}
    \label{app:fig:cant-make-special}
\end{figure}

\bibliography{bib2doifinal}

\end{document}